\def\isnotanon{1}
\newcommand{\citeextended}{\ifdefined\isnotextended~\cite{Lycklama2024-artemisfull}\fi\xspace}
        \renewenvironment{figure}[1][]{%
          \begingroup
            \def\@captype{figure}%
            \begin{minipage}{0pt}%
        }{%
            \end{minipage}%
          \endgroup}
        \renewenvironment{figure*}[1][]{%
          \begingroup
            \def\@captype{figure}%
            \begin{minipage}{0pt}%
        }{%
            \end{minipage}%
          \endgroup}
        \let\origcaption\caption      %
        \renewcommand*\caption[2][]{%
          \phantomcaption             %
          \origcaption*[{##1}]{}       %
        }
        \renewcommand*{\includegraphics}[2][]{\relax}       %
   \let\xxwrite\write
   \protected\def\write{\immediate\xxwrite}%
   {\tiny XX\BODY XX}}
\newcommand{\sortfirst}[1]{}
\newlist{todolist}{itemize}{2}
\setlist[todolist]{label=$\square$}
\newcommand{\ldef}[1]{\label{def:#1}}
\newcommand{\crefnames}[3]{%
  \@for\next:=#1\do{%
    \expandafter\crefname\expandafter{\next}{#2}{#3}%
  }%
}
\setlist[itemize]{noitemsep, topsep=0pt, leftmargin=14pt}
\setlist[enumerate]{noitemsep, topsep=0pt, leftmargin=14pt}
\newlist{algos}{itemize}{2}
\setlist[algos]{align=left,itemsep=2pt,left=0pt,label=•}
\definecolor{Gray}{gray}{0.65}
\definecolor{LightGray}{gray}{0.9}
\newmdenv[
  backgroundcolor=white,
  linecolor=black,
  linewidth=1pt,
  roundcorner=5pt,
  innerleftmargin=10pt,
  innerrightmargin=10pt,
  innertopmargin=10pt,
  innerbottommargin=10pt
]{protocolbox}
\newcommand{\todoCameraReady}[1]{\textcolor{green}{\small{TODO Camera Ready: #1}}}
\renewcommand{\todoCameraReady}[1]{}
\newcommand{\todoExtendedVersion}[1]{\textcolor{green}{\small{TODO Extended Version: #1}}}
\renewcommand{\todoExtendedVersion}[1]{}
\newcommand{\artemisBlue}[1]{\blue{#1}}
\newcommand{\blue}[1]{{\color{blue}{#1}}}
\newcommand{\shorten}[1]{}
\newcommand{\lsec}[1]{\label{sec:#1}}
\newcommand{\lfig}[1]{\label{fig:#1}}
\newcommand{\figcaptionvspace}{\vspace{0em}} %
\newcommand{\subsecspacingtop}{\vspace{-3pt}}
\newcommand{\subsecspacingbot}{\vspace{-2pt}}
\newcommand{\oursystem}{Artemis\xspace} %
\newcommand{\ourlunar}{Apollo\xspace}
\newcommand{\fakeparagraph}[1]{\vskip 5pt\noindent\textbf{#1. }}
\newcommand{\fakeparagraph}[1]{\vskip 12pt\noindent\textbf{#1. }}
\newcommand{\fakeparagraphnovskip}[1]{\noindent\textbf{#1. }}
\newcommand{\protocolparagraph}[1]{\vskip 12pt\noindent\textbf{#1. }}
\newcommand{\functionparagraph}[1]{\vskip 12pt\noindent\textbf{#1: }}
\newcommand{\githuburl}{https://github.com/pps-lab/artemis}
\theoremstyle{definition}
\newtheorem{definition}{Definition}[section]
\newtheorem{theorem}[definition]{Theorem}
\newtheorem{lemma}[definition]{Lemma}
\newcommand{\sNumSamples}{d\xspace}
\newcommand{\sfield}{\ensuremath{\mathbb{F}}\xspace}
\newcommand{\sgroup}{\ensuremath{\mathbb{G}}\xspace}
\newcommand{\sgenerator}{\ensuremath{h}\xspace}
\newcommand{\sInstance}{\ensuremath{x}\xspace}
\newcommand{\sEvalPoint}{\ensuremath{\beta}\xspace}
\newcommand{\sEvalValue}{\ensuremath{\rho}\xspace}
\newcommand{\extr}{\ensuremath{\mathcal{E}}\xspace}
\newcommand{\extCompiler}{\ensuremath{\extr_{\oursystem}}\xspace}
\newcommand{\extPIOP}{\ensuremath{\extr_{\IOPScheme}}\xspace}
\newcommand{\advPIOP}{\ensuremath{\adv_{\IOPScheme}}\xspace}
\newcommand{\extPC}{\ensuremath{\extr_{\PCScheme}}\xspace}
\newcommand{\extZKPC}{\ensuremath{\extr_{\ZKPCScheme}}\xspace}
\newcommand{\advPC}{\ensuremath{\adv_{\PCScheme}}\xspace}
\newcommand{\advZKPC}{\ensuremath{\adv_{\ZKPCScheme}}\xspace}
\newcommand{\simu}{\ensuremath{\mathcal{S}}\xspace}
\newcommand{\simZKPC}{\ensuremath{\simu_{\ZKPCScheme}}\xspace}
\newcommand{\simCompiler}{\ensuremath{\simu_{\ARTScheme}}\xspace}
\newcommand{\simPIOP}{\ensuremath{\simu_{\IOPScheme}}\xspace}
\newcommand{\simZKPCSetup}{\ensuremath{\simZKPC.\textsf{Setup}}\xspace}
\newcommand{\simZKPCCommit}{\ensuremath{\simZKPC.\textsf{Commit}}\xspace}
\newcommand{\simZKPCOpen}{\ensuremath{\simZKPC.\textsf{Open}}\xspace}
\newcommand{\simCompilerSetup}{\ensuremath{\simCompiler.\textsf{Setup}}\xspace}
\newcommand{\simCompilerProve}{\ensuremath{\simCompiler.\textsf{Prove}}\xspace}
\newcommand{\zkpcTrap}{\ensuremath{\hat{\textsf{trap}}}\xspace}
\newcommand{\negPC}{\ensuremath{\epsilon_{\PCScheme}}\xspace}
\newcommand{\negZKPC}{\ensuremath{\epsilon_{\ZKPCScheme}}\xspace}
\newcommand{\sWitness}{\ensuremath{w}\xspace}
\newcommand{\sWitnessExtra}{\ensuremath{v}\xspace}
\newcommand{\sRandomness}{\ensuremath{r}\xspace}
\newcommand{\sProtocolRandomness}{\ensuremath{\sMask}\xspace}
\newcommand{\sRandomnessProof}{\ensuremath{\hat{r}}\xspace}
\newcommand{\spolynomial}{\ensuremath{\mathbf{g}}\xspace}
\newcommand{\crs}{\ensuremath{\textsf{crs}}\xspace}
\newcommand{\ck}{\key[ck]\xspace}
\newcommand{\td}{\key[td]}
\newcommand{\ckp}{\ensuremath{\hat{\key[ck]}}\xspace}
\newcommand{\scomRandomness}{\ensuremath{r}\xspace}
\newcommand{\scomMessageSpace}{\ensuremath{\mathcal{M}}\xspace}
\newcommand{\scomRandomnessSpace}{\ensuremath{\mathcal{O}}\xspace}
\newcommand{\scomCommitmentSpace}{\ensuremath{\mathcal{C}}\xspace}
\newcommand{\IOPScheme}{\textsf{PIOP}\xspace}
\newcommand{\IOPI}{\textsf{I}\xspace}
\newcommand{\IOPProver}{\textsf{P}\xspace}
\newcommand{\IOPVerifier}{\textsf{V}\xspace}
\newcommand{\nQueries}{\textsf{t}\xspace}
\newcommand{\sPolies}{\textsf{s}\xspace}
\newcommand{\queryCircuit}{\ensuremath{\textsf{C}}\xspace}
\newcommand{\queryBound}{\ensuremath{\textsf{b}}\xspace}
\newcommand{\sQuery}{\ensuremath{z}\xspace}
\newcommand{\dMax}{\ensuremath{d_\text{max}}\xspace}
\newcommand{\pcDbound}{\ensuremath{d}\xspace}
\newcommand{\piCommitment}{\ensuremath{\pi_{\text{Link}}}\xspace}
\newcommand{\piInternal}{\ensuremath{\hat{\pi}_{\text{Link}}}\xspace}
\newcommand{\ARTScheme}{\textsf{ART}\xspace}
\newcommand{\ARTIndexer}{\ensuremath{\ARTScheme.\mathcal{I}}\xspace}
\newcommand{\ARTProver}{\ensuremath{\ARTScheme.\mathcal{P}}\xspace}
\newcommand{\ARTVerifier}{\ensuremath{\ARTScheme.\mathcal{V}}\xspace}
\newcommand{\pol}[1]{\ensuremath{\mathbf{#1}}\xspace}
\newcommand{\PCScheme}{\textsf{PC}\xspace}
\newcommand{\PCSetup}{\textsf{\PCScheme.Setup}\xspace}
\newcommand{\PCCommit}{\textsf{\PCScheme.Commit}\xspace}
\newcommand{\PCProve}{\textsf{\PCScheme.Open}\xspace}
\newcommand{\PCVerify}{\textsf{\PCScheme.Verify}\xspace}
\newcommand{\PCCheck}{\textsf{\PCScheme.Check}\xspace}
\newcommand{\PCProveB}{\textsf{\PCScheme.BatchOpen}\xspace}
\newcommand{\PCCheckB}{\textsf{\PCScheme.BatchCheck}\xspace}
\newcommand{\KZGScheme}{\ensuremath{\textsf{KZG}}\xspace}
\newcommand{\KZGCommit}{\textsf{\KZGScheme.Commit}\xspace}
\newcommand{\KZGProveB}{\textsf{\KZGScheme.BatchOpen}\xspace}
\newcommand{\KZGCheckB}{\textsf{\KZGScheme.BatchCheck}\xspace}
\newcommand{\ZKPCScheme}{\ensuremath{\hat{\textsf{PC}}}\xspace}
\newcommand{\ZKPCSetup}{\textsf{\ZKPCScheme.Setup}\xspace}
\newcommand{\ZKPCCommit}{\textsf{\ZKPCScheme.Commit}\xspace}
\newcommand{\ZKPCProve}{\textsf{\ZKPCScheme.Open}\xspace}
\newcommand{\ZKPCCheck}{\textsf{\ZKPCScheme.Check}\xspace}
\newcommand{\ZKPCProveB}{\textsf{\ZKPCScheme.BatchOpen}\xspace}
\newcommand{\halo}{\textsf{Halo}\xspace}
\newcommand{\setup}{\text{Setup}\xspace}
\newcommand{\ourcompiler}{\oursystem}
\newcommand{\zkSetup}{\textsf{Setup}\xspace}
\newcommand{\zkProve}{\textsf{Prove}\xspace}
\newcommand{\zkIndex}{\mathcal{I}\xspace}
\newcommand{\ipk}{\ensuremath{\textsf{ipk}}\xspace}
\newcommand{\ivk}{\ensuremath{\textsf{ivk}}\xspace}
\newcommand{\zkVerify}{\textsf{Verify}\xspace}
\newcommand{\sComLabel}{\textsf{Com}\xspace}
\newcommand{\sComSetup}{\textsf{Com.Setup}\xspace}
\newcommand{\sComCommit}{\textsf{Com.Commit}\xspace}
\newcommand{\sComVerify}{\textsf{Com.Verify}\xspace}
\newcommand{\sprover}{\ensuremath{P}\xspace}
\newcommand{\sverifier}{\ensuremath{V}\xspace}
\newcommand{\malverifier}{\ensuremath{\tilde{\mathcal{V}}}\xspace}
\newcommand{\scommitment}{\ensuremath{\text{com}}\xspace}
\newcommand{\scom}{\ensuremath{c}\xspace}
\newcommand{\scomCount}{\ensuremath{\ell}\xspace}
\newcommand{\scomOne}{\ensuremath{c_1}\xspace}
\newcommand{\scomList}{\ensuremath{\scomOne, \ldots, \scom_\scomCount}\xspace}
\newcommand{\srandList}{\ensuremath{\sRandomness_1, \ldots, \sRandomness_\scomCount}\xspace}
\newcommand{\scommittedWitness}{\ensuremath{\sWitness}\xspace}
\newcommand{\indexI}{\textsf{i}\xspace}
\newcommand{\scomProof}{\ensuremath{\hat{c}}\xspace}
\newcommand{\haloIndex}{\ensuremath{(\sfield, n, n_f, n_a, n_p, F, B, T, P_\sigma, \sigma)}\xspace}
\newcommand{\commitmentCombination}{\ensuremath{\scom_\sProtocolRandomness + \textstyle\sum_{i=1}^\ell \scom_{i} \cdot \alpha^{i}}\xspace}
\newcommand{\p}{\ensuremath{\pol{p}}\xspace}
\newcommand{\adviceMap}{\ensuremath{P_A}\xspace}
\newcommand{\nRounds}{\ensuremath{\textsf{k}}\xspace}
\newcommand{\dBound}{\textsf{d}\xspace}
\newcommand{\iopIndex}{\textsf{I}\xspace}
\newcommand{\pcchal}{\ensuremath{\xi}\xspace}
\newcommand{\evalpoints}{\ensuremath{Q}\xspace}
\newcommand{\evalvalues}{\ensuremath{y}\xspace}
\newcommand{\stP}{\ensuremath{\textsf{st}^\prime_\IOPProver}\xspace}
\newcommand{\apolloIndex}{\ensuremath{\textsf{AddAdviceColumns.}\mathcal{I}}}
\newcommand{\apolloW}{\ensuremath{\textsf{AddAdviceColumns.}W}}
\newcommand{\horner}{\textsf{Horner}\xspace}
\newcommand{\hornerIndex}{\ensuremath{\textsf{Horner.}\mathcal{I}}}
\newcommand{\hornerW}{\ensuremath{\textsf{Horner.}W}}
\newcommand{\hornerIndexAdvice}{\ensuremath{\mathsf{h}_a}\xspace}
\newcommand{\hornerIndexI}{\ensuremath{\mathsf{h}_i}\xspace}
\newcommand{\hornerIndexJ}{\ensuremath{\mathsf{h}_j}\xspace}
\newcommand{\hornerIndexCell}{\ensuremath{\mathsf{h}_\omega}\xspace}
\newcommand{\zkc}{\ensuremath{\gamma}\xspace}
\newcommand{\idxWit}{\ensuremath{\mathtt{idx}_\text{wit}}\xspace}
\newcommand{\idxRho}{\ensuremath{\mathtt{idx}_\rho}\xspace}
\newcommand{\idxMu}{\ensuremath{\mathtt{idx}_\mu}\xspace}
\newcommand{\Calpha}{\ensuremath{C_{\alpha}}\xspace}
\newcommand{\Cbeta}{\ensuremath{C_{\sEvalPoint}}\xspace}
\newcommand{\evalValues}{\ensuremath{\mathbf{v}}\xspace}
\newcommand{\changed}[1]{\blue{#1}}
\newcommand{\sRelation}{\ensuremath{\mathcal{R}}\xspace}
\newcommand{\sRelationCP}{\ensuremath{\sRelation^\sComLabel}\xspace}
\newcommand{\sIndexSet}{\ensuremath{I}\xspace}
\newcommand{\sIndexSetCom}{\ensuremath{\sIndexSet_\text{com}}\xspace}
\newcommand{\spolySingle}{\ensuremath{\mathbf{\scommittedWitness_i}}\xspace}
\newcommand{\spolySinglePlus}{\ensuremath{\mathbf{\scommittedWitness_{i+1}}}\xspace}
\newcommand{\spolyMask}{\ensuremath{\bm{\mu}}\xspace}
\newcommand{\cplink}{\ensuremath{\mathsf{CP}_{\text{link}}}\xspace}
\newcommand{\cplinkone}{\ensuremath{\mathsf{CP}_{\text{link}}^{(1)}}\xspace}
\newcommand{\cplinktwo}{\ensuremath{\mathsf{CP}_{\text{link}}^{(2)}}\xspace}
\newcommand{\numberCols}{\ensuremath{n_h}\xspace}
\newcommand{\polyAdvice}{\ensuremath{\mathbf{a}}\xspace}
\newcommand{\polyInstance}{\ensuremath{\mathbf{p}}\xspace}
\newcommand{\polyFixed}{\ensuremath{\mathbf{f}}\xspace}
\newcommand{\polyGate}{\ensuremath{\mathbf{b}}\xspace}
\newcommand{\polyHornerGate}{\polyGate_{\textsf{h}}\xspace}
\newcommand{\sMask}{\ensuremath{\mu}\xspace}
\renewcommand{\secpar}{\ensuremath{\lambda}}
\renewcommand{\secparam}{\ensuremath{1^\secpar}}
\newacronym{ml}{ML}{machine learning}
\newacronym{zkp}{ZKP}{zero-knowledge proof}
\newacronym{zkml}{zkML}{zero-knowledge machine learning}
\newacronym{poc}{PoC}{Proof-of-Consistency}
\newacronym{snark}{SNARK}{Succinct Non-Interactive Argument of Knowledge}
\newacronym{zksnark}{zk-SNARK}{zero-knowledge Succinct Non-Interactive Argument of Knowledge}
\newacronym{cpsnark}{CP-SNARK}{Commit-and-Prove SNARK}
\newacronym{ahp}{AHP}{Algebraic Holographic Proof}
\newacronym{iop}{IOP}{Interactive Oracle Proof}
\newacronym{piop}{PIOP}{Polynomial Interactive Oracle Proof}
\newacronym{qap}{QAP}{Quadratic Arithmetic Program}
\newglossaryentry{mnist}{name={MNIST},description={}}
\newglossaryentry{resnet18}{name={ResNet-18},description={}}
\newglossaryentry{dlrm}{name={DLRM},description={}}
\newglossaryentry{mobilenet}{name={MobileNet},description={}}
\newglossaryentry{vgg}{name={VGG-16},description={}}
\newglossaryentry{diffusion}{name={Diffusion},description={}}
\newglossaryentry{gpt2}{name={GPT-2},description={}}
\newglossaryentry{no_com}{name=\texttt{No Commitment},description={}}
\newglossaryentry{poly}{name=\texttt{\oursystem},description={}}
\newglossaryentry{cp_link}{name=\texttt{Lunar},description={}}
\newglossaryentry{cp_link+}{name=\texttt{\ourlunar},description={}}
\newglossaryentry{poseidon}{name=\texttt{Poseidon},description={}}
\newsavebox\ARelation
\titleformat*{\section}{\large\bfseries}
\titleformat*{\subsection}{\normalsize\bfseries}
\begin{document}

\title{
\Large \bf \oursystem: Efficient Commit-and-Prove SNARKs for zkML}

\ifdefined\isnotanon
\author{
{\rm Hidde Lycklama\textsuperscript{1*}, Alexander Viand\textsuperscript{2*}, Nikolay Avramov\textsuperscript{1}, Nicolas K\"uchler\textsuperscript{1}, Anwar Hithnawi\textsuperscript{3}}  
\\
\\
{\textsuperscript{1}\textit{ETH Zurich} \ \textsuperscript{2}\textit{Intel Labs} \ \textsuperscript{3}\textit{University of Toronto}}
\vspace{12pt}
\thanks{* These authors contributed equally to this work.}
}
\else
\author{Paper \#618, 13 pages + references + appendix}
\fi

\newcommand*{\affmark}[1][*]{\textsuperscript{#1}}

\date{}

\maketitle

\begin{abstract}
Ensuring that AI models are both verifiable and privacy-preserving is important for trust, accountability, and compliance.
To address these concerns, recent research has focused on developing \gls{zkml} techniques 
that enable the verification of various aspects of ML models without revealing sensitive information. 
However, while recent \gls{zkml} advances have made significant improvements to the efficiency of proving ML computations, they have largely overlooked the costly consistency checks on committed model parameters and input data, which have become a dominant performance bottleneck.
To address this gap, this paper introduces a new \gls{cpsnark} construction, \oursystem, that effectively addresses the emerging challenge of commitment verification in \gls{zkml} pipelines.
In contrast to existing approaches, \oursystem is compatible with any homomorphic polynomial commitment, including those without trusted setup.
We present the first implementation of this CP-SNARK, evaluate its performance on a diverse set of ML models, and show substantial improvements over existing methods, 
achieving significant reductions in prover costs and 
maintaining efficiency even for large-scale models. 
For example, for the VGG model, we reduce the overhead associated with commitment checks from 11.5x to 1.1x.
Our results indicate that Artemis provides a concrete step toward practical deployment of zkML, particularly in settings involving large-scale or complex models.

\end{abstract}
\glsresetall

\subsecspacingtop
\section{Introduction}\lsec{intro}
\subsecspacingbot
In recent years, the deployment of  
AI systems has become increasingly 
pervasive, with applications embedded 
in various domains, including personalized 
recommendations, health diagnostics, 
autonomous vehicles, and conversational 
agents such as ChatGPT. As these
technologies become deeply integrated into
critical infrastructure and everyday life, 
concerns about their privacy, 
accountability, and trustworthiness have
become increasingly pressing.
In response, there has been a push
to regulate AI, including efforts by governments to ensure that these technologies are used responsibly and
ethically~\cite{Birhane2024-iy,Biden2023EO,NSTC2016AI}.
At the same time, the research community has increasingly recognized that ensuring the integrity and correctness of ML models is crucial to maintaining trust in these systems, especially in high-stakes domains. %
This, in turn, has led to a wide range of research focused on developing transparent, verifiable, and auditable machine learning methods, targeting various stages of model development and deployment~\cite{PointwiseReliability2019,Lycklama2022-CamelMLSafety,Shokri2022-lq,Choi2023-zb,Lycklama2024-by}.

Much of the \gls{ml} verification and auditing research assumes access to models and their underlying data. 
However, this assumption is often infeasible, particularly in contexts involving sensitive data or where organizations are unwilling to share models for competitive reasons.
For example, fraud detection models must remain private to remain effective, yet are obvious candidates for regulation.
To address this challenge, some recent efforts have focused on leveraging cryptographic techniques to verify various properties of ML models without requiring direct access to data or models, thus preserving the privacy needed in these applications.
Specifically, many of these efforts leverage \glspl{zkp} to verify various aspects of the data and/or the model, also known as ``zkML''~\cite{Kang2022-zkml,Chen2024-ng,Liu2021-ts,Lee2020-vt,Weng2023-pvcnn}. 
\glsunset{zkml}

Applying zero-knowledge proofs to ML can present significant challenges due to the scalability issues inherent in ML. 
However, recent advances in \gls{zkml} have greatly improved its efficiency and scalability, with the most efficient approaches today leveraging advanced lookup features in modern proof systems to optimize the proving process~\cite{Kang2022-zkml,Chen2024-ng,Qu2025-zkgpt}. 
While in many applications we explicitly want the zero-knowledge property of ZKPs for preserving the confidentiality of sensitive model parameters, this introduces a fundamental challenge:
since the model is hidden, the verifier does not know which model the prover actually used.
As a result, a \gls{zkp} of correct inference is, by itself, not generally useful in practice, 
as it does not ensure the computation was performed using the intended model or that the model itself was not tampered with or replaced.
Thus, it is crucial to bind the proof to a specific model that has known properties or guarantees (for example, a model trained under certain conditions or certified to meet particular criteria).
In practice, this link is established through cryptographic commitments to the model and, as part of the \gls{zkp}, demonstrating that the model in the \gls{zkml} proof indeed matches the committed model.
The commitment anchors the proof to a specific model that can be independently audited or endorsed outside the inference proof.
For example, the model owner can publish a zero-knowledge proof of accuracy, fairness, or a variety of other properties linked to the same model commitment~\cite{Segal2020-tu,Kilbertus2018-zl,Shamsabadi2023-av,Shamsabadi2024-dpzkp}.

To date, the vast majority of research in zkML has focused primarily on enhancing the efficiency of proving \gls{ml} computations while largely neglecting the overhead associated with ensuring the consistency between the model and the model commitment~\cite{Kang2022-zkml}. 
However, as \gls{zkml} methods continue improving, the overhead associated with model commitments is becoming a significant 
bottleneck. 
Recent studies have observed that commitment-related operations can account for a substantial portion of the total 
overhead in inference pipelines~\cite{Kang2022-zkml,Feng2021-cl}. In fact, as we show in this work, for larger models, existing 
approaches to commitment consistency checks for  \gls{zkml}~\cite{EZKL2023,Kang2022-zkml,Feng2021-cl,Waiwitlikhit2024-ys} can dominate the overall verification time with, for some models, more than 90\% of the prover's time spent on these checks rather than on \gls{ml} computations.

\begin{figure}[t]
    \centering
    \includegraphics[width=\columnwidth]{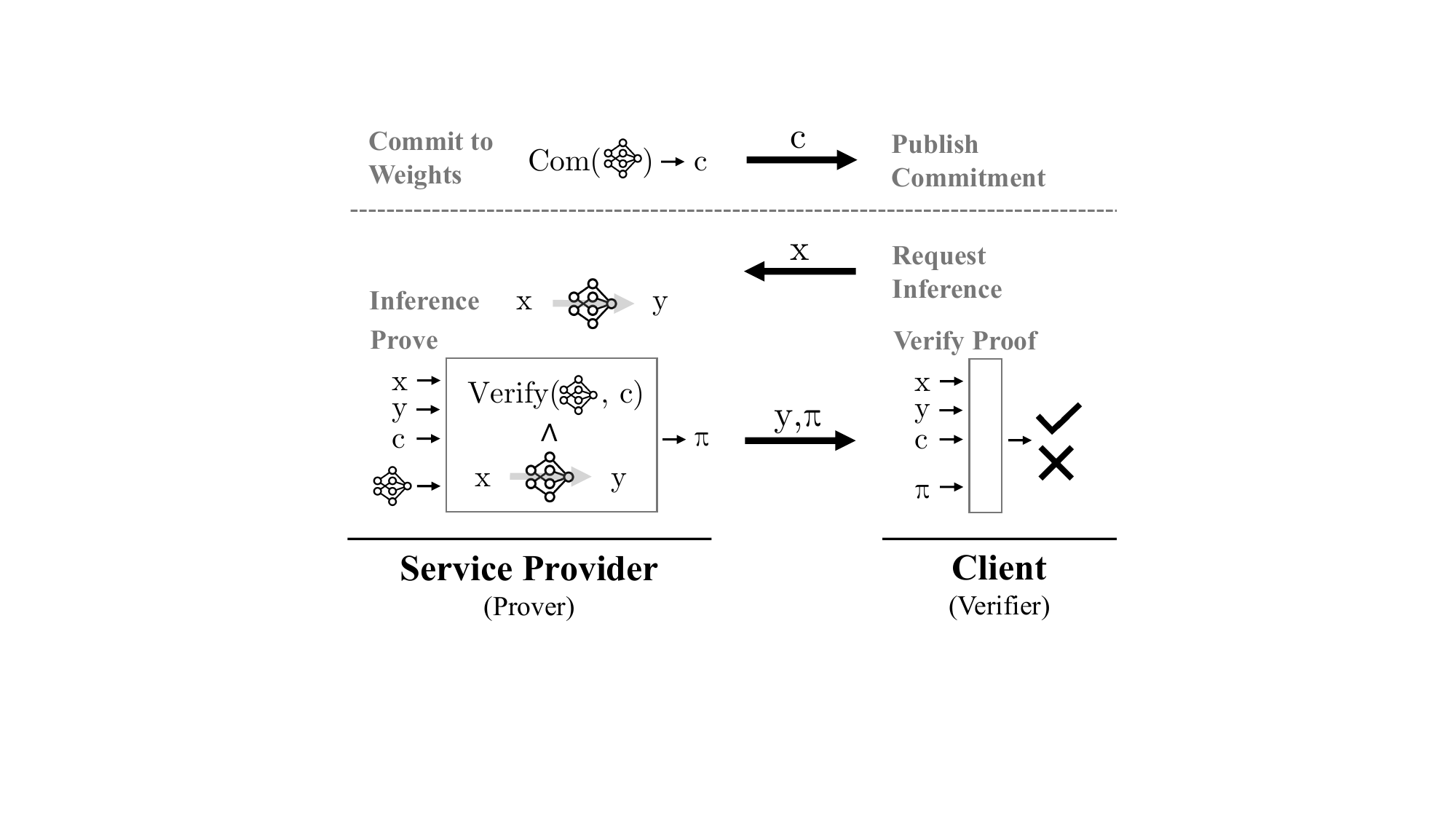}
    \caption{A \gls{zkml} pipeline. The Service Provider generates a proof of inference tied to a public model commitment.}
    \label{fig:system}
\end{figure}

\fakeparagraph{Commit-and-Prove SNARKs}
The need to efficiently verify that a part of a witness in a ZKP matches a value that was committed to earlier arises in many contexts beyond zkML. 
Similar patterns occur in applications such as anonymous credential systems, e-voting schemes, verifiable encryption protocols, and decentralized auditing systems~\cite{Aranha2021-ts}.
Multiple works formalize this as (zero-knowledge) \gls{cpsnark}, i.e., a (zk-)\gls{snark} that can also show that (a part of) the witness is consistent with an external commitment~\cite{Campanelli2019-as,Costello2015-Gepetto,Lipmaa2016-cpsnark}.
For certain types of \glspl{zkp}, such as Sigma Protocols and Bulletproofs~\cite{Bunz2018-mg}, 
expressing statements about values contained in commitments is an inherent part of the protocol.
As a result, these (zk-)\glspl{snark} either directly fulfill the (zk-)\gls{cpsnark} definition or can be trivially adapted to fulfill it~\cite{Chen2023-hyperplonk,Bunz2018-mg}.
However, for most generic (zk-)SNARKs, an explicit construction is required~\cite{Campanelli2019-as,Campanelli2021-yd,Aranha2021-ts}.
This can either take the form of re-computing a commitment inside the \gls{snark} (as is done in current \gls{zkml} works that consider commitments),
or it can take the form of an extension to the underlying proof system (as is the case in the LegoSNARK~\cite{Campanelli2019-as} line of work).
Note that, if zero-knowledge is not required (i.e., we do not need the commitment and/or proof to hide the externally committed value), significantly more efficient constructions might apply.
For example, EZKL~\cite{EZKL2023,ezkl_commit_blog} forgoes blinding values to achieve a (non-zk) CP-SNARK at virtually no overhead.
However, it is important to realize that, where zk is required, care must be taken not to leak information about the externally committed value, which can easily be overlooked\footnote{
Recent work proposes to directly apply a permutation argument between internal and external commitments to the witness~\cite{Datta2025-veritas},
however, this fails to achieve ZK as it leaks information about the external commitment,
which cannot be addressed with blinding values, as we might perform many different proofs against the same commitment.
}.
In the context of \gls{zkml}, we need both high \emph{efficiency and zero-knowledge} as, in most applications, the need for \gls{zkml} arises specifically because of the sensitive or proprietary nature of ML models.

\fakeparagraph{Contributions}
In this paper, we present \oursystem, a new (zero-knowledge) \acrlong{cpsnark},
which significantly improves upon the existing state-of-the-art in performance,
virtually eliminating the overhead of commitment verification for zkML applications. 
\oursystem  removes the need for expensive ``shifting'' proofs used by existing constructions by utilizing efficient arithmetization-based alignment (cf. \Cref{sec:design:ourlunar}).
\oursystem further replaces pairing-based linking proofs with an efficient polynomial evaluation argument (cf. \Cref{sec:design:oursystem}), enabling compatibility with any homomorphic polynomial commitment scheme and removing the dependency on trusted setup.
While \oursystem is a generic \gls{cpsnark}, we demonstrate its practical relevance through an extensive evaluation on diverse ML models, including \mbox{GPT-2}~\cite{Radford2019-gpt2}. 
    Our evaluation shows that \oursystem substantially outperforms existing \gls{cpsnark} approaches, 
    improving upon the state of the art by an order of magnitude
    while also removing the need for a trusted setup.

\subsecspacingtop
\section{Background}
\subsecspacingbot
\lsec{background}
We begin by defining the two core building blocks of our work, namely Polynomial Commitments and (Commit-and-Prove) SNARKs. 
We then outline the Plonkish Arithmetization framework, which underpins the proof systems used in state-of-the-art \gls{zkml}. 
These details will be relevant to understanding how we efficiently instantiate our construction for the Halo2 proof system~\cite{halo2book}.
\ifdefined\isnotextended
Due to space constraints, we refer to~\Cref*{apx:definitions} in the extended version of this work\citeextended for additional definitions.
\else
We refer to \Cref{apx:definitions} for additional definitions.
\fi

\fakeparagraph{Notation}
We use the standard notation for bitstrings $\{ 0,1\}^*$,
groups (\sgenerator generates \sgroup) and fields $\sfield_p$ with order $p$.
We use bracket notation to denote ranges, e.g., $[n] = \{1, \ldots, n\}$, and symbols representing polynomials are displayed in \textbf{bold}.

\begin{definition}[Indexed Relation~\cite{Chiesa2020-marlin}]
An indexed relation \sRelation is a set of triples (\indexI, \sInstance, \sWitness) where
\indexI is the index, \sInstance is the instance, and \sWitness is the witness; the corresponding indexed language $\mathcal{L}(\sRelation)$ is the
set of pairs $(\indexI, \sInstance)$ for which there exists a witness \sWitness such that $(\indexI, \sInstance, \sWitness) \in \sRelation$. Given a security parameter $\lambda \in \mathbb{N}$, we
denote by $\sRelation_\lambda$ the restriction of \sRelation to triples $(\indexI, \sInstance, \sWitness) \in \sRelation$ with $|\indexI| \leq \lambda$. 
\end{definition}

The asymptotic security notions in the following are all quantified over \secpar-compatible relations $\sRelation_\secpar$, and we therefore sometimes use the simplified notation \sRelation instead.
Typically, \indexI describes an arithmetic circuit over a finite field, \sInstance denotes public inputs, and
\sWitness denotes private inputs, respectively.

\subsection{Polynomial Commitments}
\begin{definition}[Polynomial Commitments~\cite{Kate2010-px}]
    \ldef{pc}
   
     Polynomial commitments allow a prover to commit to a polynomial while retaining the ability to later reveal the 
     polynomial's value at any specific point, along with a proof that the revealed value is indeed correct. These commitments 
     are an important building block for constructing succinct proofs.
     A polynomial commitment scheme consists of a quintuple 
     (\PCSetup, \PCCommit, \PCVerify, \PCProve, \PCCheck), where:
    
    \begin{algos}
        \item $\PCSetup(1^\lambda, D) \rightarrow \ck$: prepares the public parameters given the security parameter $\lambda$, maximum degree bound $D$ and outputs a common reference string $\texttt{pp}$, which contains the description of a finite field $\sfield \in \mathcal{F}$.
        \item $\PCCommit(\ck, \spolynomial, \pcDbound, \sRandomness) \rightarrow c$: computes a commitment $c$ to a polynomial $\spolynomial$, using randomness $\sRandomness$ and degree bound $\pcDbound$ where $\text{deg}(\spolynomial) \leq d \leq D$.
        \item $\PCVerify(\ck, \scom, \pcDbound, \spolynomial, \sRandomness) \rightarrow \{ 0, 1 \}$: outputs $1$ if \scom is a commitment to \spolynomial with randomness \sRandomness and degree bound \pcDbound.
       \item $\PCProve(\ck, \spolynomial, \pcDbound, x, y, \sRandomness) \rightarrow \pi$: The prover computes a proof $\pi$ that $\spolynomial(x) = y$, using randomness $\sRandomness$ used to commit to \spolynomial in $\scom$ and degree bound \pcDbound.
       \item $\PCCheck(\ck, \scom, \pcDbound, x, y, \pi) \rightarrow \{ 0, 1 \}$: The verifier checks that $c$ commits to $\spolynomial$ such that $\spolynomial(x) = y$ with degree bound \pcDbound.
    \end{algos}
    \ifdefined\isnotextended
For efficiency, proof systems rely on batching of opening proofs~\cite{Chiesa2020-marlin}. In particular, a polynomial commitment scheme can open $n$ commitments on the evaluation set $\evalpoints = \{ (i,x) \in ([n] \times \sfield) \}$, using a randomly generated opening challenge $\pcchal$ by the verifier, with $\PCProveB(\ck, \spolynomial, \pcDbound, \evalpoints, \evalvalues, \pcchal, \sRandomness) \rightarrow \pi$ and $\PCCheckB(\ck, \scom, \pcDbound, \evalpoints, \evalvalues, \pi, \pcchal) \rightarrow \{ 0, 1 \}$.
    \else
    For efficiency, proof systems rely on batching of opening proofs~\cite{Chiesa2020-marlin}. In particular, a polynomial commitment scheme can open $n$ commitments on the evaluation set $\evalpoints = \{ (i,x) \in ([n] \times \sfield) \}$, using a randomly generated opening challenge $\pcchal$ by the verifier:
    \begin{algos}
        \item $\PCProveB(\ck, \spolynomial, \pcDbound, \evalpoints, \evalvalues, \pcchal, \sRandomness) \rightarrow \pi$: The prover computes a proof $\pi$ using query set \evalpoints, opening challenge \pcchal, degree bounds \pcDbound, and randomnesses $\sRandomness$ used to commit to the polynomials \spolynomial in $\scom$.
       \item $\PCCheckB(\ck, \scom, \pcDbound, \evalpoints, \evalvalues, \pi, \pcchal) \rightarrow \{ 0, 1 \}$: For proof $\pi$, degree bounds \pcDbound, and opening challenge $\pcchal$, \PCCheckB outputs $1$ iff each commitment $c_i$ commits to $\spolynomial_i$ such that $\spolynomial_i(x) = y$ for all $(i,x) \in \evalpoints$ and $y = \evalvalues_{i,x}$.
    \end{algos}

    \fi

\ifdefined\isnotextended
\else
\vspace{1em}
\fi
 \noindent A polynomial commitment scheme is secure if it provides completeness, hiding, and binding properties.
    Further, a polynomial commitment scheme can provide extractability, which states that there exists an extractor that can recover the committed polynomial from any evaluation proof, provided it has full access to the adversary's state.
    We refer to~\cite{Chiesa2020-marlin} for a formal definition of these properties.

\end{definition}

\subsection{\acrshort{zksnark}s}
A proof for an indexed relation \sRelation is a protocol between a prover \sprover and an efficient verifier \sverifier
by which \sprover convinces \sverifier that $\exists \sWitness : (\indexI, \sInstance, \sWitness) \in \sRelation$.
If the proof is a single message from \sprover to \sverifier, it is non-interactive and consists of four polynomial-time algorithms:
\begin{algos}
    \item $\zkSetup(1^\lambda, \sRelation) \rightarrow \crs$: Setup public parameters \crs for a relation \sRelation and security parameter $\secpar$.
    \item $\zkIndex(\indexI, \crs) \rightarrow (\ipk, \ivk)$: A deterministic indexer that takes index \indexI and \crs as input, and
    produces a proving index key (\ipk) and a verifier index key (\ivk).
    \item $\zkProve(\ipk, \sInstance, \sWitness) \rightarrow \pi$: If $(\indexI, \sInstance, \sWitness) \in \sRelation$, output a proof $\pi$.
    \item $\zkVerify(\ivk, x, \pi) \rightarrow \{0,1\}$: Verify proof $\pi$ for instance \sInstance and index \indexI.
\end{algos}
Proofs generally support a class of relations, for instance bounded size arithmetic circuits, including the size of
a relation $\abs{\sRelation}$.
A proof that satisfies completeness, knowledge soundness, and succinctness is a \acrfull{snark}.
If the proof also satisfies zero-knowledge, i.e., it does not reveal any other information than the statement being true, it is a \gls{zksnark}.
\ifdefined\isnotextended
We provide formal definitions of these properties in~\cref*{def:zk-snarks} in~\cref*{apx:definitions}\citeextended.
\else
We provide formal definitions of these properties in~\cref{def:zk-snarks} in~\cref{apx:definitions}.
\fi

\fakeparagraph{\acrfull{cpsnark}}
\glspl{cpsnark} are \glspl{snark} where the instance contains one or more commitments to parts of the witness~\cite{Campanelli2019-as,Campanelli2021-yd,Aranha2021-ts}.
In particular, the instance contains a set of commitments, i.e., $(\sInstance, \scomList)$,
to subsets of the witness $\sWitness$ with decommitments $\srandList$ where $\scom_i = \sComCommit(\scommittedWitness_i, \sRandomness_i)$ with $\scommittedWitness_i$ a subset of the witness $\sWitness$.

\begin{definition}[\glspl{cpsnark}~\cite{Campanelli2019-as}]
\label{def:cpsnark}
Let \sRelation be an indexed relation over $\mathcal{D}_\sInstance \times \mathcal{D}_{\sWitness}$
where $\mathcal{D}_{\sWitness}$ splits over $\ell + 1$ arbitrary domains $\mathcal{D}_1 \times \ldots \times \mathcal{D}_\ell \times \mathcal{D}_\sWitnessExtra$ for some arity parameter $\ell \geq 1$ represented by an $\ell$-element list \sIndexSetCom.
We denote the sub-witnesses $\scommittedWitness_1,\ldots,\scommittedWitness_\ell,\sWitness_\sWitnessExtra$ following this split.
Let $\text{Com} = (\sComSetup,\sComCommit,\sComVerify)$ be a commitment scheme (as per~\cref{def:commitments}) whose message space $\scomMessageSpace$ is such that $\mathcal{D}_i \subset \scomMessageSpace$ for all $i \in [\ell]$.
A \acrlong{cpsnark} for a relation \sRelation and a commitment scheme \text{Com} is a \gls{snark} for a relation $\sRelationCP$ such that:
\begin{equation*}
    \sRelationCP = \left\{
        \begin{array}{c}

          \left(
          (\indexI, \sIndexSetCom, \ck),
    (\sInstance, \scomList), (\sWitness, \srandList)
   \right)
    : \\
    \begin{array}{l}
    (\indexI, \sInstance, \sWitness) \in \sRelation \\
        \bigwedge_{j \in [\ell]} \sComVerify(\ck, \scom_j, \scommittedWitness_j, \scomRandomness_j) = 1
    \end{array}

    \end{array}
    \right\}
\end{equation*}

\noindent
\glspl{cpsnark} satisfy completeness, knowledge soundness, and succinctness properties similar to \glspl{snark}.
Similar to \glspl{zksnark}, we can also consider a zero-knowledge variant of \glspl{cpsnark}.
We refer to Campanelli et al.~\cite{Campanelli2019-as} for a formal definition of \gls{cpsnark} properties.

\end{definition}

\subsection{Arithmetization}
In the context of \glspl{snark} that express statements over computations, the computation is generally expressed as bounded-depth arithmetic circuits. 
As most \glspl{snark} internally rely on representing constraints as polynomials, \emph{arithmetization} acts as an intermediary between the (circuit) computation and the polynomial representation required by the underlying proof system. 
Specifically, arithmetization reduces statements about computations to algebraic statements involving polynomials of a bounded degree.
Some operations can be easily transformed into arithmetic operations, either because they are algebraic operations over a finite field
or because they can be easily adapted to algebraic operations.
However, more complex operations (e.g., comparisons or any higher-order function) are not as easily expressed in arithmetic circuits.
As a result, modern \glspl{snark} generally support more advanced arithmetization, such as lookups and custom gates that can help address this overhead.
This induces a complex design problem, where different approaches to arithmetizing the same computation can give rise to proofs with vastly different efficiency.
In the following, we focus on the Plonkish arithmetization that is used by many state-of-the-art proof systems, including Halo2~\cite{halo2book}.
Halo2 is \gls{zksnark} that builds upon the original Halo protocol~\cite{Bowe2019-ug} but combines it with Plonkish arithmetization to express functions or applications as circuits, as originally introduced by
Plonk~\cite{Gabizon2019-plonk}. 
Specifically, Halo2 relies on UltraPLONK's~\cite{ultraplonk} arithmetization, which adds support for custom gates and lookup arguments.

\begin{definition}[Plonkish Arithmetization~\cite{halo2book,ultraplonk}]
\label{plonkish_arithmetization}
Consider a grid comprised of $n$ rows (where $n = 2^k$ for some $k$) with $n_{f}$ \emph{fixed} columns, $n_{a}$ \emph{ advice} columns, and $n_p$ \emph{instance} columns.
Let $F_{i, j}  \in \mathbb{F}_p$ be the value in the $j$-th row of the $i$-th fixed column, and let $A_{i, j}$ and  $P_{i, j}$ be defined equivalently for advice and instance columns, respectively.
Let $\{\mathbf{f}_i(X)\}_{i \in n_{f}}$, $\{\mathbf{a}_i(X)\}_{i \in n_{a}}$, and $\{\mathbf{p}_i(X)\}_{i \in n_{p}}$ be the polynomials representing the fixed, advice, and instance columns, respectively, where
\begin{itemize}[leftmargin=10pt]
\setlength\itemsep{2pt}
    \item $\mathbf{f}_i(X)$ interpolates s.t. $\mathbf{f}_i(\omega^{j}) = F_{i, j}$ for $i \in [n_{f}], j \in [n]$;
    \item $\mathbf{a}_i(X)$ interpolates s.t. $\mathbf{a}_i(\omega^{j}) = A_{i, j}$ for $i \in [n_{a}], j \in [n]$;
    \item $\mathbf{p}_i(X)$ interpolates s.t. $\mathbf{p}_i(\omega^{j}) = P_{i, j}$ for $i \in [n_{p}], j \in [n]$.
\end{itemize}
for $\omega \in \mathbb{F}_p$ a $n = 2^k$ primitive root of unity.

Constraints for (custom) gates are expressed as multivariate polynomials $b_i$ in $n_f + n_a + n_i + 1$ indeterminates of degree at most $n-1$, for which we only consider their evaluation at points of the form:
\begin{equation*}
    \begin{aligned}
    b_i(X, & f_1(X), ..., f_{n_f}(X), a_1(X), ..., a_{n_a}(X), \\
    &p_1(X), ..., p_{n_p}(X), C_1, ..., C_{n_h}).
    \end{aligned}
\end{equation*} 
where $C_1, ..., C_{n_a}$ are $n_a$ (optional) random challenges sent by the verifier.
\ifdefined\isnotextended
To connect gates, values must typically be explicitly copied between cells.
Copy constraints enforce that multiple cells hold the same value using a bijective mapping $\sigma$ over a set of columns $P_\sigma$.
Lookup constraints allow enforcing that values in a column $A_i$ must appear in a designated lookup table column $T_j$, enabling compact encodings of complex computations.
We refer to the extended version for more details on copy constraints and lookups\citeextended.
\else

To connect gates, values must typically be explicitly copied between cells. 
Copy constraints enforce that multiple cells hold the same value by defining a bijective mapping, called a permutation, from a set of cells onto itself.
We express these constraints through a fixed permutation mapping, denoted by $\sigma$, which is defined over a specific set of columns $P_\sigma$ that participate in the permutation constraint.
The permutation is specified as:
$$\sigma(i,j) = (i^\prime, j^\prime) \text{ for all } i \in [P_\sigma], j \in [n]$$
where $(i, j)$ refers to the column and row indices, respectively.
Finally, lookup constraints allow enforcing that values in a column $A_i$ must appear in a designated lookup table column $T_j$, enabling compact encodings of complex computations.
\fi

\end{definition}

\ifdefined\isnotextended

\subsection{Halo2 Proof System}
Halo2 proofs show relations based on Plonkish Arithmetization.
The circuit polynomial $\mathbf{g}$ encodes the structure of the computation, including the grid columns, custom gates, lookup arguments, and permutation constraints.
The relation contains
witness advice polynomials $\mathbf{a}_1(X),\ldots,\mathbf{a}_{n_a}(X)$ that satisfy a set of instance polynomials $\mathbf{p}_1(X),\ldots,\mathbf{p}_{n_p}(X)$ such that $\pol{g}$, defined by the index, evaluates to zero across the entire evaluation domain.
A key feature of the system is that both witness polynomials and gate constraints can depend on verifier challenges as well as other advice polynomials, i.e.,
$\mathbf{a}_i(X, C_1, \ldots,C_{i-1}, \mathbf{a}_1(X), \ldots,\mathbf{a}_{i-1}(X))$.
This flexible structure allows gate constraints to be expressed more efficiently and is used to optimize computations such as lookup arguments and permutation constraints.

Halo2 is based on Plonk, which is a \gls{piop}~\cite{Gabizon2019-plonk,Chen2023-hyperplonk}.
A \gls{piop} for a relation \sRelation is an information-theoretic object that, together with a polynomial commitment scheme, can be compiled into an efficient argument of knowledge for \sRelation.
The choice of polynomial commitment scheme determines the \gls{snark}'s security assumptions.
Halo2 is typically compiled with either KZG or IPA commitments:
KZG commitments require a trusted setup to generate the public parameters, whereas IPA commitments rely on inner product arguments over elliptic curves and allow for a transparent setup,
eliminating the need for a trusted ceremony.
A closely related notion to \glspl{piop} is the Algebraic Holographic Proof (AHP), which is considered nearly equivalent~\cite{Chiesa2020-marlin,Aranha2021-ts}.
    In this work, we adopt the \gls{piop} definition from~\cite{Kohlweiss2023-uk} and use this term interchangeably.
We provide a formal definition of the Halo2 indexed relation and the Halo2 \gls{piop} in Appendix~\ref*{apx:definitions}~(\Cref*{def:piop}) in the extended version\citeextended.

\fi

\begin{myhideenv}

\fakeparagraph{Halo2 Proof System}
Halo2 proofs show relations based on Plonkish Arithmetization.
The circuit polynomial $\mathbf{g}$ encodes the structure of the computation, including the grid columns, custom gates, lookup arguments, and permutation constraints.
The relation contains a set of
witness advice polynomials $\mathbf{a}_1(X),\ldots,\mathbf{a}_{n_a}(X)$ that satisfy a set of instance polynomials $\mathbf{p}_1(X),\ldots,\mathbf{p}_{n_p}(X)$ such that $\pol{g}$, defined by the index, evaluates to zero across the entire evaluation domain.
A key feature of the system is that both witness polynomials and gate constraints can depend on verifier challenges as well as other advice polynomials, i.e.,
$\mathbf{a}_i(X, C_1, \ldots,C_{i-1}, \mathbf{a}_1(X), \ldots,\mathbf{a}_{i-1}(X))$.
This flexible structure allows gate constraints to be expressed more efficiently and is used to optimize computations such as lookup arguments and permutation constraints.

\begin{definition}[Halo2 Indexed Relation]
Let $n_g$ be a positive integer with $n_g \geq 4$, and
let $n_g, n_a$ be positive integers with $n_a < n$, and
let $F$ be the list of fixed columns $F_{0}, \ldots, F_{n_f-1}$ values,
let $B$ be the list of $n_b$ (custom) gate constraints $b_0,\ldots,b_{n_b}$,
let $P_\sigma$ be the list of columns participating in the permutation constraints, and let $\sigma$ be the fixed permutation,
and let $T$ be a list of tuples of $(\text{column}, \text{table})$ describing the columns constrained to a lookup table as in~\Cref{plonkish_arithmetization}.

\noindent
For all $\haloIndex$,
the Halo2 Indexed Relation $\sRelation$ is defined as
\begin{equation*}
\left\{
\begin{array}{cc}
\left(
\begin{array}{ll}
\haloIndex, \\
\left(
\mathbf{p}_1(X),\ldots,\mathbf{p}_{n_p}(X)
\right); \\
(
\mathbf{a}_1(X), \mathbf{a}_2(X, C_1, \mathbf{a}_1(X)), ..., \mathbf{a}_{n_a}(
\\X, C_1, ..., C_{n_a - 1}, \mathbf{a}_1(X), ..., \mathbf{a}_{n_a - 1}(X) )
)
\end{array}
\right) : \\
\\
\mathbf{g}\left(
\begin{array}{ll}
     \omega^i, C_1, ..., C_{n_a}, \mathbf{p}_1(X),\ldots,
     \mathbf{p}_{n_p}(X),\\
     \mathbf{a}_1(X), ...,\mathbf{a}_{n_a}(
     X, C_1, ..., C_{n_a}, \\
     \quad \mathbf{a}_1(X), ..., \mathbf{a}_{n_a - 1}(X))
\end{array}
\right) = 0 \\
\forall i \in [0, 2^k)
\end{array}
\right\}
\end{equation*}
\label{def:halo2indexedrelation}
where $\mathbf{a}_1, \mathbf{a}_2, ..., \mathbf{a}_{n_a}$ are (multivariate) advice polynomials and $\mathbf{p}_1, \mathbf{p}_2, ..., \mathbf{p}_{n_p}$ univariate instance polynomials with degree $n - 1$ in $X$ and $\mathbf{g}$ has degree $n_g(n - 1)$ at most in any indeterminates $X, C_1, C_2, \ldots C_{n_a}$.
The polynomial $\mathbf{g}$ is the combination of the (Plonkish) circuit relations in the index, consisting of the fixed gates, (custom) gate polynomials $B$, lookup arguments and permutation constraints.
We refer to the extensive literature on Plonkish arithmetization for further details on the construction of the circuit relation polynomial $\mathbf{g}$ using the custom gate constraints, lookup arguments and copy constraints~\cite{Chen2023-hyperplonk,halo2book,Setty2023-ccs,Gabizon2019-plonk}.

\end{definition}

Halo2 is based on Plonk, which is a \gls{piop}~\cite{Gabizon2019-plonk,Chen2023-hyperplonk}.
A \gls{piop} is an information-theoretic object that can be compiled to an efficient computationally
sound argument via a cryptographic primitive, and many proof systems follow this structure.
    A \gls{piop} proceeds in
    \nRounds rounds in the online phase, where the verifier sends challenges to the prover, who responds with polynomial oracles.
    The verifier can then query these oracles at randomly chosen points in the query phase, without needing to access the full polynomials.
    A \gls{piop} proceeds in $\nRounds$ rounds, in which
the verifier sends challenge to prover which returns polynomial oracles that the verifier can then query at random points without reading the whole polynomial.
This forms a public-coin interactive protocol between prover $\prover(x, w)$ and verifier $\verifier(x)$, where \(x\) is a statement and \(w\) is a witness, respectively. For each round \(i = 1, \ldots, \nRounds\), \(P\) sends a polynomial oracle \(\p_i \in \mathbb{F}[X]\) and the verifier \(V\) responds with a uniformly sampled challenge \(\zkc_i\). The challenge strings \(\zkc_1, \ldots, \zkc_r\) are then used by \(V\) to derive evaluation points \(z_1, \ldots, z_r\), which are queries to the polynomial oracles. Upon receiving \(y_i = \p_i(z_i)\) for \(i = 1, \ldots, r\) from the oracles, \(V\) outputs a decision bit to accept or reject.
We provide a formal definition in Appendix~\ref{apx:definitions}~(\Cref{def:piop}).
Combined with a polynomial commitment scheme, a \gls{piop} can be transformed into an argument of knowledge for a given relation \sRelation.
A closely related notion is the Algebraic Holographic Proof (AHP), which is considered nearly equivalent~\cite{Chiesa2020-marlin,Aranha2021-ts}.
    In this work, we adopt the \gls{piop} definition from~\cite{Kohlweiss2023-uk} and use this term interchangeably.

Existing constructions of \gls{cpsnark} compile a \gls{piop} for a relation \sRelation directly into a \gls{cpsnark}~\cite{Campanelli2019-as,Campanelli2021-yd,Aranha2021-ts},
leveraging assumptions about the structure of the prover's polynomials.
For example, Eclipse and Lunar assume some polynomials are \emph{witness-carrying}, meaning the explicitly encode (parts of) the committed witness~\cite{Campanelli2021-yd,Aranha2021-ts}.
This assumption also applies to the Halo2 \gls{piop}; however, Halo2 satisfies a stronger property, which we exploit in this work.%

\begin{definition}[Halo2 \gls{piop}]
Let $\omega \in \sfield$ be a $n = 2^k$ primitive root of unity forming the
domain $D = (\omega^0, \omega^1, ..., \omega^{n - 1})$ with $t(X) = X^n - 1$ the
vanishing polynomial over this domain. Let the number of advice columns $n_a$ and the maximum number of evaluations $n_e$ be positive integers with $n_a, n_e < n$.
Let $\halo = (\setup, \prover, \verifier)$ be a \gls{piop} for
the Halo2 Indexed Relation \sRelation (c.f.~\Cref{def:halo2indexedrelation}).

In the first part of the \gls{piop}'s online phase, the prover sends polynomial oracles that correspond to specific advice columns.
Specifically, for each advice column $\mathbf{a}_i(X, \ldots)$, the prover sends a univariate polynomial $\mathbf{a}^\prime_i(X, \zkc_0, \ldots, \zkc_{i})$,
where $\zkc_0, \ldots, \zkc_{i}$ are the challenges previously sent by the verifier.
These polynomials closely correspond to the advice polynomials, similarly interpolating the witness values over the domain $D$, but are masked with
$n_e + 1$ freshly sampled blinding values, interpolated over the remaining points in $D$ unused by the witness, to ensure zero-knowledge~\cite{halo2book,Pearson-Plonkup2022}.
This approach enables Halo2 to efficiently support the Plonkish arithmetization (cf.~\Cref{plonkish_arithmetization}) where advice columns and custom gates can depend on random challenges.
Note that multiple advice columns may be sent in a single round, but we assume, for ease of exposition and without loss of generality, that each advice column corresponds to a distinct round.
We formalize this connection between advice and oracle polynomials with a mapping $\adviceMap$ that associates each witness polynomial $\mathbf{a}_i$ to the round $i^\prime$ and index $j$ of the corresponding polynomial $\p_{i^\prime,j}$.

\end{definition}
    The choice of polynomial commitment scheme determines the Halo2 \gls{snark}'s security assumptions.
    Halo2 is typically compiled with either KZG or IPA commitments:
    KZG commitments require a trusted setup to generate the public parameters, whereas IPA commitments rely on inner product arguments over elliptic curves and allow for a transparent setup,
    eliminating the need for a trusted ceremony.

\end{myhideenv}

\subsecspacingtop
\section{Related Work}
\subsecspacingbot
\lsec{relatedwork}

In this section, we provide a concise overview of recent developments in zkML, focusing on key state-of-the-art results. We then review existing work on \acrlongpl{cpsnark} and discuss their limitations.

\fakeparagraph{zkML}
The field of \acrlong{zkml} has seen rapid development in recent years, driven by the application and optimization of various proof systems for \gls{ml} inference and training tasks.
While there has been some work addressing \gls{ml} training~\cite{Sun2023-zkdl,Garg2023-bn,Shamsabadi2023-av},
the majority of research has concentrated on \gls{ml} inference.
Initial efforts in this area were primarily focused on convolutional neural networks (CNNs) 
and used early proof systems such as Groth16~\cite{Groth16}, which are capable of proving statements about relations formulated as \glspl{qap}.
For instance, ZEN~\cite{Feng2021-cl} proposes an encoding method to optimize the multiplication of multiple small fixed-point numbers in one field element.
vCNN~\cite{Lee2020-vt} and pvCNN~\cite{Weng2023-pvcnn} enhance support for CNN architectures by proposing arithmetizations of convolutions that significantly reduce the number of multiplications required in their \gls{qap} representation.
zkCNN~\cite{Liu2021-ts} proposes a novel technique for proving convolutions with linear prover time using a sumcheck-based protocol.
However, these works do not consider more recent ML developments and are generally not practical for larger models.

More recent research has favored the Halo2 proof system, which supports Plonkish arithmetization, due to its enhanced
expressiveness and the absence of a trusted setup~\cite{Bowe2019-ug}.
In particular, the support for custom gates and lookup arguments enables more efficient arithmetization of complex \gls{ml} 
layers, which were previously costly to arithmetize.
Kang et al.~\cite{Kang2022-zkml} propose a construction based on Halo2 to prove inference for ImageNet-scale models, demonstrating a substantial improvement in prover time compared to earlier methods.
EZKL~\cite{South2024-ezkl,EZKL_Github,Ganescu24-EZKL} provides an open-source platform that can arithmetize computational graphs to Plonkish, with support for a wide variety of deep learning and data science models.
More recently, ZKML~\cite{Chen2024-ng} introduces an optimizing compiler that translates Tensorflow Lite models into Plonkish 
arithmetizations for Halo2, supporting a wide range of neural network layers and models related to computer vision and language 
models, including language models such as GPT-2.
Finally, very recent work introduces a variety of constraint/arithmetization optimizations and demonstrates significant improvements to concrete prover time using sumcheck-based protocols~\cite{Qu2025-zkgpt}.
We note that while existing work such as Lunar~\cite{Campanelli2021-asiacrypt} (and our Lunar-based strawman \ourlunar, cf. \Cref{sec:design:ourlunar}) fundamentally require a Halo2-style proof,
our construction \oursystem is fully compatible with sumcheck-based proofs.
Nevertheless, in this work, we utilize the ZKML framework by Chen et al. due to its maturity and support for a wide variety of models~\cite{Chen2024-ng}.

\ifdefined\isnotextended
\else
\vspace{-6pt}
\fi
\fakeparagraph{\acrlongpl{cpsnark}}
Most \gls{zkml} works overlook the issue of commitments entirely. The few that do discuss it, generally propose a 
straightforward approach based on effectively “(re-)computing” the commitment inside the \gls{snark}~\cite{Feng2021-cl,Kang2022-zkml,Waiwitlikhit2024-ys,ezkl_commit_blog}.
However, commitments and SNARKs generally rely on different algebraic structures; therefore, one needs to emulate operations, such as elliptic curve computations, using a large number of arithmetic circuit operations.
To address this mismatch, ZK-friendly elliptic curves (e.g., the Jubjub curve from Zcash~\cite{halo2book}) have been proposed. 
These curves reduce the overhead by decreasing the number of constraints needed to verify a commitment, but, despite these improvements, they remain far from efficient.
Given these limitations of ZK-friendly elliptic curves, recent research has shifted towards hash-based commitments. While 
conventional hash functions like (e.g., SHA256) introduce more overhead than elliptic curve-based methods, ZK-friendly hash 
functions such as Poseidon~\cite{Grassi2021-poseidon} provide a more efficient alternative, outperforming elliptic curve-based 
commitments, including those using ZK-friendly curves. 
Nonetheless, our evaluation shows that even with these improvements, the 
overhead remains prohibitive for zkML, especially when dealing with large-scale models.

Multiple works formalize the notion of (zero-knowledge) ``commit-and-prove-SNARKs'' (CP-SNARKs)~\cite{Campanelli2019-as,Lipmaa2016-cpsnark,Costello2015-Gepetto}.
We adapt the formalization of Campanelli et al.~\cite{Campanelli2019-as} who propose an alternative approach to constructing them in LegoSNARK~\cite{Campanelli2019-as} by adapting the Groth16~\cite{Groth16} \acrshort{zksnark} to a (zk-)\gls{cpsnark}.
Subsequent works have proposed \glspl{cpsnark} for a variety of proof systems. 
For example, Chen et al. show how to convert sumcheck-based SNARKs to \glspl{cpsnark}~\cite{Chen2023-hyperplonk}, 
though their construction does not give a \emph{zk}-CP-SNARK.
Eclipse~\cite{Aranha2021-ts} introduces a compiler that transforms \gls{iop}-style (zk-)\glspl{snark} instantiated with Pedersen-like commitments into (zk-)\glspl{cpsnark} relying on amortized Sigma protocols.
This transformation results in a proof size sublinear in the number of commitments and size of the committed witness.
However, the verifier's computational overhead is linear with respect to the committed input size, which significantly impacts the verifier efficiency when a large portion of the witness is committed, as is the case in~\gls{zkml}.

Most relevant to our work, Lunar~\cite{Campanelli2021-yd,Campanelli2021-asiacrypt} presents a compiler for \gls{iop}-style (zk-)\glspl{snark} with polynomial commitments by proving shifts of related polynomials using a pairing-based construction.
This method offers a proof size overhead that is independent of the size of the committed witness.
However, a limitation of this approach is that it only supports pairing-based polynomial commitments, which, for all currently known pairing-based polynomial commitments, requires a trusted setup~\cite{Bowe2019-ug}.
Lunar does not provide an implementation and, as we show in the following sections, makes a series of simplifying assumptions about the layout and cost model of Plonkish arithmetizations.
As we discuss in \Cref{sec:design:ourlunar}, these result in significant overheads when applying Lunar in practice.

\subsecspacingtop
\section{Design}
\subsecspacingbot
\lsec{design}
We begin by outlining the challenges inherent in using CP-SNARK constructions and highlighting key limitations of existing approaches.
We then present our construction in two steps:
first, we leverage the efficient enforcement of copy constraints available in Plonkish arithmetization; 
second, we introduce a new zero-knowledge commitment linking technique that does not rely on pairings.
We defer a detailed discussion of concrete performance and a comparison to existing approaches to the next section (cf. \Cref{sec:eval}).

\ifdefined\isnotextended
\else
\vspace{-6pt}
\fi
\fakeparagraph{Challenges}
Recent advancements in \gls{zkml} leveraging Plonk-style proof systems have substantially reduced the cost of proving \gls{ml} computations~\cite{South2024-ezkl,Kang2022-zkml,Chen2024-ng}. 
However, these efforts have largely neglected the cost of commitment consistency checks, or have only considered “recomputation” strategies, such as Poseidon-based commitment schemes~\cite{Kang2022-zkml,Waiwitlikhit2024-ys}.
As our evaluation shows, such approaches introduce considerable overhead, with commitment checks becoming the dominant component of total prover time (cf.~\Cref{sec:eval}).
Meanwhile, outside the scope of \gls{zkml}, a series of works beginning with LegoSNARK~\cite{Campanelli2019-as} have introduced alternative approaches for handling commitment checks. 
These approaches are based on the insight that \glspl{snark} generally already need to internally commit to the witness as part of the proof.
As a result, these works bypass the need to add costly recomputation constraints to the \gls{snark} by constructing specialized proofs that link these internal witness commitments to external commitments. 
This has the potential to substantially improve performance.
However, internal witness commitments often do not align with the external commitments we wish to verify.
Specifically, internal commitments typically cover significantly more than just the relevant witness values.
For example, in Plonkish arithmetizations (cf.~\Cref{plonkish_arithmetization}), the witness values of interest (e.g., the model weights) may be distributed across many rows and columns of the witness grid, i.e., across many different (column-wise) commitments and the full evaluation domain (cf.~\Cref{fig:grid:messy}).
Existing solutions make overly simplistic assumptions on the layout of the arithmetization, which do not hold in practice.
For example, in ZKML, the witness placements are the result of highly optimized proof generation systems~\cite{Kang2022-zkml} and the ability to structure the grid freely is essential to achieving efficient proofs.

\fakeparagraph{State of the Art}
Lunar~\cite{Campanelli2021-yd} proposes a state-of-the-art LegoSNARK-style construction for the Plonk proof system~\cite{Gabizon2019-plonk}. 
Lunar achieves this by augmenting a (zero-knowledge) \gls{cpsnark} with two specialized proofs:
a \emph{shifting} proof that effectively aligns the external and internal commitments,
\ifdefined\isnotextended
and the core \emph{linking proof} that establishes that the external proofs indeed commit to the witness values.
\else
and the core \emph{linking proof} (\cplinkone) that establishes that the external proofs indeed commit to the witness values.
\fi
Lunar re-commits to the external witnesses while shifting them to align with the internal witnesses on the evaluation domain.
The \emph{shifting} proof in Lunar then shows that this is still a commitment to the same underlying polynomial. 
Lunar's construction only operates on a single polynomial.
Therefore, multiple instances of the shifting and linking proofs are required
when witness values are spread across multiple columns.
More importantly, Lunar assumes that the values for each external commitment appear contiguously inside the witness column, which is unlikely to be the case for \gls{zkml}.
Whenever a value appears out-of-order, or after a gap, additional shifting and linking proofs are required.
As a result, Lunar incurs significant overheads when applied to real-world settings because of the large number of additional shifting and linking proofs to align complicated real-world arithmetizations with the external commitments.
In our evaluation (cf.~\Cref{sec:eval}), 
we show that these overheads are significant in practice.

\begin{figure}
    \centering
\begin{tikzpicture}[>=stealth]
    \def\graywidth{0}   %
    \def\cols{4}        %
    \def\rows{4}        %

    \pgfmathsetmacro{\midX}{\graywidth + \cols + 1}
    \pgfmathsetmacro{\sep}{0.18} %

    \def\offsetWZeroX{-2}
    \def\offsetWOneX{-1}
    \def\offsetWTwoX{0}
    \def\offsetWThreeX{1}
    \def\offsetWFourX{2}

    \def\offsetWZeroY{0}
    \def\offsetWOneY{0}
    \def\offsetWTwoY{0}
    \def\offsetWThreeY{5}
    \def\offsetWFourY{5}

    \definecolor{conncolor}{gray}{0.45}
    \tikzset{conn/.style={conncolor, line width=1.4pt}}

    \foreach \x in {1,...,\cols} {
        \foreach \y in {0,...,\rows} {
            \draw (\graywidth+\x-1, -\y) rectangle ++(1, -1);
        }
    }

    \foreach \y in {0,...,\rows} {
        \node at ( - 0.5, -\y - 0.5) {$\omega^{\y}$};
    }

    \foreach \y in {0,...,\rows} {
        \draw (\graywidth+\cols+2, -\y) rectangle ++(1, -1);
    }

    \node at (\graywidth + 2.0,  0.5) {Arithmetization of $\sRelation$};
    \node at (\graywidth + 6.5, 0.5) {External Comm.};

    \definecolor{w0gray}{gray}{0.95}  %
    \definecolor{w1gray}{gray}{0.85}
    \definecolor{w2gray}{gray}{0.75}
    \definecolor{w3gray}{gray}{0.65}
    \definecolor{w4gray}{gray}{0.55}  %
    
    \fill[w0gray] (\graywidth+0,-2) rectangle ++(1,-1);  %
    \fill[w1gray] (\graywidth+0,-3) rectangle ++(1,-1);  %
    \fill[w2gray] (\graywidth+2,-4) rectangle ++(1,-1);  %
    \fill[w3gray] (\graywidth+3,-1) rectangle ++(1,-1);  %
    \fill[w4gray] (\graywidth+2, 0) rectangle ++(1,-1);  %
    
    \fill[w0gray] (\graywidth+\cols+2, 0) rectangle ++(1,-1);  %
    \fill[w1gray] (\graywidth+\cols+2,-1) rectangle ++(1,-1);  %
    \fill[w2gray] (\graywidth+\cols+2,-2) rectangle ++(1,-1);  %
    \fill[w3gray] (\graywidth+\cols+2,-3) rectangle ++(1,-1);  %
    \fill[w4gray] (\graywidth+\cols+2,-4) rectangle ++(1,-1);  %
    
    \node (w0ext)  at (\graywidth + \cols + 2.5, -0.5) {$w^{(0)}$};
    \node (w1ext)  at (\graywidth + \cols + 2.5, -1.5) {$w^{(1)}$};
    \node (w2ext)  at (\graywidth + \cols + 2.5, -2.5) {$w^{(2)}$};
    \node (w3ext)  at (\graywidth + \cols + 2.5, -3.5) {$w^{(3)}$};
    \node (w4ext)  at (\graywidth + \cols + 2.5, -4.5) {$w^{(4)}$};
    
    \node (w0grid) at (\graywidth + 0.5, -2.5) {$w^{(0)}$};
    \node (w1grid) at (\graywidth + 0.5, -3.5) {$w^{(1)}$};
    \node (w2grid) at (\graywidth + 2.5, -4.5) {$w^{(2)}$};
    \node (w3grid) at (\graywidth + 3.5, -1.5) {$w^{(3)}$};
    \node (w4grid) at (\graywidth + 2.5, -0.5) {$w^{(4)}$};

    \newcommand{\drawconnector}[4]{%
      \draw[conn]
        let \p1 = (#1.east),              %
            \p2 = (#2.west),              %
            \n1 = {\midX + (#3)*\sep},   %
            \n2 = {\y1 + (#4)},          %
            \n3 = {\y2 + (#4)}           %
        in
        (\x1,\n2) -- (\n1,\n2) -- (\n1,\n3) -- (\x2,\n3);
    }

    \drawconnector{w0grid}{w0ext}{\offsetWZeroX}{\offsetWZeroY}
    \drawconnector{w1grid}{w1ext}{\offsetWOneX}{\offsetWOneY}
    \drawconnector{w2grid}{w2ext}{\offsetWTwoX}{\offsetWTwoY}
    \drawconnector{w3grid}{w3ext}{\offsetWThreeX}{\offsetWThreeY}
    \drawconnector{w4grid}{w4ext}{\offsetWFourX}{\offsetWFourY}
\end{tikzpicture}
    \figcaptionvspace
    \caption{
    Visualization of the alignment issues between internal uses of witness values and the external commitment.
    }

    \label{fig:grid:messy}
\end{figure}

\subsection{Arithmetization-based Alignment}
\lsec{design:ourlunar}
Our first key insight is that 
we can exploit the power of Plonkish arithmetizations to perform the alignment once \emph{inside} the proof,
rather than using many external proofs.
Because the Halo2 proof system internally uses commitments $\scomProof_{i,j}(X)$
that are not just witness-carrying (c.f. Definition~6 in~\cite{Aranha2021-ts}),
but that correspond directly to the (advice) columns in the arithmetization,
we can add an additional advice column $\polyAdvice_{n_a}$ that contains the witness values of interest in the same order and alignment on the evaluation domain as in the external commitment.
We then add a copy constraint for each witness value to link the new copies to their original cells in the grid.
With this, we can directly perform the linking proof between the new advice column $\polyAdvice_{n_a}$ and the external commitments $\scomList$.
Combining this approach with the linking proof from Lunar gives rise to a construction which we call \ourlunar.
\ifdefined\isnotextended
This construction follows the standard paradigm of compiling a SNARK from a \gls{piop}~\cite{Chiesa2020-marlin} but, as is the case in other (zk-)\gls{cpsnark} works, adds a \emph{linking} phase.
Due to space constraints, we defer a formal write-up of this strawman protocol and the required arithmetization transformation to the extended version of the paper\citeextended.
\else
This construction, depicted in~\Cref{fig:compiler:apollo},
follows the standard paradigm of compiling a SNARK from a \gls{piop}~\cite{Chiesa2020-marlin} but, as is the case in other CP-SNARK works, adds a \emph{linking} phase.
We provide a formal description of the arithmetization transformation
in Appendix~\ref{apx:arithmetization}.
\todoCameraReady{look at flow here}
We omit formal proofs for \ourlunar\footnote{The proofs should be straightforward, as we directly use the linking protocol (\cplinkone) from Lunar~\cite{Campanelli2021-yd} and otherwise merely extend the arithmetization of the underlying proof system in a straightforward manner.}.
\fi

This approach entirely removes the need to perform the shifting proofs and dramatically reduces the number of linking proof instances.
For example, for an inference proof for MobileNet~\cite{Sandler2018-mobilenet}, Lunar requires 20 shifting and 20 linking proofs, while \ourlunar requires only a single linking proof.
Though this represents a significant advance compared to the existing state of the art, a limitation is that the entire witness must fit within a single column, since the linking proof assumes that both the internal and external polynomials are defined over the same evaluation domain. 
As a result, the degree of the internal polynomials must be at least as large as the external polynomial, which in turn increases the size of the grid polynomials in the arithmetization, which can result in a significant performance penalty.
For practical reasons, we instead assume that the external commitment is split into multiple commitments, each fitting within a single column that requires a separate linking proof.

\ifdefined\isnotextended
Another limitation of Lunar's linking proof is that it relies on pairings to show the internal and external polynomials agree at a subset of evaluations without leaking any evaluations from the externally committed polynomial.
\else
Another limitation of Lunar's linking proof (\cplinkone) is that it relies on pairings to show the internal and external polynomials agree at a subset of evaluations without leaking any evaluations from the externally committed polynomial.
\fi
As a result, \ourlunar still requires a pairing-based polynomial commitment, and, therefore, in practice, requires a trusted setup~\cite{Bowe2019-ug}.
Using a (zero-knowledge) linking proof is absolutely essential to achieving a zero-knowledge CP-SNARK,
as we otherwise risk leaking information about the value committed to by the external commitment.
For example, the construction used in VerITAS~\cite{Datta2025-veritas}  directly includes the externally committed polynomial in the permutation argument that enforces the copy constraints.
This fails to achieve the claimed zero-knowledge property, as each proof  reveals evaluation points from the external commitment through the permutation argument.
While such leakage can generally be easily solved for internal commitments by the addition of appropriately many blinding values,
here, this depends on the number of proofs in which the external commitment is used, which is not known a priori.

\begin{myhideenv}

\begin{figure*}
    \centering
    \begin{tcolorbox}[title=Apollo]
Apollo is parameterized by a Halo2 \gls{piop} \IOPScheme(\nRounds, \sPolies, \nQueries, \dBound, \IOPI, \IOPProver, \IOPVerifier) as in~\Cref{def:piop} with evaluation domain $D$ and KZG polynomial commitment scheme \KZGScheme as in~\Cref{def:pc}. %

\protocolparagraph{Setup$(1^\lambda, N, \ck, \changed{d})$}On input a security parameter $\lambda \in \mathbb{N}$ and size bound $N \in \mathbb{N}$, the setup computes the maximum degree bound 
$\dMax := \max\{\dBound(|i,j|) \mid i \in \{0,1,\dots,\nRounds(N)\}, j \in \{1,\dots,\sPolies(i)\}\}$,
\changed{computes $\textsf{srs}_\text{link} \leftarrow \cplinkone.\textsf{KeyGen}(\ck)$
where \cplinkone is the linking protocol defined in~\cite{Campanelli2021-yd}},
and then outputs $\text{srs} := (\ck, \changed{\textsf{srs}_\text{link}})$.

\protocolparagraph{Indexer $\mathcal{I}^{\text{srs}}(\indexI, \changed{\sIndexSetCom})$}Upon input $\indexI$, \changed{commitment index sets $\sIndexSetCom$}, the indexer, given oracle access to $\text{srs}$, deduces the field $\mathbb{F} \in \mathcal{F}$ contained in $\text{srs}$.
The indexer \changed{augments the index with additional advice columns for the copies of the committed witnesses by calling $\indexI^\prime, n_a \leftarrow \apolloIndex(\indexI, I_\text{com})$ and}
runs the \IOPScheme indexer $\iopIndex$ on $(\mathbb{F}, \changed{\indexI^\prime})$ to obtain $\sPolies(0)$ polynomials $(\p_{0,j})_{j=1}^{\sPolies(0)} \in \mathbb{F}[X]$ of degrees at most $(\dBound(|i,0,j|))_{j=1}^{s(0)}$.
Then it generates (de-randomized) commitments to the index polynomials $(c_{0,j})_{j=1}^{\sPolies(0)} := \KZGCommit(\ck,(\p_{0,j})_{j=1}^{\sPolies(0)})$.
\changed{It also computes the $\ell$ verification keys to link the aligned witness with the external commitments $\textsf{vk}^\text{link}_k \leftarrow \cplinkone.\textsf{Derive}(\textsf{srs}_\text{link}, D{[\ :\abs{\sIndexSetCom^k}]})$ for $k \in [\ell]$.}

\protocolparagraph{Input}The prover $\mathcal{P}$ receives $((\sInstance, \scomList),(\sWitness, \srandList))$ and the verifier $\mathcal{V}$ receives $(\sInstance, \scomList)$.

\protocolparagraph{Online Phase}\changed{%
The prover augments the witness to include the aligned copies of the committed witnesses $\sWitness^\prime \leftarrow \apolloW(w, \sIndexSetCom)$.}
For every round $i \in [\nRounds]$, $\mathcal{P}$ and $\mathcal{V}$ execute the $i$-th round of interaction between the \IOPScheme prover $\sprover(\mathbb{F}, \changed{\indexI^\prime}, \sInstance, \changed{\sWitness^\prime})$ and verifier $\verifier(\mathbb{F}, \sInstance)$:
\begin{enumerate}
  \item $\mathcal{V}$ receives a random challenge $\zkc_i \in \mathbb{F}$ from the \IOPScheme verifier $\verifier$, and forwards it to $\mathcal{P}$.
  \item $\mathcal{P}$ forwards $\zkc_i$ to the \IOPScheme prover, which returns polynomials $\p_{i,1}, \ldots, \p_{i,\sPolies(i)} \in \mathbb{F}[X]$ with $\deg(\p_{i,j}) \le \dBound(|i|, i, j)$.
  \item $\mathcal{P}$ sends $\scomProof_{i,j} := \KZGCommit(\ck, \p_{i,j}(X),\ \dBound(|i|, i, j);\ \sRandomnessProof_{i,j})$ to $\mathcal{V}$.
\end{enumerate}
After $\nRounds$ rounds, $\mathcal{V}$ obtains an additional challenge $\zkc_{\nRounds+1} \in \mathbb{F}^*$ from the \gls{piop} verifier, to be used in the next phase. 
Let 
$\scomProof := (\scomProof_{i,j})_{i\in[\nRounds],\,j\in[\sPolies(i)]},
\p := (\p_{i,j})_{i\in[\nRounds],\,j\in[\sPolies(i)]},
\sRandomnessProof := (\sRandomnessProof_{i,j})_{i\in[\nRounds],\,j\in[\sPolies(i)]},
\hat{d} := (\mathsf{d}(|i|, i, j))_{i\in[\nRounds],\,j\in[\sPolies(i)]}$.
\protocolparagraph{Query Phase}
\begin{enumerate}
  \item $\mathcal{V}$ sends $\zkc_{\nRounds+1} \in \mathbb{F}^*$ to $\mathcal{P}$.
  \item $\mathcal{V}$ computes the query set $Q := \textsf{Q}_V(\mathbb{F}, x;\ \zkc_1,\ldots,\zkc_\nRounds,\zkc_{\nRounds+1})$ using the \gls{piop} verifier’s query algorithm.
  \item $\mathcal{P}$ replies with the evaluation answers $v := \p(Q)$.
  \item $\mathcal{V}$ samples an opening challenge $\xi \in \mathbb{F}$ and sends it to $\mathcal{P}$.
  \item $\mathcal{P}$ responds with an evaluation proof $\pi_{\text{Eval}} := \KZGProveB(\ck, \p, \hat{d}, Q, \xi, \sRandomnessProof)$.
\end{enumerate}

\protocolparagraph{\changed{Linking Phase}}\changed{For each commitment $k \in [\ell]$,
$\mathcal{P}$ generates a linking proof $\pi^{\text{link}}_k \leftarrow \cplinkone.\textsf{Prove}(\textsf{srs}_\text{link}.\textsf{ek}, $ $((\scom_{k}),(\scomProof_{i,j})), ((\pol{w}_{k}^*),(\p_{i,j})), (\sRandomness_k),(\sRandomnessProof_{i,j})))$
    where $i,j \leftarrow \adviceMap[n_a + k]$ and $\pol{w}_{k}^*$
is the polynomial committed to by $\scom_k$, following the split of $\mathcal{D}_w$,
and $\adviceMap$ is the mapping from the advice columns to the polynomial oracles.
It then sends the linking proofs $\pi^{\text{link}}_1,\ldots,\pi^{\text{link}}_\ell$ to $\mathcal{V}$.}

\protocolparagraph{Decision phase}$\mathcal{V}$ accepts if and only if all of the following hold:
\begin{itemize}
  \item the decision algorithm of $V$ accepts the answers (i.e. $\textsf{D}_V(\mathbb{F}, x, v, \zkc_1, \ldots, \zkc_\nRounds, \zkc_{\nRounds+1}) = 1$);
  \item the answers pass the polynomial check (i.e. $\KZGCheckB(\ck, \scomProof, d, Q, v, \pi_{\text{Eval}}, \xi) = 1$);
  \item \changed{the linking proofs are valid, i.e., for all $k \in [\ell]$, 
  $\cplinkone.\textsf{Verify}(\textsf{vk}^\text{link}_k, \scomProof_{i,j}, (\scom_{k}), \pi^{\text{link}}_k) = 1$,
  where $i,j \leftarrow \adviceMap[n_a + k]$.}
\end{itemize}
    
\end{tcolorbox}
    \caption{\ourlunar Compiler. The differences with the Marlin compiler are \changed{highlighted in blue}.}
    \label{fig:compiler:apollo}
\end{figure*}

\end{myhideenv}

\subsection{Polynomial-Equality Linking}
\lsec{design:oursystem}
Our second key insight is that while it is complex to check (in zk) that two polynomials agree on (a subset of) the evaluation domain,
checking full polynomial equality is straightforward.
In the following, we show how to reduce the desired commitment linking to a simple comparison of two polynomials via Schwartz-Zippel, which can be made zero-knowledge via masking.
Based on this, we present \oursystem which removes the dependence on a specific polynomial commitment scheme while simultaneously providing a more efficient alternative to the pairing-based linking proofs employed in Lunar or \ourlunar.

\fakeparagraph{Internal Polynomial Evaluation}
At a high level, our goal is to check that the externally and internally committed-to polynomials are equal at a challenge point $\sEvalPoint$ chosen by the verifier.
Evaluating the external commitment at this point is straightforward using \PCProve.
However, directly evaluating the internal polynomial is non-trivial: although we can align the internal polynomial to agree with the external one over the evaluation domain, the committed-to polynomials themselves still differ due to the use of, e.g., blinding terms to ensure zero-knowledge.
Instead, we compute the evaluation of the internal polynomial inside the \gls{snark}.
We extend the arithmetization to compute $\sEvalValue = \spolySingle(\sEvalPoint)$ at a public challenge \sEvalPoint using the internal witness values and afterwards open the result.
The verifier can then compare it to the evaluation of the external commitment.

Evaluating the polynomial evaluation inside the circuit removes many of the dependencies on the proof system that Lunar and \ourlunar have.
We no longer need to ensure that internal polynomials match the size of externally committed ones, nor do we need to split external commitments into column-sized chunks.
Furthermore, we remove the assumption on how the grid polynomials are committed and with which polynomial commitment scheme.
As a result, \oursystem is not limited to Halo2-style proofs but applies to any SNARK that supports verifier challenges.
Here, we present our construction in the context of Halo2 for easier comparison with Lunar and \ourlunar.
In practice, evaluating polynomials inside the circuit is highly efficient: when the polynomial is in coefficient form, it requires only a small number of arithmetic operations. 
In \Cref{sec:arithmetization}, we describe how to implement this gate efficiently using Horner’s method.

\fakeparagraph{Exploiting Verifier Challenges}
The prover must know the challenge point before computing the \gls{snark} to compute the arithmetization of the polynomial evaluation.
However, if $\sEvalPoint$ is known too early, the prover could trivially cheat with an alternative witness $\spolySingle'$
by computing the required shift $\delta = \spolySingle(\sEvalPoint) - \spolySingle^\prime(\sEvalPoint)$ and adapting the polynomial evaluation witness to $\spolySingle'(\sEvalPoint) + \delta$ so that the evaluation matches the expected value.
To address this, we introduce an extra \gls{piop} round in the witness commitment phase.
In the first round, the prover commits to the internal witness values. 
Then, after receiving the verifier's challenge $\sEvalPoint$, the prover commits to additional values required to evaluate the internal polynomial at $\sEvalPoint$.
Note that this extra round introduces no noticeable overhead once the Fiat-Shamir transform is applied to make the proof non-interactive.
Having the arithmetization depend on extra verifier challenges is supported in Halo2 and many other Plonkish proof systems, and
relying on additional verifier challenges is commonly used in many Plonkish proof systems, such as to optimize lookup and permutation arguments~\cite{halo2book}.

\fakeparagraph{Zero Knowledge}
When operating over entire polynomials, we can achieve zero-knowledge through simple masking, i.e., $\sEvalValue = \spolySingle(\sEvalPoint) + \sMask$ for a random masking value $\sMask \sample \sfield_p$.
Inside the SNARK, we can trivially extend the arithmetization to include a prover-chosen masking value.
Externally, we can exploit the homomorphism of \PCScheme to
add the masking value to the polynomial before \PCProve,
given a commitment $\scom_{\sMask}$ to the 0-degree polynomial $\spolyMask$ defined by \sMask.
However, a malicious prover could easily cheat by sending a commitment $\scom_{\sMask'}$ to an arbitrary polynomial $\spolyMask'$, rather than $\spolyMask$.
For example, setting $\spolyMask = \spolySingle^\prime - \spolySingle$ would allow for an arbitrary witness $\spolySingle^\prime$ to be used inside the SNARK. 
Therefore, we require an additional verifier challenge $\alpha$ and compute the masked commitment as a linear combination $\alpha \cdot \spolySingle + \spolyMask$.

\fakeparagraph{Aggregating Multiple Commitments}
Up to this point, we have considered each $\spolySingle$ and $\scom_{i}$ individually. 
However, a key advantage of our approach ---particularly in comparison to Lunar and \ourlunar --- is the ability to aggregate all $\spolySingle$ and $\scom_{i}$, thereby reducing the number of commitment checks to a single polynomial evaluation. 
To achieve this, we compute a linear combination with the additional challenge $\alpha$ from the verifier; specifically, we set:

\begin{equation*}
    \sEvalValue = (\spolyMask + \textstyle\sum_{i=1}^\ell \spolySingle \cdot \alpha^{i})(\sEvalPoint)
\end{equation*}
where $\spolyMask$ is the degree-0 polynomial defined by $\sMask$.
The prover computes a single polynomial evaluation proof:
\begin{equation*}
    \piCommitment \leftarrow \PCProve(\ck, \spolyMask + \textstyle\sum_i \spolySingle \cdot \alpha^i, d, \sEvalPoint, \sEvalValue, \sRandomness_{\sMask} + \textstyle\sum_i \sRandomness_{i} \cdot \alpha^i)
\end{equation*}
where $\pcDbound$ is the degree bound of the external polynomial commitment scheme. 
The verifier then verifies \piCommitment with respect to the commitment $\scom_{\sMask} + \textstyle\sum_i \scom_{i} \cdot \alpha^i$.
We show in our proof that the knowledge soundness error this introduces is negligible. 
Note, that this is distinct from the usual batch opening that some polynomial commitments support (e.g., employed in Plonk with KZG commitments~\cite{halo2book}).

\fakeparagraph{\ourcompiler Construction}
We construct \ourcompiler, a \gls{cpsnark} compiler, that constructs a
(zk-)\gls{cpsnark} for a relation $\sRelation$ matching \Cref{def:cpsnark}, given a \acrlong{piop} \IOPScheme for $\sRelation$, an internal polynomial commitment scheme \ZKPCScheme, and an external homomorphic commitment scheme \PCScheme.
Our protocol, shown in~\Cref{fig:compiler:artemis}, extends the standard \gls{piop} compiler~\cite{Chiesa2020-marlin} with a linking phase. %
We extend the arithmetization to include a computation of the Horner's method (masked) polynomial evaluation of the (linear combination of) witness polynomials at the challenge point $\sEvalPoint$.
Specifically, we extend the arithmetization with additional advice columns for an extra custom gate constraint $\polyHornerGate(X, \ldots, \Calpha, \Cbeta)$,
where $\Calpha$ and $\Cbeta$ correspond to the variables for the challenges $\alpha$ and $\sEvalPoint$ sent by the verifier.
This works because the prover commits to the advice columns containing the witness before receiving the challenges from the verifier.
In particular, Halo2's arithmetization allows us to introduce additional constraints based on extra challenges beyond those required internally, which is also used to improve the efficiency of lookup arguments and custom gates.
In practice, this incurs negligible overhead.
We describe how to augment arithmetizations with a polynomial evaluation gate based on Horner's method using the adaptation function \horner in \Cref{sec:arithmetization}.

At the start of the online phase,
the prover commits to a random polynomial $\spolyMask$, and sends the commitment $\scom_\sMask$ to the verifier.
The prover then extends the witness of the \IOPScheme prover to include the advice polynomials required for the Horner's method polynomial evaluation, incorporating both the committed witness values and the masking coefficient $\sMask$.
It then follows the structure of the original compiler through the online and query phases.
In the linking phase, the prover shows that the result of the Horner's method polynomial evaluation and the evaluation of the polynomial defined over the external commitments open to the same value $\sEvalValue$.
To achieve this, the prover first proves that the polynomial oracle $\p_{\hornerIndexI,\hornerIndexJ}$—sent during the online phase and corresponding to the advice column $\polyAdvice_{\hornerIndexAdvice}$ containing the polynomial evaluation result—correctly opens to $\sEvalValue$ by computing an opening proof $\hat{\pi}_{\text{Link}}$ for $\p_{\hornerIndexI,\hornerIndexJ}$ at the index \hornerIndexCell.
The protocol then proceeds to show that the external commitments evaluate to the same value $\sEvalValue$, using the same challenges $\alpha$ and $\sEvalPoint$ sent by the \IOPScheme verifier.
Towards this, both verifier and prover compute a masked linear combination of the commitments using $\alpha$ which is possible due to their homomorphic nature.
This, together with the masked linear combination of the witness polynomials and the commitment randomnesses, forms the input to \PCProve to compute the opening proof \piCommitment for the polynomial at \sEvalPoint.
Due to the (repeated applications of) the DeMillo–Lipton–Schwartz–Zippel Lemma, this check will complete (with high probability) only if the witnesses in the \gls{snark} indeed agree with the committed values.
We provide a full proof of security for \oursystem in Appendix~\ref{sec:proof}.
\FloatBarrier

\begin{figure}[t]
    \centering

\begin{tikzpicture}
    \def\graywidth{1} %
    \def\cols{1} %
    \def\rows{5} %
    \definecolor{restcolor}{gray}{0.7}
    \definecolor{constantcolor}{gray}{0.96}
    \definecolor{fixedcolor}{gray}{0.8}

    \foreach \x in {1,...,\cols} {
        \foreach \y in {0,...,\rows} {
            \draw (\graywidth+\x, -\y) rectangle ++(1, -1);
        }
    }

    \foreach \y in {0,...,\rows} {
        \draw[fill=constantcolor] (\graywidth, -\y) rectangle ++(1, -1);
    }
        \foreach \y in {0,...,\rows} {
        \draw[fill=fixedcolor] (\graywidth+\cols + 1, -\y) rectangle ++(1, -1);
    }

    \draw[fill=restcolor] (0, 0) rectangle ++(\graywidth, -\rows-1);

    \node at (\graywidth/2, -\rows/2 - 0.25) {$\sRelation$};
    \node at (\graywidth/2, -\rows/2 - 0.75) {$\cdots$};

    \node at (\graywidth + 1.5, 0.5) {$\sEvalValue$};
    
    \node at (\graywidth + 0.5, 0.5) {$w$};
    \node at (\graywidth + 2.5, 0.5) {$F_{n_f + 1}$};

    \foreach \y in {0,...,4} {
        \pgfmathsetmacro\r{\y}
        \node at (\graywidth + 0.5, -\y - 0.5) {$w^{({\pgfmathprintnumber{\r}})}$};
    }

     \foreach \y in {0,...,4} {
        \pgfmathsetmacro\r{1+\y}
        \node at (\graywidth + 1.5, -\y - 0.5) {$\rho_{\pgfmathprintnumber{\r}}$};
    }
    \node at (\graywidth + 1.5, -5.5) {$0$};

    \foreach \y in {0,...,\rows} {
        \node at ( - 0.5, -\y - 0.5) {$\omega^{\y}$};
    }

    \foreach \y in {0,...,4} {
        \node at ( \graywidth + 2.5, -\y - 0.5) {$1$};
    }
    \node at (\graywidth + 2.5, -5.5) {$0$};

\draw[red, line width=5pt]
      (\graywidth, -1) -- (\graywidth+\cols+5.5, -1)
      -- (\graywidth+\cols+5.5, -2)
      -- (\graywidth+2, -2)
      -- (\graywidth+2, -3)
      -- (\graywidth+1, -3)
      -- (\graywidth+1, -2)
      -- (\graywidth, -2)
      -- cycle;
    
    \node at (\graywidth + 4.75, -1.5) {$\polyHornerGate(X, \ldots, \Calpha, \Cbeta)$};

    \draw [decorate,decoration={brace,amplitude=6pt,mirror,raise=4pt},yshift=0pt]
        (\graywidth + 0.1, -6) -- (\graywidth+1.9, -6) node [black,midway,yshift=-0.6cm] {\footnotesize Advice};
    
    \draw [decorate,decoration={brace,amplitude=6pt,mirror,raise=4pt},yshift=0pt]
        (\graywidth+2.1, -6) -- (\graywidth+2.9, -6) node [black,midway,yshift=-0.6cm] {\footnotesize Fixed};

\end{tikzpicture}
    \figcaptionvspace
    \caption{Simplified visualization of a Plonkish grid with our extensions for a single commitment.}
    \label{fig:grid:simple}
\end{figure}

\begin{figure*}
    \centering
    \begin{tcolorbox}[title=Artemis]
    \ifdefined\isnotextended
        Artemis is parameterized by a Halo2 \gls{piop} \IOPScheme(\nRounds, \sPolies, \nQueries, \dBound, \IOPI, \IOPProver, \IOPVerifier) as in Def.~\ref*{def:piop}\citeextended with evaluation domain $D$, a polynomial commitment scheme \ZKPCScheme as in Def.~\ref{def:pc}, \changed{and a homomorphic polynomial commitment scheme \PCScheme}.
    \else
        Artemis is parameterized by a Halo2 \gls{piop} \IOPScheme(\nRounds, \sPolies, \nQueries, \dBound, \IOPI, \IOPProver, \IOPVerifier) as in Def.~\ref{def:piop} with evaluation domain $D$, a polynomial commitment scheme \ZKPCScheme as in Def.~\ref{def:pc}, \changed{and a homomorphic polynomial commitment scheme \PCScheme}.
    \fi

        \protocolparagraph{Setup$(1^\lambda, N, \changed{d})$}On input a security parameter $\lambda \in \mathbb{N}$ and size bound $N \in \mathbb{N}$,
        the setup  computes the maximum degree bound $\dMax := \max\{\dBound(|i,j|) \mid i \in \{0,1,\dots,\nRounds(N)\}, j \in \{1,\dots,\sPolies(i)\}\}$,
        samples $\ckp \leftarrow \ZKPCSetup(1^\lambda, \dMax)$, %
        and then outputs $\text{srs} := (\ckp)$.

        \protocolparagraph{Indexer $\mathcal{I}(\indexI, \changed{\sIndexSetCom, \ck})$}Upon input $\indexI$, \changed{commitment index sets $\sIndexSetCom$}, the indexer, given oracle access to $\text{srs}$, deduces the field $\mathbb{F} \in \mathcal{F}$ contained in $\text{srs}$.
        The indexer \changed{augments the relation with the Horner's method polynomial evaluation using $\indexI^\prime,  \hornerIndexCell, \hornerIndexAdvice \leftarrow \hornerIndex(\indexI, I_\text{com})$ and}
        runs the \IOPScheme indexer $\iopIndex$ on $(\mathbb{F}, \changed{\indexI^\prime})$ to obtain $\sPolies(0)$ polynomials $(\p_{0,j})_{j=1}^{\sPolies(0)} \in \mathbb{F}[X]$ of degrees at most $(\dBound(|i,0,j|))_{j=1}^{\sPolies(0)}$.
        Then it generates (de-randomized) commitments to the index polynomials $(c_{0,j})_{j=1}^{\sPolies(0)} := \ZKPCCommit(\ckp,(\p_{0,j})_{j=1}^{\sPolies(0)})$.

        \protocolparagraph{Input}The prover $\mathcal{P}$ receives $((\sInstance, \scomList),(\sWitness, \srandList))$ and the verifier $\mathcal{V}$ receives $(\sInstance, \scomList)$.

        \protocolparagraph{Online Phase}\changed{The prover samples a random masking value
            $\sProtocolRandomness \sample \sfield_p$ and commitment randomness $\sRandomness_\sProtocolRandomness \sample \sfield_p$,
            and computes a commitment to the masking value
            $\scom_\sProtocolRandomness \leftarrow \PCCommit(\ck, \spolyMask, \sRandomness_\sProtocolRandomness)$
where \spolyMask is the degree-0 polynomial defined by \sProtocolRandomness and sends $\scom_\sProtocolRandomness$ to $\verifier$.
            Then, the prover augments the witness to include the advice polynomials for the Horner's method polynomial evaluation: $\sWitness^\prime \leftarrow \hornerW(w, \sIndexSetCom, \sProtocolRandomness)$.}

        For every round $i \in [\nRounds]$, $\mathcal{P}$ and $\mathcal{V}$ execute the $i$-th round of interaction between the \IOPScheme prover $\sprover(\mathbb{F}, \changed{\indexI^\prime}, \sInstance, \changed{\sWitness^\prime})$ and verifier $\verifier(\mathbb{F}, \sInstance)$:
        \begin{enumerate}
            \item $\mathcal{V}$ receives a random challenge $\zkc_i \in \mathbb{F}$ from the \IOPScheme verifier $\verifier$, and forwards it to $\mathcal{P}$.
            \item $\mathcal{P}$ forwards $\zkc_i$ to the \IOPScheme prover, which returns polynomials $\p_{i,1}, \ldots, \p_{i,\sPolies(i)} \in \mathbb{F}[X]$ with $\deg(\p_{i,j}) \le \dBound(|i|, i, j)$.
            \item $\mathcal{P}$ sends $\scomProof_{i,j} := \ZKPCCommit(\ckp, \p_{i,j}(X),\ \dBound(|i|, i, j);\ \sRandomnessProof_{i,j})$ to $\mathcal{V}$.
        \end{enumerate}
        After $\nRounds$ rounds, $\mathcal{V}$ obtains an additional challenge $\zkc_{\nRounds+1} \in \mathbb{F}^*$ from the \gls{piop} verifier, to be used in the next phase.
        Let
        $\scomProof := (\scomProof_{i,j})_{i\in[\nRounds],\,j\in[\sPolies(i)]},
        \p := (\p_{i,j})_{i\in[\nRounds],\,j\in[\sPolies(i)]},
        \sRandomnessProof := (\sRandomnessProof_{i,j})_{i\in[\nRounds],\,j\in[\sPolies(i)]},
        \hat{d} := (\mathsf{d}(|i|, i, j))_{i\in[\nRounds],\,j\in[\sPolies(i)]}$.%
        \protocolparagraph{Query Phase}
        \begin{enumerate}
            \item $\mathcal{V}$ sends $\zkc_{\nRounds+1} \in \mathbb{F}^*$ to $\mathcal{P}$.
            \item $\mathcal{V}$ computes the query set $Q := \textsf{Q}_V(\mathbb{F}, x;\ \zkc_1,\ldots,\zkc_\nRounds,\zkc_{\nRounds+1})$ using the \gls{piop} verifier’s query algorithm.
            \item $\mathcal{P}$ replies with the evaluation answers $\evalValues := \p(Q)$.
            \item $\mathcal{V}$ samples an opening challenge $\xi \in \mathbb{F}$ and sends it to $\mathcal{P}$.
            \item $\mathcal{P}$ responds with an evaluation proof $\pi_{\text{Eval}} := \ZKPCProveB(\ckp, \p, \hat{d}, Q, \xi, \sRandomnessProof)$.
        \end{enumerate}

        \protocolparagraph{\changed{Linking Phase}}
        \changed{
            \begin{enumerate}
                \item The prover and the verifier identify the polynomial containing the Horner's method evaluation results by looking up the round \hornerIndexI and the index \hornerIndexJ
                in the mapping from the advice columns to the polynomial oracles \adviceMap, i.e.,
                $\hornerIndexI,\hornerIndexJ := \adviceMap[\hornerIndexAdvice]$.
                Let $\alpha := \zkc_{\hornerIndexI - 1}$ and $\sEvalPoint := \zkc_{\hornerIndexI}$.
                \item $\mathcal{P}$ and $\mathcal{V}$ compute the combined external polynomial commitment as a linear combination $\scom \leftarrow \commitmentCombination$.
                \item $\mathcal{P}$ computes the externally committed polynomial $\mathbf{w}^* \leftarrow \spolyMask + \sum_{i=1}^{\ell} \mathbf{w}^*_i \cdot \alpha^{i}$, where $\mathbf{w}^*_i$ is the
                polynomial committed to by $\scom_i$, following the split of $\mathcal{D}_w$.

                \item $\mathcal{P}$ shows that the result of the Horner's method evaluation $\p_{\hornerIndexI,\hornerIndexJ}(\hornerIndexCell)$ and the evaluation of the polynomial defined over the external commitments $\mathbf{w}^*$ open to the same value,
                by replying with $\sEvalValue$, an evaluation proof for the polynomial $\hat{\pi}_{\text{Link}} := \ZKPCProve(\ckp, \p_{\hornerIndexI,\hornerIndexJ}, \mathsf{d}(|i|,{\hornerIndexI,\hornerIndexJ}), \hornerIndexCell, \sEvalValue, \sRandomnessProof_{\hornerIndexI,\hornerIndexJ})$, and an evaluation proof for the external commitment $\piCommitment := \PCProve(\ck, \mathbf{w}^*, d, \sEvalPoint, \sEvalValue, \sRandomness^*)$ , where $\sRandomness^* = \sRandomness_\sProtocolRandomness + \sum_{i=1}^{\ell} \sRandomness_i \cdot \alpha^{i}$.
            \end{enumerate}}

        \protocolparagraph{Decision phase}$\mathcal{V}$ accepts if and only if all of the following hold:
        \begin{itemize}
            \item the decision algorithm of $V$ accepts the answers (i.e. $\textsf{D}_V(\mathbb{F}, x, v, \zkc_1, \ldots, \zkc_\nRounds, \zkc_{\nRounds+1}) = 1$);
            \item the answers pass the polynomial check (i.e., $\ZKPCCheck(\ckp, \scomProof, \hat{d}, Q, v, \pi_{\text{Eval}}, \xi) = 1$);
            \item \changed{the opening proofs are valid with respect to $\sEvalValue$ (i.e., $\ZKPCCheck(\ckp, \scomProof_{\hornerIndexI,\hornerIndexJ}, \hat{d}, \hornerIndexCell, \sEvalValue, \piInternal) = 1$ and $\PCCheck(\ck, \scom, d, \sEvalPoint, \sEvalValue, \piCommitment) = 1$).}
        \end{itemize}

    \end{tcolorbox}
    \caption{\oursystem Compiler. The difference with the Marlin compiler are \changed{highlighted in blue}.}
    \label{fig:compiler:artemis}
\end{figure*}

\begin{figure*}
    \centering

\begin{tikzpicture}
    \def\graywidth{4} %
    \def\cols{1} %
    \def\rows{5} %
    \definecolor{restcolor}{gray}{0.7}
    \definecolor{constantcolor}{gray}{0.96}
    \definecolor{fixedcolor}{gray}{0.8}

    \foreach \y in {0,...,\rows} {
        \draw (\graywidth, -\y) rectangle ++(1, -1);
    }
    \draw (\graywidth+1, 0) rectangle ++(1, -\rows-1);
    \foreach \y in {0,...,\rows} {
        \draw (\graywidth+2, -\y) rectangle ++(1, -1);
    }

    \foreach \y in {0,...,\rows} {
        \draw (\graywidth+3, -\y) rectangle ++(1, -1);
    }

    \foreach \y in {0,...,\rows} {
        \draw[fill=fixedcolor] (\graywidth+\cols+3, -\y) rectangle ++(1, -1);
    }
    
    \draw[fill=restcolor] (0, 0) rectangle ++(\graywidth, -\rows-1);

    \node at (\graywidth/2, -\rows/2 - 0.25) {$\sRelation$};
    \node at (\graywidth/2, -\rows/2 - 0.75) {$\cdots$};

    \node at (\graywidth + 3.5, 0.5) {$\sEvalValue$};
    
    \node at (\graywidth + 0.5, 0.5) {$w_1$};
    \node at (\graywidth + 1.5, 0.5) {$\ldots$};
    \node at (\graywidth + 1.5, -\rows/2 - 0.5) {$\ldots$};
    \node at (\graywidth + 2.5, 0.5) {$w_{\numberCols}$};
    
    \node at (\graywidth + 4.5, 0.5) {$F_{n_f + 1}$};

    \foreach \y in {0,...,4} {
        \pgfmathsetmacro\r{2*\y}
        \node at (\graywidth + 0.5, -\y - 0.5) {$w_1^{({\pgfmathprintnumber{\r}})}$};
    }
    \foreach \y in {0,...,3} {
        \pgfmathsetmacro\r{2*(4 - \y)}
        \node at (\graywidth + 2.5, -\y - 0.5) {$w_{\ell}^{(d-{\pgfmathprintnumber{\r}})}$};
    }
    \node at (\graywidth + 2.5, - 4 - 0.5) {$w_{\ell}^{(d)}$};

     \foreach \y in {0,...,4} {
        \pgfmathsetmacro\r{1+ \y}
        \node at (\graywidth + 3.5, -\y - 0.5) {$\rho_{\pgfmathprintnumber{\r}}$};
    }
    \node at (\graywidth + 3.5, -5.5) {$0$};
 
    \foreach \y in {0,...,\rows} {
        \node at ( - 0.5, -\y - 0.5) {$\omega^{\y}$};
    }

    \foreach \y in {0,...,4} {
        \node at ( \graywidth + 4.5, -\y - 0.5) {$1$};
    }
    \node at ( \graywidth + 4.5, -5.5) {$0$};

    \draw[red, line width=5pt]
      (\graywidth, -1) -- (\graywidth+\cols+7.5, -1)
      -- (\graywidth+\cols+7.5, -2)
      -- (\graywidth+4, -2)
      -- (\graywidth+4, -3)
      -- (\graywidth+3, -3)
      -- (\graywidth+3, -2)
      -- (\graywidth, -2)
      -- cycle;
    \node at (\graywidth + 6.75, -1.5) {$\polyHornerGate(X, \ldots, \Calpha, \Cbeta)$};

    \draw [decorate,decoration={brace,amplitude=6pt,mirror,raise=4pt},yshift=0pt]
        (\graywidth + 0.1, -6) -- (\graywidth+3.9, -6) node [black,midway,yshift=-0.6cm] {\footnotesize Advice};
    
    \draw [decorate,decoration={brace,amplitude=6pt,mirror,raise=4pt},yshift=0pt]
        (\graywidth+4.1, -6) -- (\graywidth+4.9, -6) node [black,midway,yshift=-0.6cm] {\footnotesize Fixed};

\end{tikzpicture}

    \figcaptionvspace
    \caption{
        Visualization of a Plonkish grid with our extensions for $\ell$ commitments of size $d$.
        \ifdefined\isnotextended
\else
\vspace{-6pt}
\fi
    }
    \label{fig:grid}
\end{figure*}

\subsecspacingtop
\subsection{Efficient Arithmetization for \oursystem}
\subsecspacingbot
\lsec{arithmetization}

In \oursystem, we need to augment
arithmetizations of \sRelation with a gate that evaluates a polynomial over the relevant witness values.
While doing this naively will generally be reasonably efficient, 
in the following we show an optimized approach,
focusing on the Plonkish arithmetization (cf.~\cref{plonkish_arithmetization}) used by Halo2.
In~\Cref{fig:grid:simple,fig:grid}, we visualize the required additions to the Plonkish grid. 
Note that this is not to scale: in practice, grids will have many more rows, and the vast majority of the grid will be dedicated to the original relation \sRelation rather than our additions.

\fakeparagraph{Strawman Approach}
A straightforward approach to arithmetizing a polynomial evaluation would be to express it as the inner product of the witness polynomial and the powers of the evaluation point $\sEvalPoint_0,\ldots,\sEvalPoint^d$.
This requires adding at least three additional advice columns (one each for the witness values, powers of $\sEvalPoint$, and intermediate results).
More generally, if there are too many witness values to fit into a single column, 
this approach requires $2\frac{d}{n}+1$ columns, where $d$ is the number of witness values and $n$ is the number of rows.
In our setting, where witness commitments correspond to large models and already occupy a substantial portion of the grid, the inclusion of the powers of $\sEvalPoint$ effectively doubles the overhead of the polynomial evaluation compared to our approach.
This remains noticeable even against the significant cost of the main proof: 
for example, for the GPT-2 model, our method described below leads to a 22\% reduction in overall prover timer.
\ifdefined\isnotextended
In the extended version\citeextended, we present additional detailed results comparing these approaches.
\else
We refer to \Cref{sec:eval:results} for additional experimental results.
\fi

\fakeparagraph{Horner's Method}
As the additional constraint that we need to add is essentially an evaluation of a polynomial at a specific point, we can utilize an arithmetization based on Horner's method~\cite{Horner1819}.
In order to illustrate this, we first consider a simplified setting, with a single commitment $\scom$ to witness polynomial $\mathbf{w}$ with coefficients \scommittedWitness (i.e., $\ell = 1$).
For this simplified setting, which we visualize in \Cref{fig:grid:simple}, we will also assume that the size $d$ of the witness matches the number of rows $n$ of the Plonkish grid.
We denote the individual elements $\scommittedWitness_i$ as $\scommittedWitness_i^{(0)}, \ldots, \scommittedWitness_i^{(d-1)}$.
Note that we specifically use zero-based indexing here as this is more natural when considering the elements as coefficients of $\spolySingle$.
According to Horner's method, we can then compute
$$
\scommittedWitness^{(0)} + \scommittedWitness^{(1)}\sEvalPoint + \scommittedWitness^{(2)}\sEvalPoint^2 + \scommittedWitness^{(3)}\sEvalPoint^3 + \cdots
$$
as 
$$
 \scommittedWitness^{(0)} 
 + \sEvalPoint \bigg(\scommittedWitness^{(1)} 
 + \sEvalPoint \Big(\scommittedWitness^{(2)}
 + \sEvalPoint \big(\scommittedWitness^{(3)} 
 + \cdots 
 \big) \Big) \bigg).
$$
This latter form enables a convenient recursive computation, that, in order to compute the partial evaluation down to degree $j$ only requires access to the $j$-th coefficient, $\sEvalPoint$, and the \mbox{$j+1$-th} partial evaluation.  
We denote the partial evaluation for the $j$-th degree as $\sEvalValue_j$.
Then, we have the recurrence relation
$$\sEvalValue_{j} = \scommittedWitness^{(j)} + \sEvalPoint  * \sEvalValue_{j+1}  \text{\ with \ } \sEvalValue_d = 0.$$
To express this in the Plonkish grid, we extend the grid with a set of additional columns: 
a selector column $\polyFixed_{n_f + 1}$,
$\polyAdvice_{n_a + 1}$ to store $\scommittedWitness^{(j)}$, and
$\polyAdvice_{n_a + 2}$ to store $\sEvalValue_{j}$. 
The latter polynomial $\polyAdvice_{n_a + 2}$ is parameterized by two extra variables $\Calpha$ and $\Cbeta$ which is used to compute $\sEvalValue_{j}$ depending on the challenge $\sEvalPoint$ from the verifier after the prover sent $\polyAdvice_{n_a + 1}$.
This separation, where the prover commits to $\polyAdvice_{n_a + 1}$ before receiving $\sEvalPoint$, is naturally supported in many proof systems based on Plonkish arithmetization, and several implementations of the Halo2 protocol provide API support for expressing such challenge-dependent computations~\cite{halo2_challenge_api}.
We add copy constraints to ensure that the copies of the witness values correspond to their original occurrences in the arithmetization of \sRelation.
Finally we add a custom gate constraint:
\begin{equation*}
\begin{split}
    &\polyHornerGate(X,\ldots,\polyAdvice_{n_a + 1}(X),\polyAdvice_{n_a + 2}(X),\polyInstance_{n_p + 1}(X), \polyFixed_{n_f + 1}(X)) \\
    = &~\polyFixed_{n_f + 1}(X) \cdot \\
    &~(\polyAdvice_{n_a + 1}(X) + \polyAdvice_{n_a + 2}(X \cdot \omega) \cdot \polyInstance_{n_p + 1}(X) - \polyAdvice_{n_a + 2}(X))
\end{split}
\end{equation*}
The custom gate spans two rows, with $\polyAdvice_{n_a + 2}(X \cdot \omega)$ referring to $\sEvalValue_{j+1}$ in the next row.

\fakeparagraph{Supporting Larger Commitments}
So far, we have assumed that the size $d$ of the commitment $\scommittedWitness$ matches the number of rows $n$ in the plonkish grid.
Where $d$ is smaller, we can trivially pad $\scommittedWitness$  with zeros.
However, if $d$ is larger than $n$,
we need to split $\scommittedWitness$ across multiple advice columns.
A straightforward approach might add a separate of advice column for the intermediate value $\sEvalValue^\prime$ for each witness column, as well as multiple custom gates and selector columns.
However, we can avoid this overhead by combining Horner's method with a (generalized) even-odd decomposition approach.
Specifically, we use the common observation that 
$$
\scommittedWitness^{(0)}
+ \scommittedWitness^{(1)}\sEvalPoint 
+ \scommittedWitness^{(2)}\sEvalPoint^2
+ \scommittedWitness^{(3)}\sEvalPoint^3 
+ \scommittedWitness^{(4)}\sEvalPoint^4 
+ \scommittedWitness^{(5)}\sEvalPoint^5 
+ \cdots 
$$
can be rewritten as 
\begin{equation*}
    \begin{split}
        &\scommittedWitness^{(0)}
        +  \scommittedWitness^{(2)}\sEvalPoint^{2}
        +  \scommittedWitness^{(4)}\sEvalPoint^{2^2}
        +  \cdots \\ 
        + \sEvalPoint  & \left( \scommittedWitness^{(1)}
        + \scommittedWitness^{(3)}\sEvalPoint^2
        + \scommittedWitness^{(5)}\sEvalPoint^{2^2}
        + \ldots  \right) \\
    \end{split}
\end{equation*}
which can be interpreted as a combination of two polynomials in $X^2$.
Combining this with the Horner's method approach, we arrive at
 $$ 
 \left(\scommittedWitness^{(0)} + \sEvalPoint \scommittedWitness^{(1)} \right) 
 +  \sEvalPoint^2 \Bigg( \left(\scommittedWitness^{(2)} + \sEvalPoint \scommittedWitness^{(3)} \right)  
  +   \sEvalPoint^2 \Big( \cdots \Big) \bigg) \Bigg).
$$
which gives rise to the recurrence
$$\sEvalValue_{j} = \left( \scommittedWitness^{(j)} + \sEvalPoint  * \scommittedWitness^{(j+1)}   \right) + \sEvalPoint^2 \sEvalValue_{j+1} $$
where $n$ is the number of rows in the grid and $\sEvalValue_d = 0$.
This is why we split the witness into the columns not based on sequential chunks, but instead based on even and odd terms. %
We can easily adapt our custom gate to compute this new formula by introducing the extra witness column.
This approach generalizes to any number of columns:
instead of splitting the polynomial into even and odd components, we split it modulo $\ceil{\frac{d}{n}}$.

\fakeparagraph{Supporting Multiple Commitments}
Finally, we consider the case with $\ell$ commitments, beginning with the naive approach, then show how this can be extended to an efficient solution for a large number of small commitments, before introducing our optimization for multiple large commitments.
Similar to the naive approach to supporting larger commitments, we can resolve this by adding an advice column (for $\sEvalValue_i$) for each witness column.
This introduces two advice columns per commitment, 
however, in cases where all commitments are small, this is highly inefficient, as the vast majority of each column will be unused.
Instead, if all commitments are sufficiently small, we can more efficiently ``stack'' multiple commitments into a single column, and make use of the same additional advice columns (and the same custom gate) by simply setting ${\sEvalValue_{j+1}}_i$ to zero whenever a new commitment starts.
However, when each commitment might be larger than we can accommodate in a single column (as will generally be the case in \gls{zkml}), we cannot apply this technique.
Instead, our optimization relies on aggregating multiple commitments.
The key insight here is that we can use essentially the same optimized technique we used to handle multiple columns per witness to also handle multiple witnesses.
For this, we introduce the additional challenge variable \Calpha for $\alpha$, and in our custom gate, replace each occurrence of $\scommittedWitness^{(j)}_i$ with
$$ \textstyle\sum_{i=1}^\ell \alpha^i \scommittedWitness_i^{(j)} $$
We visualize our additions to grid in \Cref{fig:grid}.
This approach assumes that each commitment uses the same number of advice columns; for smaller commitments we pad with additional columns to match the required size.
While this method is effective when dealing with large commitments, it may be less efficient in particularly unbalanced scenarios, e.g., when a few commitments are significantly larger than the others.
In such cases, a hybrid approach that combines aggregation with the naive per-column method may offer better efficiency for larger commitments while retaining the benefits of aggregation for the rest.

\fakeparagraph{Masking}
Finally, we consider $\sMask$, which defines the degree-0 polynomial $\spolyMask(X) \coloneqq \sMask$ that masks the result of the polynomial evaluation.
For the vast majority of arithmetizations of \sRelation, there will be suitable empty cells and existing custom gates (e.g., addition or inner products) that we can reuse, in which case we only need to add a single copy constraint to link the computed value of $\sEvalValue_1$ with its copy in the addition.
In the rare cases where it is not possible to integrate this addition into the existing grid, we can add another advice column for $\mu$ that is zero except for one cell, and extend our custom gate to include this column.

\begin{figure*}[!t]
    \includegraphics[width=1.0\textwidth]{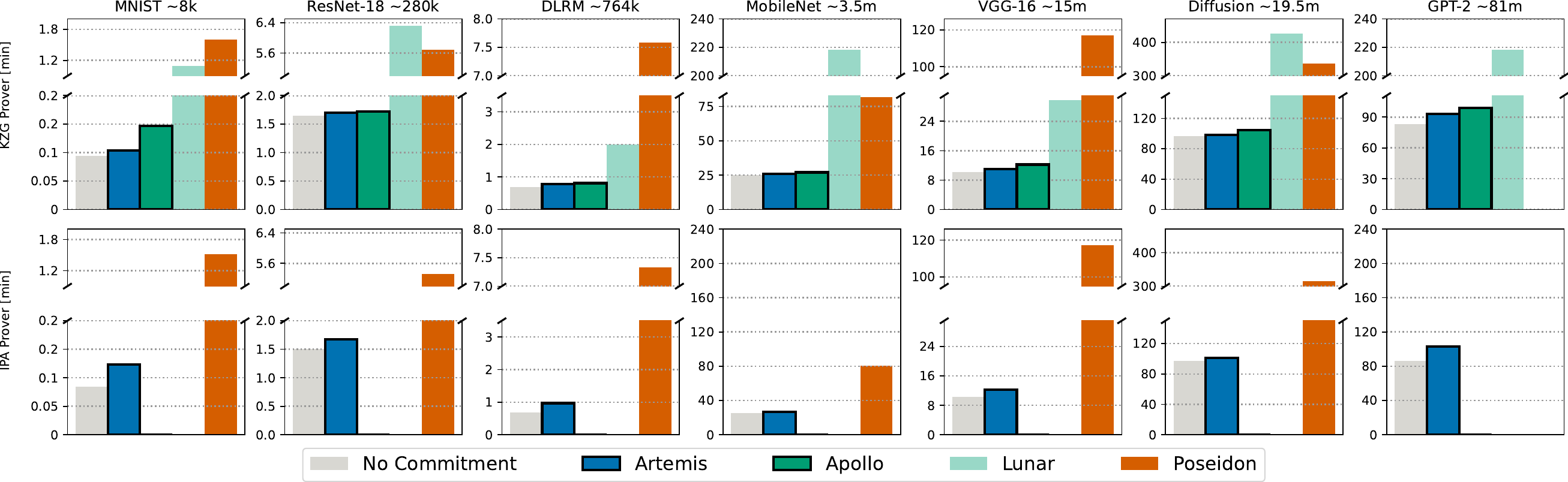}
    \figcaptionvspace
    \caption{
    Prover Time in minutes for KZG-based (top) and IPA-based (bottom) approaches for various models. As Lunar and \ourlunar only support KZG-based instantiations, they are omitted in the bottom row.
    Poseidon fails to scale to \gls{gpt2} as described in~\Cref{sec:eval:results}, and is therefore omitted for this model.
}
    \lfig{eval:prover}
\end{figure*}

\subsecspacingtop
  \section{Evaluation}
\subsecspacingbot
\lsec{eval}

In this section, we evaluate the performance of \oursystem across a range of computer vision and natural language processing models. 
We compare our approach with current state-of-the-art approaches, namely Lunar~\cite{Campanelli2021-asiacrypt} and Poseidon~\cite{Grassi2021-poseidon}, as well as our own Lunar-based baseline, \ourlunar.
Our evaluation shows that our proposed constructions significantly improve the practicality of zkML, particularly in settings involving large-scale models. 
Furthermore, we show that \oursystem achieves low overheads even without relying on trusted setup.

\subsection{Implementation}
\subsecspacingbot
\lsec{implementation}

In addition to implementing our constructions, \ourlunar and \oursystem, we provide the first (to the best of our knowledge) complete implementation of Lunar's \cplink construction~\cite{Campanelli2021-yd}.
We implement all techniques in Rust, as an extension of the Halo2 library~\cite{halo2book}, which includes implementations for KZG-~\cite{Kate2010-px} and IPA-based~\cite{Bowe2019-ug} zero-knowledge proofs.
We instantiate the underlying group with the pairing-friendly BN254 curve for KZG-based proofs and the Pallas curve for the IPA-based proofs.
We make all our implementations and benchmarking configurations available as open-source\footnote{\ifdefined\isnotanon
\href{\githuburl}{\githuburl}
\else
\textit{github.com/anonymous}
\fi}.
Below, we discuss the implementation of each of the approaches we evaluate in more detail:

\newlength{\infplotheight}

\begin{itemize}[topsep=0.5em,label={},leftmargin=0pt,align=left,labelsep=0pt,labelwidth=!]
\setlength\itemsep{8pt}
    \item \gls{no_com}: Our baseline utilizes  the ZKML arithmetizations developed by Chen et al.~\cite{Chen2024-ng} and  does not perform any kind of commitment checks.
    \item \gls{poseidon}:  We use a Poseidon~\cite{Grassi2021-poseidon} gadget provided by the Halo2 library~\cite{halo2book}.
    Note, that for very large models (e.g., \gls{diffusion} or \gls{gpt2}),
    adding this gadget requires more rows than available with curve BN254,
    as a result we switch to curve BLS12381 for these Poseidon baselines.
    \item \gls{cp_link}: We implement Lunar's \gls{cpsnark} construction~\cite{Campanelli2021-yd} for Halo2's Plonkish arithmetization. 
    Specifically, we implement \cplinkone and \cplinktwo from \cite{Campanelli2021-yd}.
    We use Halo2's underlying finite field library \texttt{ff}.
    \cplink relies heavily on division of vanishing polynomials on a subset of the evaluation domain, which is not directly supported by Halo2's polynomial implementation.
    Therefore, we extend this implementation with support for efficient FFT-based polynomial division to ensure competitive performance of \cplink.
    
    \item \gls{cp_link+}: We implement \ourlunar (cf. \Cref{sec:design:ourlunar}) which
    performs the alignment of the witness in the arithmetization using a small set of extra columns and copy constraints, resulting in a significantly more efficient \cplink.
    The implementation otherwise uses the same approach as Lunar.
    
    \item \gls{poly}: For \oursystem (cf.~\Cref{sec:design:oursystem}) we use Halo2's standard implementation of polynomial commitments and implement the arithmetization of polynomial evaluation using Horner's method (cf.~\Cref{sec:arithmetization}) as a gadget in the Halo2 library.    
    We leverage the Halo2 Challenge API, which allows the definition of custom gates parameterized by verifier challenges~\cite{halo2_challenge_api}.
   
\end{itemize}

Lunar and \ourlunar assume that the external commitment(s) fit into the proof's evaluation domain.
However, this is not the case for several of the larger models in our evaluation.
Therefore, we split the external commitment into column-sized chunks where necessary, side-stepping the overhead of extending the evaluation domain for the entire proof.
Lunar can only show shift-and-link on contiguous, in-order column-wise chunks of the witness.
While the witness layout used by Chen et al.'s ZKML~\cite{Chen2024-ng} is mostly contiguous, it uses a row-wise layout which substantially increases the number of shifting proofs required.
Therefore, we transpose the witness layout in order to make Lunar's approach viable.

\subsection{Experimental Setup}
We evaluate the prover time, verifier time and proof size for Halo2-based zkML inference proofs with a commitment to the model for a wide range of different models.
We perform the evaluation on AWS EC2 instances running Ubuntu 24.04,
with different instances used depending on the requirements for each model.
Below, we briefly describe the models we consider in our evaluation
and the instances used for each.
\begin{enumerate}[topsep=0.5em,leftmargin=0pt,align=left,labelsep=3pt,labelwidth=!,label=\textbf{\arabic*$\mathbf{)}$}]
\setlength\itemsep{8pt}
    \item \gls{mnist}:  A minimal CNN~\cite{Grimov2018-minmnist} with $8.1$K parameters and $444.9$K FLOPs, trained on the MNIST image classification task, evaluated on an r6i.8xlarge instance (32 vCPUs, 256 GB RAM).
    
    \item \gls{resnet18}: An image classifier~\cite{He2016-resnet} trained on CIFAR-10, with $280.9$K parameters and $81.9$M FLOPs, evaluated on an r6i.8xlarge instance  (32 vCPUs, 256 GB RAM).

    \item \gls{dlrm}: A deep learning recommendation model~\cite{Facebook2019-dlrm}, with $764.3$K parameters and $1.9$M FLOPs, evaluated on an r6i.8xlarge instance  (32 vCPUs, 256 GB RAM).

    \item \gls{mobilenet}: A mobile-optimized image classifier~\cite{Sandler2018-mobilenet} trained on ImageNet, with $3.5$M parameters and $601.8$M FLOPs, evaluated on an r6i.16xlarge instance (64 vCPUs, 512 GB RAM).
    
    \item \gls{vgg}: A CNN with $15.2$M parameters and $627.9$M FLOPs, trained on CIFAR-10~\cite{Simonyan2014-vgg}, evaluated on an r6i.32xlarge instance (128 vCPUs, 1024 GB RAM).
    
    \item \gls{diffusion}: A small text-to-image Stable Diffusion model~\cite{Rombach2021-diffusion}, with $19.5$M parameters and $22.9$B FLOPs, evaluated on an x2idn.metal instance (128 vCPUs, 2048 GB RAM).
    
    \item \gls{gpt2}: A distilled transformer-based language model optimized for inference~\cite{Radford2019-gpt2}, with $81.3$M parameters and $188.9$M FLOPs, evaluated on an r6i.32xlarge instance (128 vCPUs, 1024 GB RAM).
\end{enumerate}

\begin{table*}[t]
                            \centering
                            \setlength{\tabcolsep}{2pt}
                            \begin{tabular}{ll*{7}{c}*{7}{c}*{7}{c}}
                            \toprule
                            & & 
                            \multicolumn{7}{c}{Proof Size (kB)} & 
                            \multicolumn{7}{c}{Verifier Time (ms)} & 
                            \multicolumn{7}{c}{Prover Memory (GB)} \\
                            \cmidrule(lr){3-9} \cmidrule(lr){10-16} \cmidrule(lr){17-23}
                            &  & MNST & R18 & DLRM & Mob & VGG & Diff & GPT
                                    & MNST & R18 & DLRM & Mob & VGG & Diff & GPT
                                    & MNST & R18 & DLRM & Mob & VGG & Diff & GPT \\
                                                     \midrule
                                \multirow{5}{*}{\rotatebox{90}{KZG}} & No Com. & 9 & 14 & 5 & 18 & 16 & 32 & 15 & 7 & 9 & 6 & 12 & 14 & 21 & 14 & 0.53 & 11.70 & 5.61 & 179 & 84.41 & 585 & 609 \\
     & Artemis & 10 & 15 & 6 & 18 & 17 & 33 & 15 & 8 & 10 & 6 & 13 & 13 & 19 & 14 & 0.60 & 12.00 & 6.33 & 184 & 94.60 & 605 & 679 \\
     & Apollo & 11 & 15 & 6 & 18 & 18 & 33 & 16 & 34 & 18 & 24 & 20 & 50 & 38 & 39 & 0.56 & 11.87 & 5.94 & 181 & 90.44 & 597 & 643 \\
     & Lunar & 16 & 25 & 12 & 32 & 27 & 53 & 24 & 97 & 145 & 87 & 192 & 158 & 285 & 130 & 0.53 & 11.68 & 5.62 & 179 & 84.42 & 586 & 607 \\
     & Poseidon & 12 & 16 & 10 & 21 & 15 & 25 & - & 9 & 10 & 8 & 13 & 15 & 41 & - & 9.97 & 20.73 & 28.66 & 417 & 633 & 1643 & - \\
\midrule
    \multirow{5}{*}{\rotatebox{90}{IPA}} & No Com. & 10 & 16 & 7 & 19 & 17 & 34 & 17 & 36 & 351 & 347 & 3.5k & 839 & 5.5k & 6.4k & 0.39 & 8.82 & 4.34 & 159 & 74.40 & 532 & 540 \\
     & Artemis & 12 & 18 & 9 & 21 & 20 & 36 & 19 & 38 & 359 & 356 & 3.6k & 973 & 5.6k & 6.6k & 0.45 & 9.13 & 5.13 & 163 & 81.78 & 546 & 589 \\
     & Apollo & - & - & - & - & - & - & - & - & - & - & - & - & - & - & - & - & - & - & - & - & - \\
     & Lunar & - & - & - & - & - & - & - & - & - & - & - & - & - & - & - & - & - & - & - & - & - \\
     & Poseidon & 14 & 18 & 11 & 23 & 17 & 23 & - & 357 & 778 & 1.5k & 7.0k & 10.4k & 20.9k & - & 7.83 & 18.35 & 24.72 & 371 & 556 & 1453 & - \\
                            \bottomrule
                            \end{tabular}
                            \vspace{1em}
                            \caption{
                            Proof size, verifier time and prover memory for KZG-based and IPA-based approaches for \gls{mnist} (MNST),
                            \gls{resnet18} (R18),
                            \gls{dlrm} (DLRM),
                            \gls{mobilenet} (Mob),
                            \gls{vgg} (VGG),
                            \gls{diffusion} (Diff), and
                            \gls{gpt2} (GPT).
                            Poseidon fails to scale to GPT-2 as
described in~\Cref{sec:eval:results}, and is therefore omitted for this model. Lunar and \ourlunar do not support IPA commitments and are therefore omitted there.
                           }
\label{tab:results}
                            \end{table*}

\subsection{Results}
\label{sec:eval:results}

In the following, we present the results of our experimental evaluation.
In ~\Cref{fig:eval:prover} we report prover times, while we report prover memory, proof sizes, and verifier times in~\Cref{tab:results}.
\fakeparagraph{Prover Overhead}
We begin by discussing prover overhead (cf.~\Cref{fig:eval:prover}), which is by far the most important metric when considering the practicality of \gls{zkml}.
For \gls{poseidon}, the overhead of recomputing the commitment inside the \gls{snark} results
in a significant overhead that scales roughly linearly in the model size, ranging from 3.3x-17.1x for KZG, and from 3.2x-17.8x for IPA compared to the baseline (\gls{no_com}).
Note that, for \gls{gpt2}, \gls{poseidon} was unable to complete successfully because of memory requirements beyond the 4096GB on the largest AWS instance available to us.
The approach of \gls{cp_link} using the internal witness commitment of the \gls{snark} reduces the overhead for some models.
However, as the number of \cplink proofs scales with the number of witness-containing columns, \gls{poseidon} outperforms \gls{cp_link} for models whose architecture results in a large number of columns relative to the number of weights.
In general, both approaches remain prohibitively expensive, especially for larger models.

In comparison, our constructions \ourlunar and \oursystem outperform the related approaches across all configurations,
introducing an overhead of only 1.01x-1.15x for KZG and 1.03x-1.45x for IPA.
These approaches only require adaptations to the arithmetization and the proof system that are very concretely efficient.
\gls{cp_link+} is significantly faster than \gls{cp_link}, because the alignment of the witness using
copy constraints in the arithmetization obviates the need for shifting proofs.
\oursystem consistently outperforms \ourlunar when both are using KZG commitments.
\ifdefined\isnotextended
\else
Note that our Horner's method based approach to the internal polynomial evaluation is essential in capturing the concrete performance edge:
the inner-product based approach can add significant overhead to the internal evaluation depending on the model,
with overall prover time slowdowns ranging from 6\% for \gls{mobilenet} up to 20\% for \gls{vgg}).
As a result, \ourlunar actually slightly outperforms the inner-product based strawman for some models (e.g., \gls{vgg}, \gls{resnet18}),
while our Horner's method based \oursystem always outperforms \ourlunar.
\fi
More importantly, \oursystem offers very similar prover times whether using KZG or IPA commitments (without trusted setup), a setting which \gls{cp_link} and \gls{cp_link+} do not support.

We present prover memory costs in \Cref{tab:results}.
In general, larger models require significantly more memory to prove, even in the \gls{no_com} baseline.
\gls{poseidon} introduces a prohibitively high memory overhead (nearly 3x for \gls{diffusion}) which, as mentioned above, prevents it from running on common AWS EC2 instances for \gls{gpt2}.
The other three constructions, meanwhile, incur roughly similar and significantly smaller memory overheads (at most 1.18x).
\gls{cp_link+} incur slightly more memory overheads than \gls{cp_link} as it introduces additional columns to the arithmetization.
\gls{poly}, in turn, adds another set of columns for the polynomial evaluation and therefore incurs another slight memory overhead (1.02x-1.18x).
None of these overheads are significant in practice, especially compared to the prohibitive overheads of \gls{poseidon}.

\fakeparagraph{Verifier}
We present the verifier times in~\Cref{tab:results}.
KZG-based proof systems provide a verification time constant in the size of the witness.
However, even for the baseline (\gls{no_com}) the verifier times for different models still vary, because the different model output size result in different proof instance sizes.
Similarly, merely adding the commitments to the instance increases the KZG verifier time.
However, the vast majority  of the differences in verifier time between the different approaches are due to the additional checks that (some of) the approaches introduce.
In contrast, verifier times for IPA scale with the size of the witness so we expect slower verification times in general.

For KZG, \gls{poseidon} shows a negligible increase in verification time as it only adapts the arithmetization of the relation and not the \gls{snark}, resulting in a tiny increase in verification time due the addition of the commitment to the public input.
\gls{cp_link} (which only supports KZG) increases the verification time compared to \gls{no_com} significantly (9.4x-15.5x),
as it requires a linear number of additional pairing operations to verify the \cplink proofs.
Although \gls{cp_link+} (which also only supports KZG) reduces the number of required pairing operations compared to \gls{cp_link},
the verification overhead is, in some configurations, still significant (1.2x-4.7x).
\gls{poly}, on the other hand, requires only two additional pairing operations when using KZG, resulting in a negligible overhead in verification time (1.0x-1.1x).
For IPA, \gls{poly}'s verification overhead is also relatively low (at most 1.2x) and significantly lower than
 \gls{poseidon}, for which we observe a considerable increase in verifier time (2x-12x), due to the large size of its arithmetization, which impacts IPA verification times.

\fakeparagraph{Proof Size}
While not of primary concern for most \gls{zkml} applications, we report proof sizes in~\Cref{tab:results} for completeness.
In general, proof sizes are very small (a few dozen kB at most) for the baseline (\gls{no_com}) across all models.
Furthermore, the overhead of adding commitment verification is generally low across all approaches.

\subsecspacingtop
\section{Conclusion}
\subsecspacingbot
\lsec{discussion}
In this paper, we show that \oursystem significantly advances upon the state-of-the-art for (zero-knowledge) \acrfullpl{cpsnark}, removing the need for trusted setups inherent in the most competitive existing works
and demonstrating significant concrete performance gains, especially in terms of prover overhead.
While \oursystem is a generic \gls{cpsnark}, it is particularly well-suited to addressing zkML applications,
where proofs of inference need to verify the consistency of the internal proof witness with an external commitment to the (potentially very large) model.
Existing approaches introduced significant overheads that made \gls{zkml} impractical for all but the smallest models.
With our evaluation, we show that, with \oursystem, it is possible to apply \gls{zkml} with commitment verification to large models of real-world interest.

\ifdefined\isnotanon
\section*{Acknowledgements}
We would like to thank Christian Knabenhans for his insightful feedback.
We would also like to thank an anonymous reviewer for pointing out an error in the \oursystem protocol in an earlier version of the paper.
This work was partially supported by an unrestricted gift from Google and by the Natural Sciences and Engineering Research Council of Canada (NSERC) through a Discovery Grant.

\else
\fi

\bibliographystyle{plain}
\interlinepenalty=10000 %
\bibliography{references,additional_references,extended}

\ifdefined\isnotextended
\vspace{1em}
\noindent
We refer to the extended version of the paper\citeextended for the remaining appendices.
\else

\fi

\begin{appendices}

  \crefalias{section}{appendix}%
  \crefname{appendix}{Appendix}{Appendices}%
  \Crefname{appendix}{Appendix}{Appendices}%

\section{Polynomial Blinding}
\label{apx:blinding}
To achieve zero-knowledge, many \gls{piop} schemes use blinding to ensure the polynomial evaluations do not reveal information about the witness polynomials.
In the Plonk protocol~\cite{Gabizon2019-plonk}, blinding is performed by masking the witness polynomial $w(X)$ with a degree-$k$ blinding polynomial $b(X)$ together with the vanishing $Z_D$ over a domain $D$:
$$
w^\prime(X) = w(X) + b(X) \cdot Z_D(X).
$$
This method, however, can be inefficient in practice, as it increases the degree of the polynomial.
Specifically, the resulting polynomial $w^\prime(X)$ has degree $2^n + k$, which must be padded to the next power of two.
Applying the blinding polynomial results in a polynomial of degree $2^n + k$, which must be padded to the next power of 2.
To avoid this overhead, the Halo2 protocol instead appends
$k+1$ random blinding values to the witness polynomial before padding it to the next power of 2~\cite{halo2book}.
This achieves a similar zero-knowledge guarantee as the masking polynomial approach,
while avoiding additional padding---provided the original witness polynomial has enough unused slots to accommodate the blinding values.

\begin{lemma}[Halo2 Blinding~\cite{halo2blindingblog,Pearson-Plonkup2022}]
    \label{thm:blinding}
    Let $\omega \in \sfield$ be an $n = 2^k$-th primitive root of unity forming the domain
    $
    D = \{ \omega^0, \omega^1, \ldots, \omega^{n-1} \}.$,
    and let $\{L_i\}_{i=1}^{n-1}$ be the Lagrange basis over $D$
    where $L_i(\omega^i) = 1$ and $L_i(\omega^j) = 0$ for all other elements $\omega^j \in D$.
    For a column of private inputs $\{ w_1, \ldots, w_{d_w} \}$, let $w$ be the (unblinded) witness polynomial, computed by interpolating the private inputs over $D$:
    \ifdefined\isnotextended
    $
    w(X) = \sum_{i=1}^{d_w} w_i \cdot L_{i-1}(X).
    $
    \else
    $$
    w(X) = \sum_{i=1}^{d_w} w_i \cdot L_{i-1}(X).
    $$
    \fi
    Let $p+1$ blinding factors $\{ b_1, \ldots, b_{p+1} \}$ be chosen uniformly at random from $\sfield$, and let
    \ifdefined\isnotextended
    $
    w^\prime(X) = w(X) + \sum_{i=1}^{p+1} b_i \cdot L_{d_w+i-1}(X).
    $
    \else
    $$
    w^\prime(X) = w(X) + \sum_{i=1}^{p+1} b_i \cdot L_{d_w+i-1}(X).
    $$
    \fi
    Then the vector of openings $\bigl(w^\prime(q_1), w^\prime(q_2), \ldots, w^\prime(q_{p})\bigr)$ reveals no information about the witness polynomial $w(X)$,
    where $\{ q_1, q_2, \ldots, q_{p} \} \subset \sfield \setminus D$ are a set of challenge points.

    \end{lemma}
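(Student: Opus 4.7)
The plan is to prove perfect hiding by showing that the distribution of $(w'(q_1), \ldots, w'(q_p))$ over the random blinding factors $(b_1, \ldots, b_{p+1})$ is uniform on $\sfield^p$, independent of the witness. Writing $B(X) := \sum_{i=1}^{p+1} b_i L_{d_w+i-1}(X)$, we have $w'(q_j) = w(q_j) + B(q_j)$, so the evaluation vector equals the deterministic quantity $(w(q_1), \ldots, w(q_p))$ shifted by $(B(q_1), \ldots, B(q_p))$. Since a uniform shift of any fixed vector is uniform, it suffices to prove that the linear map $\phi : \sfield^{p+1} \to \sfield^p$ defined by $\phi(b) := (B(q_1), \ldots, B(q_p))$ is surjective.

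Let $V := \mathrm{span}(L_{d_w}, L_{d_w+1}, \ldots, L_{d_w+p}) \subset \sfield[X]$. Since the Lagrange basis polynomials are linearly independent, $\dim V = p+1$, and $\phi$ factors as the isomorphism $\sfield^{p+1} \to V$ sending $b \mapsto \sum_i b_i L_{d_w+i-1}$ followed by evaluation $\mathrm{ev} : V \to \sfield^p$ at the pairwise distinct points $q_1, \ldots, q_p$. Surjectivity of $\phi$ thus reduces to showing $\ker(\mathrm{ev})$ has dimension exactly $1$. Any $P \in V$ vanishes on the $n - (p+1)$ elements of $D \setminus \{\omega^{d_w}, \ldots, \omega^{d_w+p}\}$; if $P$ additionally vanishes on the $p$ queries $q_j \notin D$, then $P$ has at least $n-1$ distinct roots. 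Since $\deg P \leq n-1$, such a nonzero $P$ is uniquely determined up to scalar by the factorization into its $n-1$ linear factors, forcing $\dim \ker(\mathrm{ev}) \leq 1$.

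To establish $\dim \ker(\mathrm{ev}) \geq 1$, I would exhibit an explicit nonzero kernel element,
\[
Q(X) := \prod_{j=1}^{p}(X - q_j) \cdot \prod_{k \in \{0,\ldots,n-1\} \setminus \{d_w,\ldots,d_w+p\}}(X - \omega^k),
\]
which has degree $n-1$, lies in $V$ since it vanishes on the correct domain points, and vanishes at every $q_j$ by construction. Rank--nullity then yields $\mathrm{rank}(\phi) = (p+1) - 1 = p$, so $\phi$ is surjective and the proof concludes.

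The main obstacle is clarifying two hypotheses left implicit in the lemma statement: the queries $q_1, \ldots, q_p$ must be \emph{pairwise distinct} (otherwise repeated queries enforce equalities in the output vector and the distribution cannot be uniform on $\sfield^p$), and the blinding construction requires $d_w + p + 1 \leq n$ so that the Lagrange indices $d_w, \ldots, d_w + p$ lie within $\{0, \ldots, n-1\}$. I would add both as standing assumptions at the start of the proof; neither is restrictive in the intended application, where $p$ is the small number of verifier queries and the witness polynomial leaves ample room in the evaluation domain.
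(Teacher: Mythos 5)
Your proof is correct. The paper does not give its own proof of this lemma: it defers to Proposition~2 of the cited Pearson et al.\ reference and only records the intuition that the blinding summand $\sum_i b_i L_{d_w+i-1}(X)$ vanishes on the witness sub-domain $D^\prime$ and can therefore be factored as $b(X)\cdot Z_{D^\prime}(X)$, so that $w^\prime$ agrees with $w$ on $D^\prime$ but is ``uniformly random everywhere else.'' Your argument makes that sketch rigorous and self-contained by a direct rank computation: the opening vector is a fixed vector plus the image of the uniform blinding vector under the linear map $\phi$, and $\phi$ is surjective because its kernel consists of polynomials of degree at most $n-1$ forced to vanish at $n-1$ prescribed distinct points, hence is at most one-dimensional. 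This yields the exact statement of perfect hiding (the opening vector is uniform on $\sfield^p$, independent of $w$) rather than an appeal to an external proposition. Two small remarks. First, the explicit kernel element $Q$ is not actually needed: $\dim\ker(\mathrm{ev})\le 1$ already gives $\operatorname{rank}(\phi)\ge p$, which combined with the trivial bound $\operatorname{rank}(\phi)\le p$ from the codomain yields surjectivity, so the second half of your argument can be dropped. Second, the two hypotheses you flag are indeed implicitly assumed and worth stating: the $q_j$ are pairwise distinct because they are given as a set of challenge points in $\sfield\setminus D$, and $d_w+p+1\le n$ holds because Halo2 only appends blinding values to slots of the evaluation domain left unused by the witness; without the latter the blinding indices would not even be well defined.
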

We refer to~\cite[Proposition 2]{Pearson-Plonkup2022} for a proof of this theorem.
The intuition behind the proof is that the interpolated blinding values vanish on $D^\prime = \{ 1, \omega, \ldots, \omega^{d_w} \}$,
and thus can be rewritten as the combination of a degree-$k$ polynomial and the vanishing polynomial on $D^\prime$, i.e., $b(X) \cdot Z_{D^\prime}(X)$.
As a result, $w^\prime(X)$ agrees with $w(X)$ on $D^\prime$, but is uniformly random everywhere else, because it is masked by $b(X)$.

\subsection{Security Proof for \oursystem}
\label{sec:proof}
We now show that \oursystem is a \gls{cpsnark}.
A technicality in the proof is that for knowledge soundness,
our extractor must be able to extract the randomness of the individual commitments,
even though we only have a single evaluation proof that is masked by a random value.
To do so, our extractor internally invokes the extractor of \PCScheme several times to reconstruct the randomness from
different evaluation proofs.

\begin{theorem}[\oursystem \gls{cpsnark}]
\ifdefined\isnotextended
    Let $\mathcal{F}$ be a field family and $\sRelation$ be the Halo2 indexed relation (cf.~\Cref*{def:halo2indexedrelation} in the extended version\citeextended), and
    \else
    Let $\mathcal{F}$ be a field family and $\sRelation$ be the Halo2 indexed relation (cf.~\Cref{def:halo2indexedrelation}), and
    \fi
    let $\IOPScheme=(\IOPI,\IOPProver,\IOPVerifier)$ be a knowledge-sound Halo2 \gls{piop} for \sRelation;
    $\ZKPCScheme$ be a polynomial commitment scheme over $\mathcal{F}$ with binding and extractability;
$\PCScheme$ be an additively homomorphic polynomial commitment scheme over $\mathcal{F}$ with binding and extractability;
    Then the construction 
    \ourcompiler $\ARTScheme = (\ARTIndexer, \ARTProver, \ARTVerifier)$ in~\cref{fig:compiler:artemis} is a \gls{cpsnark} for the relation \sRelation and commitment scheme \PCScheme.
    If \IOPScheme is zero-knowledge and \ZKPCScheme and \PCScheme are hiding, then \ARTScheme is zero-knowledge.
\end{theorem}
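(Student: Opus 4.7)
My plan is to establish the four \gls{cpsnark} properties---completeness, knowledge soundness, succinctness, and (conditionally) zero-knowledge---in that order, treating the construction as a compiled PIOP with an added linking phase. Completeness is a direct unrolling of the definitions: an honest prover with $(\indexI,\sInstance,\sWitness)\in\sRelation$ and valid decommitments $\srandListCompact$ produces, via \hornerW, an augmented witness satisfying the Horner-augmented index $\indexI^\prime$; the additive homomorphism of \PCScheme makes $\scom_\sMask + \sum_i \scom_i\cdot\alpha^i$ a commitment to $\spolyMask + \sum_i \spolySingle\cdot\alpha^i$ under randomness $\sRandomness_\sMask + \sum_i \sRandomness_i\cdot\alpha^i$, which is precisely the polynomial that the prover opens at \sEvalPoint to value \sEvalValue; completeness of \IOPScheme, \ZKPCScheme, and \PCScheme then ensure acceptance. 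Succinctness is inherited from the compiled PIOP, since the linking phase contributes only $O(1)$ additional group elements and opening proofs.

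\textbf{Knowledge soundness.} I build an extractor \extCompiler that, on an accepting transcript for $(\sInstance,\scomListCompact)$, outputs $(\sWitness,\srandList)$ satisfying $\sRelationCP$. First, \extCompiler invokes \extPIOP on the PIOP sub-transcript to recover an augmented witness \sWitnessFull satisfying $\indexI^\prime$; by the Horner arithmetization and the copy constraints, the cell $\polyAdvice_{\hornerIndexAdvice}$ at position $\hornerIndexCell$ then stores $\sEvalValue = (\spolyMask' + \sum_i \spolySingle\cdot\alpha^i)(\sEvalPoint)$, with $\spolyMask'$ the mask embedded in the extracted witness. Second, \extCompiler applies \extPC to \piCommitment on the aggregated external commitment $\scom = \scom_\sMask + \sum_i \scom_i \cdot\alpha^i$ to obtain $(\mathbf{w}^\ast,\sRandomness^\ast)$ consistent with $\scom$; by \PCScheme binding and homomorphism, $\mathbf{w}^\ast$ decomposes uniquely as $\spolyMask^\ast + \sum_i \mathbf{w}_i^\ast\cdot\alpha^i$, where $(\spolyMask^\ast,\mathbf{w}_i^\ast)$ are the (unique) polynomials committed by $\scom_\sMask,\scom_i$. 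To recover the individual randomnesses $\sRandomness_i$, I rewind the adversary and collect $\ell+1$ accepting transcripts with linearly independent \IOPScheme challenges $\alpha^{(0)},\ldots,\alpha^{(\ell)}$, then invert the Vandermonde system on the $(\mathbf{w}^{\ast(k)},\sRandomness^{\ast(k)})$. Finally, to argue that the witness copies inside \sWitnessFull coincide with the externally committed $\mathbf{w}_i^\ast$, I use a two-stage DeMillo--Lipton--Schwartz--Zippel argument: both the internal Horner check and the external opening force the identity $\spolyMask' + \sum_i \spolySingle\cdot\alpha^i \equiv \spolyMask^\ast + \sum_i \mathbf{w}_i^\ast\cdot\alpha^i$ to hold at a uniformly random \sEvalPoint and across $\ell+1$ independent $\alpha$'s, so the polynomials agree, yielding $\sWitness$ consistent with $\srandListCompact$ up to the usual knowledge-error sum.

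\textbf{Zero-knowledge.} Under ZK of \IOPScheme and hiding of \ZKPCScheme and \PCScheme, I construct \simCompiler as follows: it produces the early commitment $\scom_\sMask$ via \simPCCommit without knowing \sMask; it samples $\sEvalValue \sample \sfield$ uniformly (matching the real distribution because $\sMask$ is uniform, independent of $\alpha,\sEvalPoint$, and appears additively in \sEvalValue); it invokes \simPIOP to simulate the online and query phases of the Halo2 PIOP on the augmented relation conditioned on this \sEvalValue; and it produces the two opening proofs $\piInternal,\piCommitment$ via \simZKPCOpen and \simPCOpen consistent with \sEvalValue. Indistinguishability follows from a standard hybrid argument: (i) replace real $\scom_\sMask$ by a simulated one (hiding of \PCScheme), (ii) replace the real \sEvalValue by a uniform sample (information-theoretic, since \sMask is uniform and used only once), (iii) replace the real PIOP and opening transcripts by simulated ones (ZK of \IOPScheme and hiding of both PC schemes).

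\textbf{Main obstacle.} The most delicate step will be the per-commitment randomness extraction inside the knowledge-soundness argument: since the linking phase reveals only a single opening of the aggregated commitment, the individual $\sRandomness_i$ are not directly extractable and must be recovered via rewinding on fresh \IOPScheme challenges followed by Vandermonde inversion, all while carefully bounding both the forking-style rewinding failure and the probability that the sampled $\alpha^{(k)}$ fail to be linearly independent. A related bookkeeping difficulty is packaging the overall knowledge error as a clean sum of the \IOPScheme, \ZKPCScheme, and \PCScheme soundness terms plus the two independent Schwartz--Zippel contributions in $\alpha$ and $\sEvalPoint$; this is routine but must be tracked carefully to obtain a meaningful concrete bound.
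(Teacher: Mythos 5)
Your completeness and knowledge-soundness arguments essentially match the paper's: extract the internal polynomials via \extZKPC, extract the aggregated external polynomial via \extPC, argue equality of the two combined polynomials via Schwartz--Zippel at the uniformly random $\sEvalPoint$, and recover the individual commitment randomnesses by rewinding over $\ell+1$ values of $\alpha$ and inverting the resulting Vandermonde/interpolation system --- this is exactly the paper's extractor, up to the order in which the sub-extractors are invoked, and your accounting of the error terms is consistent with the paper's union bound.

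The zero-knowledge argument, however, has a genuine gap. You simulate $\scom_\sProtocolRandomness$ via \simPCCommit and the external opening proof \piCommitment via \simPCOpen, i.e., you implicitly assume an equivocation trapdoor for the \emph{external} scheme \PCScheme. No such trapdoor exists in this setting: the commitment key \ck of \PCScheme is part of the index (the external commitments \scomList were produced against it before the proof), so the simulator's setup only programs $\ckp$ for the internal scheme \ZKPCScheme. Hiding of \PCScheme does not let you forge an opening of $\scom = \commitmentCombination$ at $\sEvalPoint$ to a chosen value without knowing the committed polynomials $\mathbf{w}^*_i$, which the simulator does not possess. The paper resolves this with a telescoping trick: the simulator samples $\rho$, forms a commitment $\scom_\rho$ to the constant polynomial $\rho$ with known randomness, and sets $\scom_\sProtocolRandomness \coloneqq \scom_\rho - \sum_{i=1}^{\ell} \scom_i \cdot \alpha^i$, so that the aggregate collapses to $\scom_\rho$ and a \emph{real} \PCProve proof can be produced (the simulator knows $\alpha$ in advance since it generates the challenges). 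A secondary omission: you do not argue that the additional internal opening of $\p_{\hornerIndexI,\hornerIndexJ}$ at \hornerIndexCell reveals nothing about the witness in the real distribution; the paper handles this by having the Horner column carry one extra blinding value so that the $\queryBound+1$ revealed evaluations remain uniformly distributed.
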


\begin{proof}
    \ARTScheme satisfies the properties of a \gls{cpsnark}:
    \fakeparagraphnovskip{Completeness}
    It follows from the correctness of the Horner's method gate, and the homomorphic and completeness properties of
    \ZKPCScheme and \PCScheme.
    In particular, from the correctness of \horner, the adapted index-instance-witness pair $(\indexI^\prime, x, \sWitness^\prime)$ is in $\sRelation$ and it holds that $\pol{p}_{\hornerIndexI,\hornerIndexJ}(\hornerIndexCell) = \sEvalValue
    = (\spolyMask + \sum_j \spolySingle \cdot \alpha^j)(\sEvalPoint)$ for challenges $\alpha,\sEvalPoint$ generated in the protocol.
    Hence, $\verifier$ accepts the opening proof for $\scomProof_{\hornerIndexI,\hornerIndexJ}$ corresponding to $\p_{\hornerIndexI,\hornerIndexJ}$ at $(\hornerIndexCell, \rho)$ because of the completeness of \ZKPCScheme.
    Further, since \PCScheme is homomorphic, it holds that
    \begin{equation*}
        \begin{aligned}
            \scom & = \scom_\sProtocolRandomness + \textstyle\sum_{j=1}^\ell \scom_j \cdot \alpha^j                                                                                                                          \\
                  & = \PCCommit(\ck, \spolyMask, \pcDbound, \sRandomness_\sProtocolRandomness) \\
            &\quad  + \textstyle\sum_{i=1}^\ell \PCCommit(\ck, \spolySingle, \pcDbound, \sRandomness_i) \cdot \alpha^i                           \\
                  & = \PCCommit(\ck, \spolyMask + \textstyle\sum_{i=1}^\ell \spolySingle \cdot \alpha^i, \pcDbound, \sRandomness_\sProtocolRandomness + \textstyle\sum_{i=1}^\ell \sRandomness_i \cdot \alpha^i)
        \end{aligned}
    \end{equation*}
    Hence, the opening proof of the \PCScheme for \scom evaluates to $\rho$ at $\beta$ due to the homomorphic property of the scheme.
    $\verifier$ accepts because of the completeness of \PCScheme.

    \fakeparagraphnovskip{Knowledge Soundness}
    Our goal is to extract a witness $(\sWitness, \srandList)$ that satisfies the relation \sRelationCP given an instance $(\sInstance, \scomList)$
    and index (\indexI, \sIndexSetCom, \ck)
    by interacting with a potentially cheating prover \adv{} that convinces the verifier \sverifier with non-negligible probability.
    Specifically, $(\sWitness,\srandList)$ such that $(\indexI, \sInstance, \sWitness) \in \sRelation$
    and $\scom_i$ opens to \spolySingle with randomness $\scomRandomness_i$ for all $i \in [\ell]$,
    where \spolySingle is defined by interpreting $\sWitness_i$ as a coefficient vector, following the split of $\mathcal{D}_w$.
    At a high level our extractor \extCompiler works as follows:
    \begin{enumerate}
        \item Extract the polynomials $\tilde{\p}$ from the polynomial commitments sent at each round through the extractor
for the polynomial commitments \extZKPC for \ZKPCScheme.
        \item Extract the polynomial $\tilde{\mathbf{w}}^*$ and randomness $\tilde{\sRandomness}^*$ from the external commitment $\scom$ using the extractor \extPC for \PCScheme,
        and check that it equals the polynomials defined using the witness values in $\tilde{\p}$.
        \item Extract the witness $\tilde{\sWitness}^\prime$ adapted with the Horner's gate advice polynomials using the \IOPScheme extractor \extPIOP.
        The adapted witness includes the witness $\tilde{\sWitness}$ for the original index-instance-witness triple.
        \item Extract the randomness of the commitments $\tilde{\sRandomness}_1, \ldots, \tilde{\sRandomness}_\scomCount$ by rewinding the prover to Step 1 with $\ell + 1$ distinct challenges for $\Calpha$ and \Cbeta to interpolate the masked polynomial $\tilde{\sRandomness}^*$ through the output of \extPC on the evaluation proof for \scom.
        \item Return $(\tilde{\sWitness}, \tilde{\sRandomness}_1, \ldots, \tilde{\sRandomness}_\scomCount)$.
    \end{enumerate}
    We now provide a detailed proof. \\
    \noindent Suppose that \adv{} is an admissible prover that convinces the \ARTScheme verifier $\mathcal{V}$ for an index $((\indexI, \sIndexSetCom, \ck)$ and instance $(\sInstance, \scomList)$ with non-negligible probability.
    We show that there exists an extractor $\extCompiler$ that, assuming the existence of extractors \extPIOP for \IOPScheme, \extZKPC for \ZKPCScheme,
    and \extPC for \PCScheme, outputs a valid witness $(\tilde{\sWitness}, \tilde{\sRandomness}_1, \ldots, \tilde{\sRandomness}_\scomCount)$ for \sRelationCP
    with non-negligible probability given access to \adv{}.
    In our construction, we largely follow the knowledge soundness proof of the Marlin compiler (cf.~\cite[Theorem 8.1]{Chiesa2020-marlin}).

    We first construct an adversary \advZKPC against the extractability game for \ZKPCScheme, exactly as in the proof of~\cite[Theorem 8.1]{Chiesa2020-marlin},
    i.e., \advZKPC internally invokes $\mathcal{P}$ to obtain a set of commitments ${\scomProof}$ for the online and query phases.
    We then invoke an extractor \extZKPC, which outputs a set of alleged polynomials $\tilde{\p}$.
    If the cheating prover \adv{} convinces the \ARTScheme verifier $\mathcal{V}$, then the evaluation proof $\tilde{\pi}_{\text{Eval}}$ is valid. %
    As a result, if \extZKPC fails to extract polynomials, then \advZKPC wins the extractability game with non-negligible probability, which only happens with negligible probability \negZKPC under our assumption.

    We now address the linking phase, by similarly constructing an adversary \advPC against the extractability game for \PCScheme on the combined external commitment $\scom$.
    \advPC internally invokes $\mathcal{P}$ to obtain $\scom_\sProtocolRandomness$ to compute $\scom \leftarrow \commitmentCombination$,
    and receives the external evaluation proof \piCommitment.
    We then invoke an extractor \extPC, which outputs the polynomial $\tilde{\mathbf{w}}^*$ and randomness $\tilde{\sRandomness}^*$ for the external commitment $\scom$.
    If \adv{} convinces the \ARTScheme verifier $\mathcal{V}$, then the evaluation proof $\piCommitment$ is valid for the alleged evaluation $\tilde{\mathbf{w}}^*(\sEvalPoint) = \sEvalValue$,
    and the extractor succeeds except with negligible probability \negPC.    
    We further invoke $\mathcal{P}$ to obtain the internal commitment $\scomProof_{\hornerIndexI,\hornerIndexJ}$
    and the internal evaluation proof \piInternal.
    If the cheating prover \adv{} convinces the \ARTScheme verifier $\mathcal{V}$, then the evaluation proof $\piInternal$ is valid for the alleged evaluation $\tilde{\p}_{\hornerIndexI,\hornerIndexJ}(\hornerIndexCell) = \sEvalValue$
    except with negligible probability \negZKPC.
    
    We now construct a cheating prover \advPIOP for \IOPScheme.
    \advPIOP internally invokes \extZKPC and \extPC to obtain the polynomials $\tilde{\p}$ for $\scomProof$
    and the polynomial $\tilde{\mathbf{w}}^*$ and randomness $\sRandomness^*$ for $\scom$, respectively.
    Let $\tilde{\spolynomial}_{\alpha}$ be the polynomial defined by the linear combination of the coefficients from the proof witness $\sWitness^\prime$
    as in the polynomial evaluation gate \horner, i.e.,
    $
        \tilde{\spolynomial}_{\alpha}(X) := \sProtocolRandomness + \sum_{i=1}^{\ell} \tilde{\spolySingle}(X) \cdot \alpha^{i}.
    $
    In Plonkish proof systems such as Halo2, we can construct this polynomial $\tilde{\spolynomial}_{\alpha}$ by computing the witness values
    $\tilde{\pol{\scommittedWitness_1}},\ldots,\tilde{\pol{\scommittedWitness_\ell}}$ from the \IOPScheme polynomials $\tilde{\p}$
    by evaluating them on the relevant indices in the domain $D$ (recall that the polynomials encode the witness in Lagrange form).
    Also, note that $\tilde{\spolynomial}_{\alpha}(\sEvalPoint) = \tilde{\p}_{\hornerIndexI,\hornerIndexJ}(\hornerIndexCell)$ by \horner.
    If $\tilde{\spolynomial}_{\alpha} \neq \tilde{\mathbf{w}}^*$, then \advPIOP aborts.
    This event happens only with negligible probability.
    From the fact that $\mathcal{V}$ accepted, that \piInternal and \piCommitment are valid, i.e.,
    $\tilde{\spolynomial}_{\alpha}$ and $\tilde{\mathbf{w}}^*$ open to the same value $\sEvalValue$ at $\sEvalPoint$ except with negligible probability.
    The prover cannot adaptively choose $\tilde{\spolynomial}_{\alpha}$ and $\tilde{\mathbf{w}}^*$  in response to the challenge values $\alpha$ and $\sEvalPoint$, because
    $\tilde{\spolynomial}_{\alpha}$ is fixed due to the soundness of \IOPScheme and $\scom_\sProtocolRandomness$ was sent by the prover before it received the challenges.

\ifdefined\isnotextended
    Then, because $\sEvalPoint$ was sampled uniformly at random, from the Demillo-Lipton-Schwartz-Zippel~(\cref*{lemma:schwartz_zippel} in~\cref*{apx:definitions}\citeextended), it holds that:
    $
        \Pr\left[ \tilde{\spolynomial}_{\alpha} \neq \tilde{\mathbf{w}}^* \right] \leq \frac{\sNumSamples + 1}{p}\enspace,
    $
\else
    Then, because $\sEvalPoint$ was sampled uniformly at random, from the Demillo-Lipton-Schwartz-Zippel~(\cref{lemma:schwartz_zippel} in~\cref{apx:definitions}), it holds that:
    $$
        \Pr\left[ \tilde{\spolynomial}_{\alpha} \neq \tilde{\mathbf{w}}^* \right] \leq \frac{\sNumSamples + 1}{p}\enspace,
    $$
\fi
    where $\sNumSamples = \max_i \abs{\scommittedWitness_i}$.
    Hence, we can assume that $\tilde{\mathbf{w}}^* = \tilde{\spolynomial}_\alpha$ except with negligible probability.

    In conclusion, so long as
    \begin{enumerate}
        \item \extZKPC is successful, i.e., it outputs $\tilde{\p}$ that correspond to the proof commitments ${\scomProof}$,
        \item \extPC is successful, i.e., it outputs $\tilde{\mathbf{w}}^*$ s.t. $\tilde{\mathbf{w}}^*(\sEvalPoint) = \sEvalValue$,
        \item the polynomials $\tilde{\spolynomial}_{\alpha}$ and $\tilde{\mathbf{w}}^*$ are equal,
    \end{enumerate}
    it holds that \verifier{} accepts whenever $\mathcal{V}$ accepts.
    As a result, \advPIOP convinces \verifier{} with non-negligible probability if \adv{} convinces $\mathcal{V}$.

    We then invoke the extractor \extPIOP of \IOPScheme to extract the witness $\tilde{\sWitness}^\prime$ such that $(\indexI^\prime, \sInstance, \tilde{\sWitness}^\prime) \in \sRelation$.
    Furthermore,
    we can compute $\tilde{\sWitness}$ from $\tilde{\sWitness}^\prime$, because
    by the definition of the polynomial evaluation gate constructed using \horner,
    the witness $\tilde{\sWitness}^\prime$ is composed of the witness $\tilde{\sWitness}$ for the original relation $\sRelation$ and the advice polynomials of the polynomial evaluation gate, i.e.,
    $$\tilde{\sWitness}^\prime = \tilde{\sWitness} \,||\, \tilde{\polyAdvice}_{n_a + 1}(X, \Calpha, \Cbeta),\ldots,\tilde{\polyAdvice}_{n_a^\prime}(X, \Calpha, \Cbeta).$$
    As a result, it holds that $(\indexI, \sInstance, \tilde{\sWitness}) \in \sRelation$.

    Finally, we extract the randomness of the commitments $\tilde{\sRandomness_1}, \ldots, \tilde{\sRandomness}_\scomCount$ by rewinding the prover \adv{} with $\ell + 1$ distinct challenges for $\Calpha$ and \Cbeta to interpolate the masked polynomial $\sRandomness^*$ through the output of \extPC on the evaluation proof for \scom.
    Interpolating the points of the $\ell + 1$ decommitments $\alpha^{(j)},\sRandomness^{*,(j)}$,
    \extCompiler retrieves the randomness $\tilde{\sRandomness}_\sProtocolRandomness, \tilde{\sRandomness}_1, \ldots, \tilde{\sRandomness}_\scomCount$ such that
    \begin{equation*}
        \tilde{\sRandomness}_\sProtocolRandomness + \textstyle\sum_{i=1}^\ell \tilde{\sRandomness}_i \cdot (\alpha^{(j)})^{i -1} = \tilde{\sRandomness}^{*,(j)}
    \end{equation*}
    for all $j \in [\ell + 1]$.
    The probability that $\tilde{\sRandomness}_1, \ldots, \tilde{\sRandomness}_\scomCount \neq \srandList$ depends
    on the probability that any of the points $\alpha^{(j)},\sRandomness^{*,(j)}$ is not on $\tilde{\mathbf{\sRandomness}}^*$,
    or fails to be extracted by \extPC.
    This results in a multiplicative factor of $\ell + 1$ on the soundness error by the union bound.
    However, as $(\ell + 1) \cdot \frac{(d + 1)}{p}$ is linear in the size of the witness,
    this results in a soundness error that is negligible in the security parameter \secpar.
    Finally, the extractor rewinds an expected $O(\ell)$ times,
    resulting in a running time of \extCompiler linear in $\abs{\sInstance}$ and $\abs{\sWitness}$.

    \fakeparagraphnovskip{Zero-knowledge}
    $\ARTScheme$ satisfies zero-knowledge if \IOPScheme is zero-knowledge, and \ZKPCScheme and \PCScheme are hiding.
    Concretely, we show that there exists a simulator $\simCompiler = (\simCompilerSetup, \simCompilerProve)$ that, assuming the existence of a simulator $\simZKPC = (\simZKPCSetup,\simZKPCCommit,\simZKPCOpen)$ for \ZKPCScheme, and a simulator \simPIOP for \IOPScheme,
    outputs a valid transcript when given an instance $(\sInstance, \scomList)$. %
    We will show that the transcript generated by \simCompiler is statistically indistinguishable from the view of an honest verifier \sverifier running an interactive protocol \ourcompiler with the prover \sprover holding a valid instance and witness
    $((\indexI, \sIndexSetCom, \ck),(\sInstance, \scomList),(\sWitness, \srandList)) \in \sRelationCP$.
    The zero-knowledge game (cf.~\Cref{def:zk-snarks}) states that $\adv_1$ receives \crs and outputs an index-instance-witness triple
    $((\indexI, \sIndexSetCom, \ck),(\sInstance, \scomList))$ to $\sdv.\textsf{Prove}$.
    $\sdv.\textsf{Prove}$ must then output a valid transcript
    $\pi$ such that $\adv_2$ accepts with non-negligible probability.

    We follow a proof strategy to that of the simulator in~\cite[Theorem 8.4]{Chiesa2020-marlin},
    but extend this two steps (Steps 4 and 5) to handle the additional opening proofs.
    Note that their definition of zero-knowledge requires the simulator to interact with a dishonest verifier \malverifier to receive the challenges,
    whereas our simulator naturally requires access to the challenges (\Cref{def:zk-snarks}), as is common in the literature. %
    \simCompilerSetup receives a security parameter $\secpar$, size bound $N \in \mathbb{N}$ and maximum degree bound for the external commitments $d$,
    and computes the maximum degree bound $\dMax$ as in the protocol.
    Afterwards, it runs \simZKPCSetup to sample the public parameters $\ckp$, and the trapdoor \zkpcTrap, respectively.

    A challenge in our proof is that the simulator does not have access to the trapdoor of \PCScheme, as it does not run the setup because the commitment key \ck is part of the index.
    As a result, the simulator must generate a real opening proof for $\scom$ that is a combination of commitments to external witnesses the simulator does not have access to.
    We can address this by setting $\scom_\sProtocolRandomness = \scom_{\rho} - \textstyle\sum_{i=1}^\ell \scom_i \cdot \alpha^i$ where $\scom_{\rho}$ is a commitment to a polynomial that evaluates to $\rho$ everywhere.

    The proving subroutine of the simulator, \simCompilerProve, receives (\zkpcTrap, (\indexI, \sIndexSetCom, \ck), \sInstance) as input, and interacts with the malicious verifier \malverifier.
    We construct \simCompilerProve to generate the transcript by internally generating the verifier challenges
    $\zkc_1,\ldots,\zkc_\nRounds,\zkc_{\nRounds + 1},\psi$, and
    \begin{enumerate}

        \item For $i \in [\nRounds]$, simulate the polynomial commitments for round $i$ as follows:
        \begin{enumerate}
            \item Send $\zkc_i$ to the \IOPScheme simulator $\simPIOP(\sfield, \indexI^\prime, \sInstance)$,
            where $\indexI^\prime \leftarrow \hornerIndex(\indexI, \sIndexSetCom)$ is the index for the relation adapted with the Horner's method evaluation gate.
            \item Sample commitment randomness $[\sRandomnessProof_{i,j}]_{j=1}^{\sPolies(i)}$, and then add the simulated commitments to the transcript:
        \end{enumerate}
        \begin{equation*}
        [\scomProof_{i,j}]_{j=1}^{\sPolies(i)} \leftarrow \simZKPCCommit(\zkpcTrap, [\dBound(|\indexI^\prime|, i, j)]_{j=1}^{\sPolies(i)}, [\sRandomnessProof_{i,j}]_{j=1}^{\sPolies(i)}).
        \end{equation*}
        \item Simulate the evaluations in round $\nRounds + 1$ as follows:
        \begin{enumerate}
            \item Use the (honest) query algorithm of \IOPScheme to compute the query set $\evalpoints := \textsf{Q}_V(\sfield, x;\ \zkc_1,\ldots,\zkc_\nRounds,\zkc_{\nRounds+1})$,
            and abort if any query does not satisfy the query checker \queryCircuit. (The honest prover would also abort.)
            \item Assemble a list of evaluations $\evalValues := \textsf{Q}_V(\sfield, x;\ \zkc_1,\ldots,\zkc_\nRounds,\zkc_{\nRounds+1})$,
            containing actual evaluations of index polynomials and simulated evaluations of prover polynomials.
            In more detail, first run the \IOPScheme indexer $\iopIndex(\sfield, \indexI)$ to obtain polynomials $[\p_{0,j}]_{j=1}^{\sPolies(0)}$,
            and evaluate these on (the relevant queries in) the query set $\evalpoints$.
            Next, forward the query set \evalpoints to the \IOPScheme simulator $\simPIOP(\sfield, \indexI^\prime, \sInstance)$
            to obtain a simulated view, which in particular contains simulated answers for the queries to the \IOPScheme prover's polynomials.
        \end{enumerate}
        \item Simulate the evaluation proof in round $\nRounds + 2$ as follows:
            Compute $\pi_\text{Eval}$ as
            \ifdefined\isnotextended
            $
                    \simZKPCOpen(\zkpcTrap, [[\p_{i,j}]_{j=1}^{\sPolies(i)}]_{i=0}^{\nRounds}, \evalValues, [[\dBound(|\indexI^\prime|, i, j)]_{j=1}^{\sPolies(i)}]_{i=0}^{\nRounds}, \\
                    \evalpoints, \psi, [[\sRandomnessProof_{i,j}]_{j=1}^{\sPolies(i)}]_{i=0}^{\nRounds})
                    $
                    \else
            \begin{equation*}
                \begin{aligned}                    \simZKPCOpen(\zkpcTrap&, [[\p_{i,j}]_{j=1}^{\sPolies(i)}]_{i=0}^{\nRounds}, \evalValues, [[\dBound(|\indexI^\prime|, i, j)]_{j=1}^{\sPolies(i)}]_{i=0}^{\nRounds}, \\
                    &\evalpoints, \psi, [[\sRandomnessProof_{i,j}]_{j=1}^{\sPolies(i)}]_{i=0}^{\nRounds})
                    \end{aligned}
            \end{equation*}
                    \fi
            where all polynomials $[[\p_{i,j}]_{j=1}^{\sPolies(i)}]$ with $i > 0$ are defined to be zero and the randomness $[\sRandomnessProof_{0,j}]_{j=1}^{\sPolies(0)}]$ is set to $\bot$.
            Finally, add $\pi_\text{Eval}$ to the transcript.
    \end{enumerate}%
    For the linking part of the protocol, the simulator does the following.%
    \begin{enumerate}
        \setcounter{enumi}{3}
        \item Sample a random value $\rho \sample \sfield_p$ and commitment randomness $\sRandomness_\sProtocolRandomness \sample \sfield$.
        Compute $\scom_\rho = \PCCommit(\ck, \spolynomial_{\rho}, d, \sRandomness_\sProtocolRandomness)$ where $\spolynomial_{\rho}$ is the 0-degree polynomial defined by $\rho$,
    and compute $\scom_\sMask = \scom_\rho - \textstyle\sum_{i=1}^\ell \scom_i \cdot \alpha^i$ where $\alpha \coloneqq \zkc_{\hornerIndexI - 1}$ is defined as one of the verifier challenges as in the protocol.
        Add $\scom_\sMask$ to the transcript.
        \item Simulate the linking phase as follows:
        \begin{enumerate}
            \item Simulate \piInternal using the trapdoor \zkpcTrap as
            \ifdefined\isnotextended
 $
                    \simZKPCOpen(\zkpcTrap, \p_{\hornerIndexI,\hornerIndexJ}, \sEvalValue, \dBound(|\indexI^\prime|, \hornerIndexI, \hornerIndexJ),
                    \hornerIndexCell, \sRandomnessProof_{\hornerIndexI, \hornerIndexJ})
$
\else
$$
                    \simZKPCOpen(\zkpcTrap, \p_{\hornerIndexI,\hornerIndexJ}, \sEvalValue, \dBound(|\indexI^\prime|, \hornerIndexI, \hornerIndexJ),
                    \hornerIndexCell, \sRandomnessProof_{\hornerIndexI, \hornerIndexJ})
$$
\fi
            where the polynomial $\p_{\hornerIndexI,\hornerIndexJ}$ is defined to be zero.
            Note that we abuse notation by overloading \simZKPCOpen to take a single polynomial and evaluation point as input.
            \item Compute \piCommitment as $                  \PCProve(\ck, \spolynomial_{\rho}, d, \sEvalPoint, \sEvalValue, \sRandomness^*)$,
            where the polynomial $\mathbf{w}^*$ is defined to be zero,
            where $\sEvalPoint \coloneqq \zkc_{\hornerIndexI}$ is the challenge received from \malverifier defined as in the protocol.
            \item Add \sEvalValue, \piInternal, and \piCommitment to the transcript.
        \end{enumerate}
        
    \end{enumerate}
    We now argue that this transcript is indistinguishable from the one generated by the honest prover \sprover.
        The zero-knowledge property of \IOPScheme states that interacting with the honest prover \sprover can be replaced
        with an interaction with the simulator \simPIOP that adaptively answers oracle queries of the malicious verifier to prover oracles,
        whenever the number of oracle queries is below the zero-knowledge query bound \queryBound and each query is admissible according to the query checker circuit \queryCircuit.
    In our setting, this condition is met: the number of oracle queries is bounded by the query complexity of the honest verifier, and the query set \evalpoints is generated by running the honest query algorithm on the verifier challenges.
    Furthermore, both the honest prover and the simulator enforce that the query set is admissible with respect to \queryCircuit.
    Finally, \simCompiler returns oracle responses that are identically distributed to those of the honest prover.

    Next, consider the commitments sent in the first \nRounds rounds, as well as the evaluation proof in round $\nRounds + 2$.
    The hiding property of the polynomial commitment scheme \ZKPCScheme ensures that the simulator \simZKPC can perfectly simulate both the commitments and the evaluation proof using the trapdoor.

    We now examine the elements in the transcript related to the polynomial evaluation: the commitment to the masking polynomial $\scom_\sProtocolRandomness$,
    the evaluation result $\sEvalValue$, and the evaluation proofs for both the internal and external commitments.
    The hiding property of \ZKPCScheme ensures that \simZKPC can simulate the evaluation proof for the internal commitment $\scomProof_{\hornerIndexI,\hornerIndexJ}$ using the trapdoor.
    For the external commitment $\scom$, the simulator does not have access to such a trapdoor, but
    the simulator is always able to generate a real evaluation proof without access to the external witnesses, as
    \begin{equation*}
        \begin{split}
            \scom &= \scom_\sMask + \textstyle\sum_{i=1}^\ell \scom_i \cdot \alpha^i \\
            &= \scom_{\rho} - \textstyle\sum_{i=1}^\ell \scom_i \cdot \alpha^i + \textstyle\sum_{i=1}^\ell \scom_i \cdot \alpha^i \\
            &= \scom_{\rho}.
        \end{split}
    \end{equation*}
    As a result, the simulator can simply generate an opening proof using $\spolynomial_\rho$, and the hiding property of \PCScheme ensures the proof is indistinguishable
    from a proof for the real polynomial.

    It remains to show that the evaluation result $\sEvalValue$ is indistinguishable from that produced by the honest prover.
    Specifically, we argue that it is uniformly distributed over $\sfield_p$ and reveals no information about either the internal or external polynomials.
    Recall that, in the protocol, $\sEvalValue$ is the evaluation of the external polynomial $\pol{w}^*$ at point $\sEvalPoint$, where $\pol{w}^* = \spolyMask + \sum_{i=1}^{\ell} \spolySingle \cdot \alpha^{i}$ and $\sEvalPoint$ was sampled from the \gls{piop}'s challenge space that excludes zero.
    The masking value \sMask that defines the masking polynomial \spolyMask was sampled uniformly from $\sfield_p$.
    As a result, the evaluation $\pol{w}^*(\sEvalPoint)$ is uniformly random over $\sfield$, and reveals no information about the external witness polynomials $\pol{w}_1,\ldots,\pol{w}_\ell$.

    We now analyze the additional evaluation of the internal polynomial $\p_{\hornerIndexI,\hornerIndexJ}$ at the point \hornerIndexCell.
    To ensure that this reveals no information about the polynomial, we show that the set of query points $C$ made by the \IOPScheme verifier, together with \hornerIndexCell, is not sufficient to learn anything about $\p_{\hornerIndexI,\hornerIndexJ}$.
    Let $\queryBound$ be the query bound of the \IOPScheme.
    The \IOPScheme makes at most $\queryBound$ queries to any polynomial, and let
    $q_1,\ldots,q_\queryBound$ be the query points of the \IOPScheme verifier.
    Let $L_i$ be the $i$-th Lagrange basis over $D = \{ 1, \omega, \ldots \omega_{n-1})\}$ such that $L_i(\omega^i)$ = 1 and $0$ everywhere else.
    From the definition of the Halo2 \gls{piop}, an advice polynomial interpolating $n^\prime$ witness values is blinded by $\queryBound + 1$ blinding values $b_1,\ldots,b_{\queryBound + 1}$, as follows:
    \begin{equation*}
        \begin{split}
            \p_{\hornerIndexI,\hornerIndexJ}(X) &= \polyAdvice^\prime_{\hornerIndexAdvice}(X) + \sum_{i=1}^{\queryBound + 1} b_i \cdot L_{i-1}(X) \\
            =~&\polyAdvice^\prime_{\idxRho}(X) + \psi \cdot L_{d_w-1}(X) + \sum_{i=1}^{\queryBound + 1} b_i \cdot L_{d_w + i-1}(X).
        \end{split}
    \end{equation*}
    where $\polyAdvice^\prime$ is the advice polynomial $\polyAdvice$ partially evaluated at the challenge variables and $\psi$ is the extra blinding value added by the Horner's gate function.
    Hence, the polynomial $\p_{\hornerIndexI,\hornerIndexJ}$ has $\queryBound + 2$ blinding values, and the $\queryBound + 1$ evaluation points
    $$
    \p_{\hornerIndexI,\hornerIndexJ}(q_1),\ldots,\p_{\hornerIndexI,\hornerIndexJ}(q_\queryBound),\p_{\hornerIndexI,\hornerIndexJ}(\hornerIndexCell)
    $$
    reveals no information about the polynomial $\p_{\hornerIndexI,\hornerIndexJ}$(X), as guaranteed by~\Cref{thm:blinding}.

\end{proof}

\begin{myhideenv}

\section{Definitions}
\label{apx:definitions}

\begin{definition}[Commitment Scheme]
    \ldef{commitments}
    A non-interactive commitment scheme consists of a message space $\scomMessageSpace$, randomness space $\scomRandomnessSpace$,
    a commitment space $\scomCommitmentSpace$
    and a tuple of polynomial-time algorithms $(\sComSetup,\sComCommit,\sComVerify)$
    defined as follows:
    \begin{algos}
        \item $\sComSetup(1^\lambda) \rightarrow \crs$: Given a security parameter $\lambda$, it outputs public parameters $\crs$.

        \item $\sComCommit(\crs, m, r) \rightarrow c$: Given public parameters $\crs$, a message $m \in \scomMessageSpace$ and randomness $r \in \scomRandomnessSpace$, it outputs a commitment $c$.

        \item $\sComVerify(\crs, c, r, m) \rightarrow \{0,1\}$: Given public parameters $\crs$, a commitment $c$, a decommitment $r$, and a message $m$, it outputs $1$ if the commitment is valid, otherwise $0$.
    \end{algos}
    A non-interactive commitment scheme has the following properties:
    \begin{algos}
        \item \fakeparagraph{Correctness}
        For all security parameters $\lambda$, for all $m$ and for all $\crs$ output by $\sComSetup(1^\lambda)$, if $c = \sComCommit(\crs, m, r)$, then $\sComVerify(\crs, c, m, r) = 1$.

        \item \fakeparagraph{Binding}
        For all polynomial-time adversaries $\mathcal{A}$, the probability
        \begin{equation*}
            \begin{split}
                \Pr\bigl[\sComVerify(\crs, c, m_1, r_1) &= 1 \land \\
                \sComVerify(\crs, c, m_2, r_2) &= 1 \land m_1 \neq m_2 : \\
                \crs \leftarrow \sComSetup(1^\lambda), (c, r_1, r_2, & m_1, m_2) \leftarrow \mathcal{A}(\crs) \bigl]
            \end{split}
        \end{equation*}
        is negligible.

        \item \fakeparagraph{Hiding}
        \todoCameraReady{a bit vague}
        For all polynomial-time adversaries $\mathcal{A}$, the advantage
        \begin{equation*}
            \begin{split}
                |\Pr[\mathcal{A}(\crs, c) = 1 : c &\leftarrow \sComCommit(\crs, m_1, r)] - \\
                \Pr[\mathcal{A}(\crs, c) = 1 : c &\leftarrow \sComCommit(\crs, m_2, r)]|
            \end{split}
        \end{equation*}
        is negligible, for all messages $m_1, m_2$.
    \end{algos}
\end{definition}

\begin{definition}[Homomorphic Commitment Scheme~\cite{Bunz2018-mg}]
    \ldef{homomorphic_commitment}
    A homomorphic commitment scheme is a non-interactive commitment scheme such that
    \scomMessageSpace, \scomRandomnessSpace and \scomCommitmentSpace are all abelian groups and
    for all $m_1, m_2 \in \scomMessageSpace$ and $r_1, r_2 \in \scomRandomnessSpace$, we have
    \begin{equation*}
        \begin{split}
            &\sComCommit(\crs, m_1 + m_2, r_1 + r_2) = \\
            &\sComCommit(\crs, m_1, r_1) + \sComCommit(\crs, m_2, r_2).
        \end{split}
    \end{equation*}
\end{definition}

\begin{definition}[KZG Commitments~\cite{Kate2010-px}]
    \ldef{kzg}
    KZG commitments leverage bilinear pairings to create a commitment scheme for polynomials where the commitments have constant size.
    Let $\sgroup_1$, $\sgroup_2$ and $\sgroup_T$ be cyclic groups of prime order $p$ such with generators $\sgenerator_1 \in \sgroup_1$ and
    $\sgenerator_2 \in \sgroup_2$.
    Let $e: \sgroup_1 \times \sgroup_2 \rightarrow \sgroup_T$ be a bilinear pairing, so that $e(\alpha \cdot \sgenerator_1, \beta \cdot \sgenerator_2) = \alpha \beta \cdot e(\sgenerator_1, \sgenerator_2)$.
    The KZG polynomial commitment scheme for some polynomial $\spolynomial$ made up of coefficients $\spolynomial_i$ is defined by four algorithms:
    \begin{algos}
        \item $\PCSetup(d)$: Sample $\alpha \sample \sfield_p$ and output
        \begin{equation*}
            \texttt{pp} \leftarrow \left( \alpha \cdot \sgenerator_1, \ldots, \alpha^d \cdot \sgenerator_1, \alpha \cdot \sgenerator_2 \right)
        \end{equation*}
        \item $\PCCommit(\texttt{pp}, \spolynomial)$: Output $\scommitment = \spolynomial(\alpha) \cdot \sgenerator_1$, computed as
        \begin{equation*}
            \scommitment \leftarrow \sum_{i=0}^d \spolynomial_i \cdot (\alpha^i \cdot \sgenerator_1)
        \end{equation*}
        \item $\PCProve(\texttt{pp}, \scommitment, \spolynomial, x):$ Compute the remainder and quotient
        \begin{equation*}
            q(X),r(X) \leftarrow \left( \spolynomial(X) - \spolynomial(x) \right) / \left(X - x \right).
        \end{equation*}
        Check that the remainder $r(X)$ and, if true, output $\pi = q(\alpha) \cdot \sgenerator_1$, computed as $\sum_{i=0}^d \left( q_i \cdot (\alpha^i \cdot \sgenerator_1) \right)$.
        \item $\PCCheck(\texttt{pp}, \scommitment, x, y, \pi)$: Accept if the following pairing equation holds:
        \begin{equation*}
            e(\pi, \alpha \cdot \sgenerator_2 - x \cdot \sgenerator_2) = e(\scommitment - y \cdot \sgenerator_1, \sgenerator_2)
        \end{equation*}
    \end{algos}

    \noindent The security properties of KZG commitments fundamentally rely on the hardness of the polynomial division problem.
    The parameter $\alpha$ acts as a trapdoor and must be discarded after \PCSetup to ensure the binding property.
    Hence, we require a trusted setup to generate the public parameters and securely discard $\alpha$,
    which can be computed using MPC or,
    depending on the deployment, computed by the auditor acting as a trusted dealer.
    Together, \PCProve and \PCCheck form
the evaluation protocol for the scheme.
    The hiding property relies on the discrete logarithm assumption, so if $\alpha$ is not discarded this breaks the binding property but not the hiding property.
    We refer to~\cite{Kate2010-px} for a detailed security analysis.
Further, KZG commitments are homomomorphic, i.e.,
if $\scommitment_1$ and $\scommitment_2$ are commitments to polynomials $\spolynomial_1$ and $\spolynomial_2$,
then $\scommitment_1 + \scommitment_2$ is a commitment to polynomial $\spolynomial_1 + \spolynomial_2$.

\end{definition}

\begin{definition}
    \label{def:zk-snarks}
    A \gls{zksnark} is a proof with the following properties:
\begin{algos}
    \item \fakeparagraph{Completeness}
    For every true statement for the indexed relation \sRelation
    an honest prover with a valid witness always convinces the
    verifier, i.e.,
    $  \forall (\indexI, x, w) \in \sRelation{} :$
    \begin{equation*}
        \condprob{
            \zkVerify(\ivk, x, \pi) = 1
        }{
            \begin{gathered}
            \crs \gets \zkSetup(\secparam, \sRelation)\\
            (\ipk, \ivk) \gets \zkIndex(\indexI, \crs)\\
            \pi \gets \zkProve(\ipk, x, w)
            \end{gathered}
        } = 1
    \end{equation*}

    \item \fakeparagraph{Knowledge Soundness} For every PPT adversary, there exists a PPT extractor that gets full access to the adversary's state (including its random coins and inputs). Whenever the adversary produces a valid argument, the extractor can compute a witness with high probability:
    $\forall \adv{} \exists \mathcal{E} : $
    \begin{equation*}
    \begin{gathered}
        \condprob{
            \begin{gathered}
                \zkVerify(\ivk, \tilde{x}, \tilde{\pi}) = 1\\
                {}\land{}\\
                (\indexI, \tilde{x}, w') \notin \sRelation
            \end{gathered}
        }{
            \begin{gathered}
            \crs \gets \zkSetup(\secparam, \sRelation)\\
            ((\tilde{x}, \tilde{\pi}); w') \gets \adv{}|\mathcal{E}(\crs) \\
            (\ipk, \ivk) \gets \zkIndex(\indexI, \crs)
            \end{gathered}
        } \\
        = \negl
        \end{gathered}
    \end{equation*}

    \noindent
    We stress here that this definition requires a \emph{non-black-box} extractor, i.e., the extractor gets full access to the adversary's state.

    \noindent
    \item \fakeparagraph{Succinctness} For any $x$ and $w$, the length of the proof is given by $|\pi| = \poly \cdot \pcpolynomialstyle{polylog}(|x| + |w|)$.

    \item \fakeparagraph{Zero-Knowledge} There exists a PPT simulator $\sdv = (\textsf{Setup}, \textsf{Prove})$ such that $\sdv.\textsf{Setup}$ outputs a simulated CRS \crs{} and a trapdoor \td{}; On input \crs{}, $x$, and \td{}, $\sdv.\textsf{Prove}$ outputs a simulated proof $\pi$, and for all PPT adversaries $\adv = (\adv_1, \adv_2)$, such that
    \begin{align*}
        &\left|\condprob{
            \begin{gathered}
            (\indexI, x, w) \in \sRelation \\
            {}\land{} \\
            \adv_2(\pi) = 1
            \end{gathered}
        }{
            \begin{gathered}
            \crs \gets \zkSetup(\secparam, \sRelation) \\
            (\indexI, x, w) \gets \adv_1(\secparam, \crs) \\
            (\textsf{ipk}, \textsf{ipv}) \gets \mathcal{I}^{\crs}(\indexI) \\
            \pi \gets \zkProve(\textsf{ipk}, x, w)
            \end{gathered}
        }
        - \right.
        \\
        &\left.\condprob{
            \begin{gathered}
            (\indexI, x, w) \in \sRelation \\
            {}\land{} \\
            \adv_2(\pi) = 1
            \end{gathered}
        }{
            \begin{gathered}
            (\crs, \td) \gets \sdv.\textsf{Setup}(\secparam) \\
            (\indexI, x, w) \gets \adv_1(\secparam, \crs) \\
            \pi \gets \sdv.\textsf{Prove}(\td, \indexI, x)
            \end{gathered}
        }
        \right| = \negl
    \end{align*}

\end{algos}
\end{definition}

\begin{lemma}[Demillo-Lipton-Schwartz-Zippel~\cite{Demillo1978-SZDL}]
    \label{lemma:schwartz_zippel}
    Let \( f \in \sfield_p[X] \) be a non-zero polynomial of degree $\sNumSamples$ over a prime field \( \sfield_p \).
    Let $S$ be any finite subset of $\sfield_p$ and let $r$ be a field element selected independently and uniformly from set $S$.
    Then
    \begin{equation*}
        \Pr[f(r) = 0] \leq \frac{\sNumSamples}{|S|}.
    \end{equation*}
\end{lemma}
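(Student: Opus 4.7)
The plan is to derive the bound directly from the classical fact that a non-zero univariate polynomial over a field has at most as many roots as its degree. Since the statement is restricted to the univariate case, no induction on the number of variables is required, and the problem reduces entirely to bounding the number of elements of $S$ on which $f$ vanishes.

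First, I would recall that $\sfield_p$ is a field and hence an integral domain, so $\sfield_p[X]$ is a unique factorization domain. By the factor theorem, any root $\alpha \in \sfield_p$ of $f$ yields a factorization $f(X) = (X - \alpha)\, g(X)$ with $\deg g = \sNumSamples - 1$. Iterating this and using that $f$ is non-zero (so that the factorization process terminates without collapsing to the zero polynomial), I would conclude that the set of roots of $f$ in $\sfield_p$ has cardinality at most $\sNumSamples$.

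Next, let $Z := \{\alpha \in S : f(\alpha) = 0\}$. Since $Z$ is contained in the root set of $f$ in $\sfield_p$, the previous step gives $|Z| \leq \sNumSamples$. Because $r$ is sampled uniformly from $S$, the probability that $f(r) = 0$ is exactly $|Z|/|S|$, and hence $\Pr[f(r) = 0] \leq \sNumSamples/|S|$, which matches the claimed bound.

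There is no real obstacle in this proof; the only subtlety worth flagging is the necessity of the hypothesis that $f$ is non-zero, since the zero polynomial vanishes on all of $\sfield_p$ and the factor theorem argument would otherwise not terminate. Note also that the bound is informative only when $\sNumSamples < |S|$; in the degenerate regime $\sNumSamples \geq |S|$ it reduces to the trivial upper bound of $1$, which is of course still valid.
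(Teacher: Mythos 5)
Your proof is correct: the factor-theorem argument that a non-zero degree-$\sNumSamples$ univariate polynomial has at most $\sNumSamples$ roots, followed by the uniform-sampling count over $S$, is exactly the standard proof of this lemma, and the paper itself offers no proof but simply defers to the cited reference~\cite{Demillo1978-SZDL}, where the same argument appears. Your remarks on the necessity of $f \neq 0$ and on the degenerate regime $\sNumSamples \geq |S|$ are both accurate.
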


\begin{definition}[\gls{piop}~\cite{Kohlweiss2023-uk}]
    \label{def:piop}
    A polynomial interactive oracle proof (\gls{piop}) for an indexed relation $\hat{R}$ is specified by a tuple \IOPScheme(\nRounds, \sPolies, \nQueries, \dBound, \IOPI, \IOPProver, \IOPVerifier), where $\nRounds, \sPolies, \nQueries, \dBound: \{0,1\}^* \to \mathbb{N}$ are polynomial-time computable functions and \IOPI, \IOPProver, \IOPVerifier are the indexer, prover, and verifier algorithms.
    The parameter \nRounds specifies the number of interaction rounds, $s$ specifies the number of polynomials in each round, \nQueries specifies the number of queries made to each polynomial, and \dBound specifies a maximum degree bound on these polynomials.
    An executation of \IOPScheme for $(\indexI, \sInstance, \sWitness) \in \sRelation$ consists of an interaction between \IOPProver and \IOPVerifier,
    where $\textsf{b} \leftarrow \langle \IOPProver(\indexI, \sInstance, \sWitness), \IOPVerifier^{\IOPI(\indexI)}(\sInstance) \rangle$
    denotes the output decision bit, and $(\texttt{view}, \p) \leftarrow [[\IOPProver(\indexI, \sInstance, \sWitness), \IOPVerifier^{\IOPI(\indexI)}(\sInstance)]]$
    denotes the view ($\texttt{view}$) of \IOPVerifier generated during the interaction and the responses of \IOPI, and the polynomial oracles $\p$ output by \IOPProver.
    The view consists of challenges $\zkc_1, \ldots, \zkc_{\nRounds}$ that \IOPVerifier sends to \IOPProver and vector $y$ of oracle responses defined below.
    The vector $\p$ consists of the polynomial oracles generated by \IOPProver during the interaction.

    \begin{itemize}
        \item \textbf{Offline phase:} The indexer \IOPI receives as input an index $\indexI$ and outputs $\sPolies(0)$ polynomials $(\p_{0,j})_{j=1}^{\sPolies(0)} \in \mathbb{F}[X]$ of degrees at most $(\dBound(|\indexI,0,j|))_{j=1}^{\sPolies(0)}$.

        \item \textbf{Online phase:} Given an instance $\sInstance$ and witness $\sWitness$ such that $(\indexI, \sInstance, \sWitness) \in \sRelation$,
        the prover \IOPProver receives $(\indexI, \sInstance, \sWitness)$ and the verifier \IOPVerifier receives $\sInstance$ and oracle access to the polynomials output by \IOPI$(\indexI)$.
        The prover and the verifier interact over $2\nRounds + 1$ rounds where $\nRounds = \nRounds(|\indexI|)$.
        For $i \in [\nRounds]$, in the $i$-th round of interaction:
        \begin{enumerate}
            \item The prover \IOPProver sends $\sPolies(i)$ oracle polynomials $(\p_{i,1},\ldots,\p_{i,\sPolies(\indexI)})$ to the verifier \IOPVerifier.
            \item \IOPVerifier responds with a challenge $\zkc_i \in \textsf{Ch}$, where $\textsf{Ch}$ is the challenge space determined by $\indexI$.
        \end{enumerate}
        The last round challenge $\zkc_{\nRounds}$ serves as auxiliary input to \IOPVerifier in subsequent phases.
        The protocol is public-coin, meaning that $\zkc_i$ are public and uniformly sampled from $\textsf{Ch}$.
        \IOPProver can be interpreted as a series of next message functions such that polynomial oracles for round $i$ are obtained by running $(\stP, \p_{i,1},\ldots,\p_{i,\sPolies(\indexI)}) \leftarrow \IOPProver(\stP, \zkc_{i-1})$,
        where $\stP$ is the internal state of \IOPProver after sending polynomials for round $i-1$ and before receiving challenge $\zkc_{i-1}$, and $\stP$ is the updated state.
        Here, $\zkc_0$ is assumed to be $\bot$.

        \item \textbf{Query phase:}
        Let $\p = (\p_{i,j})_{i \in [\nRounds],j \in [\sPolies(\indexI)}$ be the vector of all polynomials sent by the prover \IOPProver.
        The verifier \IOPVerifier may query any of the polynomials it has received any number of times.
        Concretely, \IOPVerifier executes a subroutine $\mathsf{Q}_V$ that receives $(\sInstance, \zkc_1, \ldots, \zkc_{\nRounds}, \zkc_{\nRounds+1})$ where $\zkc_{\nRounds+1}$ is additional randomness used in the query phase.
        This routine outputs a query vector $\sQuery = (\sQuery_{i,j})_{i=[0,\nRounds],j \in [\sPolies(\indexI)]}$,
        where each $\sQuery_{i,j}$ is to be interpreted as a vector $(\sQuery_{i,j,k})_{k \in [\nQueries(i,j)]} \in \mathbb{D}^{\nQueries(i,j)}$ and $\mathbb{D} \subseteq \mathbb{F}$ is an evaluation domain determined by $\indexI$.
        We write $y_{i,j} = \p_{i,j}(\sQuery_{i,j})$ to define an evaluation vector $y_{i,j} = (y_{i,j,k})_{k \in [\nQueries(i,j)]}$ where $y_{i,j,k} = \p_{i,j}(\sQuery_{i,j,k})$.
        Similarly, we write $y = \p(\sQuery)$ to define $y = (y_{i,j})_{i \in [\nRounds],j \in [\sPolies(\indexI)]}$ where $y_{i,j} = \p_{i,j}(\sQuery_{i,j})$.

        \item \textbf{Decision phase:} The verifier accepts if the answers to the queries (and the verifier's randomness) satisfy the decision predicate $\mathbf{D}$.
        Concretely, \IOPVerifier executes a subroutine $\mathbf{D}_V$ that receives $(\sInstance, \p(\sQuery), \zkc_1, \ldots, \zkc_{\nRounds}, \zkc_{\nRounds}+1)$ as input, and outputs the decision bit $\textsf{b}$.

        The function \dBound determines which provers to consider for the completeness and soundness properties of the proof system.
        A possibly malicious prover $\tilde{\IOPProver}$ is admissible for \IOPScheme if, on every interaction with the verifier \IOPVerifier, it holds that for every round $i \in [\nRounds]$ and oracle index $j \in [\sPolies(i)]$ we have $\deg(\p_{i,j}) \leq \dBound(|i|)$.
        The honest prover \IOPProver is required to be admissible under this definition.
    \end{itemize}
\end{definition}
The \gls{piop} should satisfy completeness, soundness, and zero-knowledge properties.
We refer to~\cite{Kohlweiss2023-uk} for formal definitions of these properties.

\section{Arithmetization Extensions}
\label{apx:arithmetization}
We define the arithmetization extension \textsf{AddAdviceColumns} used for \ourlunar in \Cref{fig:arithmetization:apollo} and the \horner extension used for \oursystem in \Cref{fig:arithmetization:artemis}.
For the former, we assume that the external commitments each fit inside a single column (cf.~\Cref{sec:design:ourlunar}),
while we present \horner in the more general setting supported by \oursystem.
Note that, in our description, we place the masking value in a separate advice column.
In practice, the masking value can be placed in an appropriate empty cell instead (cf.~\Cref{sec:arithmetization}).

\newpage
\begin{figure}[h]  %
    \centering
    \begin{protocolbox}[frametitle=\textsf{AddAdviceColumns}]
\textbf{\apolloIndex(\indexI, \sIndexSetCom):}

Parse $\indexI$ as $\haloIndex.$
\begin{enumerate}
    \item Add an advice column for each external commitment, i.e., $n_a^\prime = n_a + \abs{\sIndexSetCom}$ where $\abs{\sIndexSetCom}$ is the number of commitments 
    \item Add copy constraints by adding columns $[n_a + 1,\ldots,n_a + \abs{\sIndexSetCom}]$ to $P_\sigma$ and extending the permutation $\sigma$.
    For each commitment index \( k \in [|\sIndexSetCom|] \), let the \( o \)-th cell be such that \( (i, j) = \sIndexSetCom^k[o] \). Extend the cycle corresponding to cell \( (i, j) \) (i.e., column \( i \) and row \( j \)) by including the cell in the new advice column at position \( (n_a + k, o) \).
    \item return $(\sfield, n, n_f, n_a^\prime, n_p, F, B, T, P_\sigma, \sigma^\prime), n_a$
\end{enumerate}
\functionparagraph{\apolloW($\mathnormal{w}$, \sIndexSetCom)}

Extend the witness $w$ as $w^\prime$ as follows:
\begin{enumerate}
    \item Compute the advice polynomials containing the aligned committed witnesses. For all $i \in [\abs{\sIndexSetCom}]$,
    we define $\mathbf{a}_{n_a + i}(X)$ as the polynomial defined by the split of $\mathcal{D}_w$, by interpreting $(D, w_{\sIndexSetCom^i})$ in point-evaluation form.
    \item Return $w^\prime$.
\end{enumerate}
    
\end{protocolbox}
    \caption{Transformation functions for the index and witness of a Halo2 indexed relation to add extra advice columns containing the witness elements in \sIndexSetCom.}
    \label{fig:arithmetization:apollo}
\end{figure}

    \begin{figure*}
        \centering
        \begin{protocolbox}[frametitle=Polynomial Evaluation Constraint (Horner's Method)]
            For convenience, we define the following notation.
            Let $m$ be the number of columns required to contain the largest committed witness, i.e., $m \coloneqq \max_i \ceil{\frac{\abs{\sIndexSetCom^i}}{n}}$,
            where $\sIndexSetCom^i$ are indices defining the witness values of the $i$-th committed polynomial following the split of $\mathcal{D}_w$, and $n$ is the number of rows.
            Let the variables $\idxWit = n_a$, $\idxMu = n_a + \ell m + 1$, and $\idxRho = n_a + \ell m + 2$,
            and let the number of active rows be $n_\text{horner} = \max_i \ceil{\frac{\abs{\sIndexSetCom^i}}{m}}$.
            \functionparagraph{\hornerIndex(\indexI, \sIndexSetCom)}
            Parse $\indexI$ as \haloIndex.
            \begin{enumerate}
                \item Add advice columns for the committed witnesses $\polyAdvice_{\idxWit},\ldots,\polyAdvice_{\idxWit + \ell m}$, the masking value $\polyAdvice_{\idxMu}$, and the (intermediate) result $\polyAdvice_{\idxRho}$, i.e., set $n_a^\prime = n_a + \ell m + 1$.
                \item Add copy constraints for the committed witness cells by adding columns $[\idxWit,\ldots,\idxWit+ \ell m]$ to $P_\sigma$ and extending the permutation $\sigma$.
                For each commitment index \( k \in [|\sIndexSetCom|] \), let the \( o \)-th cell be such that \( (i, j) = \sIndexSetCom^k[o] \).
                Extend the cycle corresponding to cell \( (i, j) \) (i.e., column \( i \) and row \( j \)) to include the cell in the new advice column layed out in row-major order $(\idxWit + (k-1) m + (o \mod m), \lfloor \frac{o}{m} \rfloor)$.
                \item Add a fixed column by extending $F$ with column $F_{n_f + 1}$ which is $1$ for all rows in $n_\text{horner}$ and $0$ everywhere else.
                \item Add the Horner's method constraint polynomial $\polyHornerGate$ by exending the set of gate constraints $B$.
                Let $\polyAdvice_{\idxWit[i,j]}$ be the $j$-th advice polynomial of commitment $i$.
                We define the Horner's method constraint polynomial as
                \begin{equation*}
                    \begin{split}
                        &\polyHornerGate\left(
                            X, \ldots, \polyAdvice_{\idxWit}(X), \ldots, \polyAdvice_{\idxWit+\ell m}(X), \polyAdvice_{\idxMu}(X),
                            \polyAdvice_{\idxRho}(X, \Calpha, \Cbeta), \Calpha, \Cbeta, \polyFixed_{n_f+1}(X)
                        \right) = 
                        \\
                        &\polyFixed_{n_f+1}(X)~\cdot \\
                        &~~\left(
                            \polyAdvice_{\idxMu}(X) + \displaystyle\sum_{j = 1}^{m} \left(\sum_{i = 1}^{\ell} (\polyAdvice_{\idxWit[i,j]}(X) \cdot (\Calpha)^{i}) \cdot (\Cbeta)^{j-1} \right)
                            + \polyAdvice_{\idxRho}(X\cdot\omega, \Calpha, \Cbeta) \cdot (\Cbeta)^{m}) 
                            - \polyAdvice_{\idxRho}(X, \Calpha, \Cbeta)
                        \right)
                    \end{split}
                \end{equation*}
                where $\polyFixed_{n_f+1}(X)$ is the polynomial corresponding to the fixed column $F_{n_f+1}$.
                The index of the cell of the final evaluation result $\sEvalValue$ is defined as $\hornerIndexCell \coloneqq \omega^0$, as the recurrence starts at the bottom and goes from bottom to the top.
                In the following, let the index of the advice polynomial that contains the result be $\hornerIndexAdvice \coloneqq \idxRho$.
                \item Return $(\sfield, n, n_f, n_a^\prime, n_p, F, B, T, P_\sigma, \sigma^\prime), \hornerIndexCell, \hornerIndexAdvice$.
            \end{enumerate}

            \functionparagraph{\hornerW(\indexI, w, \sIndexSetCom, \sProtocolRandomness)}
            Parse $\indexI$ as \haloIndex.
            Extend the witness $w$ as $w^\prime$ as follows:
            \begin{enumerate}
                \item Compute the advice polynomials containing the committed witnesses. 
                For all $i \in [\ell]$ and $j \in [m]$,
                let $G_i$ be the $n_\text{horner} \times m$ grid defined by distributing 
                the committed witness $w_{\sIndexSetCom^i}$ over the grid in row-major ordering.
                The committed witness $w_{\sIndexSetCom^i}$ is defined as the $i$-th committed polynomial following the split of $\mathcal{D}_w$.
                Add the masking value \sProtocolRandomness to the next empty cell of the first grid $G_0$.
                We define $\mathbf{a}_{\idxWit[i,1]}(X),\ldots,\mathbf{a}_{\idxWit[i,m]}(X)$ as the evaluation-form polynomials defined by the columns of $G_i$ over the evaluation domain,
                where $\idxWit[i,j]$ is the index of the $j$-th advice polynomial of commitment $i$.
                \item Compute the advice polynomials for the result values in \idxRho using the witness values. Set $\polyAdvice_{\idxRho}(\omega^{n_\text{horner}}, \Cbeta, \Calpha) = 0$, and for all $X \in \{ \omega^0,\ldots,\omega^{n_\text{horner}-1} \}$, set
                \begin{equation*}
                    \polyAdvice_{\idxRho}(X, \Cbeta, \Calpha) = \polyAdvice_{\idxMu}(X) + \displaystyle\sum_{j = 1}^{m} \left(\sum_{i = 1}^{\ell} (\polyAdvice_{\idxWit[i,j]}(X) \cdot (\Calpha)^{i}) \cdot (\Cbeta)^{j-1} \right)
                    + \polyAdvice_{\idxRho}(X\cdot\omega, \Cbeta, \Calpha) \cdot (\Cbeta)^{m})
                \end{equation*}
                For zero-knowledge, this polynomial is blinded by an extra blinding value $\psi \sample \sfield$, in the same fashion as Halo2's blinding,
                by appending $\psi$ to the witness polynomial in Lagrange basis form in an unused point in the evaluation domain, before it is padded to a power of two.
                \item Compute the advice polynomial $\polyAdvice_{\idxMu}$ for the masking value $\sMask$, such that:
                \begin{equation*}
                    \polyAdvice_{\idxMu}(X) = 
                    \begin{cases}
                        \sMask & \text{ if } X = \omega^0 \\
                        0 & \text{ if } X \in D \setminus \{ \omega^0 \}.
                    \end{cases} \\
                \end{equation*}
                \item Set $\sWitness^\prime \coloneqq \sWitness \,||\, \polyAdvice_{n_a+1}(X, \Calpha, \Cbeta),\ldots,\polyAdvice_{n_a^\prime}(X, \Calpha, \Cbeta)$ and return $\sWitness^\prime$.
            \end{enumerate}

        \end{protocolbox}
        \caption{Transformations for a Halo2 indexed relation to add a Horner's method evaluation.}
        \label{fig:arithmetization:artemis}

    \end{figure*}

\end{myhideenv}

\end{appendices}

\end{document}